\newtheorem{thm}{Theorem}
\newtheorem{defi}[thm]{Definiton}
\newtheorem{lem}[thm]{Lemma}
\newtheorem{prop}[thm]{Proposition}
\newtheorem{cor}[thm]{Corollary}
\newtheorem{rem}[thm]{Remark}
\newenvironment{proof}[1][{}]{\par\vskip12ptplus30pt\noindent{\it Proof{#1}:\ }}%
{\hfill$\blacksquare$\par\vskip12ptplus60pt}
\def\1#1{{\bf #1}}
\def\2#1{{\mathcal #1}}
\def\4#1{{\tt #1}}
\def\5#1{{\sf #1}}
\def\6#1{{\mathfrak #1}}
\def\7#1{{\Bbb #1}}
\def\8#1{{\rm #1}}
\def\9#1{{\mathcurl #1}}
\def\bS#1{{\boldsymbol #1}}
\DeclareFontFamily{OT1}{rsfs}{}
\DeclareFontShape{OT1}{rsfs}{m}{n}{<-7> rsfs5 <7-10> rsfs7 <10-> rsfs10}{}
\DeclareMathAlphabet\mathcurl{OT1}{rsfs}{m}{n}
\definecolor{grey}{rgb}{0.5,0.5,0.5}
\def\I{\openone}
\def\tr{{\rm tr}}
\newcommand{\CAR}[2]{\9F(#1,#2)}
\newcommand{\AFOCK}[1]{F_-(#1)}
\newcommand{\SPROD}[2]{\langle{#1},{#2}\rangle}
\newcommand{\RIGG}[3]{{\langle}{#1},{#2}{\rangle}_{#3}}
\newcommand{\GR}[2]{\Lambda(#1,#2)}
\newcommand{\GAR}[3]{\9G(#1,#3;#2)}
\newcommand{\IDL}[3]{\9I(#1,#3;#2)}
\newcommand{\PGAR}[3]{\mathring{\9G}(#1,#3;#2)}
\newcommand{\GCO}[2]{[#1,#2]_{\rm g}}
\newcommand{\CO}[2]{[#1,#2]}
\newcommand{\ACO}[2]{\{#1,#2\}}
\newcommand{\PHSP}[3]{\9R(#1,#3;#2)}
\newcommand{\GHOL}[4]{\9O(#1,#3,#2|#4)}
\newcommand{\RHOL}[3]{\9O(#1,#3,#2)}
\newcommand{\RHOM}[4]{{\rm Hom}(#1,#3,#2|#4)}
\newcommand{\REND}[3]{{\rm End}(#1,#3,#2)}
\newcommand{\ATEN}[2]{\mathring{\Lambda}(#1,#2)}
\newcommand{\HNO}[1]{\|\hspace{-1pt}| #1 \|\hspace{-1pt}|}
\newcommand{\FOUR}{{\9F}}
\newcommand{\CON}{{\bS\ast}}
\newcommand{\ADJ}{\star}
\newcommand{\RGHOM}[3]{{\rm Hom}(#1,#3,#2)}
\newcommand{\PCN}[1]{\|#1\|_\gamma}
\newcommand{\ENV}[3]{\9E(#1,#3;#2)}
\newcommand{\EQ}[2]{\left[#1\right]_{#2}}
\tikzstyle{gbstyle}= [draw=gray!80, fill=gray!80, thick, minimum height=.4cm, text centered,%
\tikzstyle{ghstyle}= [draw=gray!20, fill=gray!20, thick, minimum height=.4cm, text centered]
\tikzstyle{pcstyle}= [thick, shape=circle, draw=black, minimum size = 1.5cm, text centered]
\tikzstyle{tbox} = [draw, fill=white, text width=2cm, text centered, minimum height=2cm]
\begin{document}
\title{The algebra of Grassmann canonical anti-commutation relations (GAR) and its applications to
fermionic systems}

\author{Michael Keyl}
\email{m.keyl@tu-bs.de}

\affiliation{ISI Foundation, Quantum Information Theory Unit,\\ Viale S. Severo 65, 10133 Torino,
Italy}

\affiliation{Institut f\"ur
Mathematische Physik, Technische Universit\"at Braunschweig,
Mendelssohnstra{\ss}e~3, 38106 Braunschweig, Germany}

\author{Dirk-M. Schlingemann}
\email{d.schlingemann@tu-bs.de}

\affiliation{ISI Foundation, Quantum Information Theory Unit,\\
Viale S. Severo 65, 10133 Torino, Italy}

\affiliation{Institut f\"ur Mathematische Physik, Technische
Universit\"at Braunschweig, Mendelssohnstra{\ss}e~3, 38106 Braunschweig, Germany}

\pacs{03.67.-a, 02.30.Tb}

\date{\today}

\begin{abstract}
We present an approach to a non-commutative-like phase space which allows to analyze quasi-free
states on the CAR algebra in analogy to quasi-free states on the CCR algebra. The used mathematical
tools are based on a new algebraic structure the ``Grassmann algebra of canonical anti-commutation
relations'' (GAR algebra) which
is given by the twisted tensor product of a Grassmann and a CAR algebra. As a new application, the
corresponding theory provides an elegant tool for calculating the fidelity of two quasi-free
fermionic states which is needed for the study of entanglement distillation within fermionic
systems.
\end{abstract}
\maketitle
\pagestyle{fancy}
\section{Introduction}
\label{sec:intro}
Using anticommuting Grassmann variables for calculating physical quantities for fermionic systems is
a well established technique. This concerns, in particular, the calculation of
expectation values of quasifree (Gaussian) fermion states. The idea is to
replace the linear combinations of canonically anticommuting fermi field operators with complex
coefficients by linear
combinations with coefficients that are anticommuting Grassmann numbers. As a consequence, these
linear combinations fulfill ``canonical commutation relations''. By interpreting tuples of
Grassmann numbers as ``phase space vectors'', a similar analysis can be carried out as it is known
for the bosonic case. In their article \cite{CahGlau99}, Cahill and Glauber used this technique to
analyze density operators for
fermionic states. These calculations are presented on a symbolic level by starting from a set of
computational rules with less focus on the underlying mathematical structure. The aim of this
paper is to close this gap and to provide the appropriate mathematical framework which
imports the Grassmann calculus into the description of fermion systems. We set up here a formalism
which can be viewed as the fermionic analog of a ``quantum
harmonic analysis on phase space'' (see \cite{WeQMPS}) and we provide here a collection of basic
propositions and theorems which help to simplify calculations. In addition to that, our approach
allows, up to certain extend, to consider also infinite
dimensional systems.

A direct way of explaining the GAR algebra for the finite dimensional case is given in terms of
standard creation and annihilation operators of fermionic modes: We consider $2n+k$ fermionic modes
and take all the creation operators $c_1^*,c_2^*,\cdots,c_{2n}^*,c_{2n+1}^*,\cdots , c_{2n+k}^*$ and
the last $k$ annihilation operators $c_{2n+1},\cdots c_{2n+k}$. Then we build the
subalgebra generated by these operators, being represented on the antisymmetic Fock space over
$\7C^{2n+k}$. Obviously, this algebra is not closed under the usual adjoint since we only take the
creation operators for the first $2n$ modes. In fact,
the first $2n$ annihilation operators generate the exterior algebra or {\em Grassmann algebra} over
$\7C^{2n}$. The last $k$ modes are identified with the usual Fermion algebra. In
order to implement a {\em complex conjugation} for Grassmann variables, we introduce a ``new''
adjoint $\ADJ$ on the GAR algebra. For a generator $c_i$ that belongs to the first $2n$ modes, it
is defined by $c_i^\ADJ:=c_{2n-i+1}$. For a generator that belongs to the last $k$ modes we just
take $c_j^\ADJ:=c_j^*$ the usual adjoint. It follows directly from this construction that the
defining
representation on antisymmetric Fock space is not a *-representation, i.e. it does not preserve the
adjoint. 

The GAR algebra consits of a {\em fermionic part}, that is generated by fermi field operators that
are linear combinations of creation and annihilation operators 
\begin{equation}
\label{eq:self-dual-fermi-fields}
B(f)=\sum_{i=2n+1}^{2n+k} f_+^i c_i^*+ f_-^i c_i 
\end{equation}
that fulfill the anti-commutation relations
\begin{equation}
\ACO{B(f)}{B(h)}=\sum_{i=2n+1}^{2n+k} (f^i_+h^i_-+ f^i_-h^i_+)\I \; .
\end{equation}
The fermionic part of the GAR algebra always corresponds to the underlying fermion system one
wishes to investigate. 

The {\em Grassmann part} of the GAR algebra is generated by the first $2n$ modes
$c_1^*,\cdots,c_{2n}^*$. Obviously, the GAR algebra possesses a natural $\7Z_2$-grading by looking
at the subspaces of {\em even} and {\em odd} operators. Here, we call a GAR operator to be even
(odd)
if it is a complex linear combination of even (odd) products of operators
$c_1^*,c_2^*,\cdots,c_{2n}^*,c_{2n+1}^*,\cdots , c_{2n+k}^*, c_{2n+1},\cdots c_{2n+k}$.

One of the main ideas behind introducing the GAR algebra is to build up an appropriate extension of
the fermion algebra in which the anti-commutation relations can be written in terms of
commutation relations by ``substituting'' the complex linear combinations of creation and
annihilation operators by linear combinations with ``anti-commuting'' variabes as coefficients.
These anti commuting variables, called {\em
Grassmann variables}, are linear
combinations of odd products of the creation operators $c_1^*,c_2^*,\cdots,c_{2n}^*$.
Suppose that $\xi=(\xi^1_+,\cdots\xi^k_+,\xi^1_-,\cdots\xi^k_-)$ is a vector of $2k$ odd
operators from the Grassmann part.
Then the linear combination 
\begin{equation}
\Phi(\xi)=\sum_{i=1}^{k} \xi_+^i c_{i+2n}^*+ \xi_-^i c_{i+2n} 
\end{equation}
is a well-defined operator inside the GAR algebra. These operators are the {\em Grassmann-Bose
fields}. A straight forward calculation shows that the commutation relation
\begin{equation}
\CO{\Phi(\xi)}{\Phi(\eta)}=-\sum_{i=1}^{k} \xi^i_+\eta^i_--\eta^i_+\xi^i_-=\sigma(\xi,\eta)
\end{equation}
are fulfilled. The bilinear form $\sigma$ can be interpreted a symplectic form with values in the
Grassmann algebra. Therefore, the Grassmann-Bose field operators $\Phi(\xi)$ fulfill a
kind of ``canonical commutation relations''. Here, one has to be aware of the fact
that the right hand side is not a complex multiple of the identity but an operator that belongs to
the
center of the GAR algebra. The {\em Grassmann-Weyl operators} are given by the
exponential
$\1w(\xi)=\exp(\Phi(\xi))$ and they fulfill the Weyl relations 
\begin{equation}
\1w(\xi+\eta)=\8e^{\frac{1}{2}\sigma(\xi,\eta)}\1w(\xi)\1w(\eta) \; .
\end{equation}

Our fermionic analog of a ``quantum harmonic analysis on phase space'' (see \cite{WeQMPS}) is based
on these structures.

Our paper is outlined into tree main sections, where the first section is for presenting basic
definitions and main results as precise as possible one one hand, and as less technical as possible
on the other hand. The two following sections are more technical in order to explain the mathematics
we are using. In Section~\ref{sec:basic}, we introduce all mathematical
concepts that are needed for our analysis. To make this part more readable, we postpone here the
discussion of technical details. We also present here the main results concerning an harmonic
analysis on fermionic phase space and we present its application to calculate the fidelity
between quasifree fermion states. Section~\ref{sec:math_struc} is dedicated to a detailed
discussion on the mathematical structure of the GAR algebra. Statements which have been claimed in
the previous section are proven here. Finally, Section~\ref{sec:grasscalc}, we review the concept
of Grassmann integration adopted to our analysis.

\section{Basic definitions and main results}
\label{sec:basic}

\subsection{Preliminaries}
\label{subsec:prelim}
There are two basic notions that will play a essential role within the paper. Firstly, the concept
of a Banach *-algebra and the concept of a C*-algebra. We briefly recall these concepts here in
order to give a precise formulation of the mathematical structure we are going to use.
\begin{itemize}
\item 
A {\em Banach algebra} is a associative algebra and a Banach space with a norm $\|\cdot\|$, such
that the relation $\|AB\|\leq\|A\|\|B\|$ is fulfilled. 
\item
{\em An adjoint} is a continuous anti-linear involution $^\ADJ$ which fulfills
$(cA)^\ADJ=\bar c A^\ADJ$ for an operator $A$ and a complex number $c$ and the order in a product is
reversed $(AB)^\ADJ=B^\ADJ A^\ADJ$. 
\item 
A {\em Banach *-algebra} is a Banach algebra with an adjoint.
\item
A C*-algebra is a Banach *-algebra, where the adjoint $^*$ fulfills the C*-condition
$\|A^*A\|=\|A\|^2$. Vice versa, an adjoint $^*$ on a Banach algebra is called a {\em C*-adjoint} if
the C*-condition holds. 
\end{itemize}

In view of treating superspaces and supersymmetry Banach *-algebras has been used for instance in  
\cite{JadPil81}.

Later on it will be an important issue to distinguish the adjoint of a
C*-algebra with the adjoint of a generic *-algebra. The convention we will use here is to denote the
C*-adjoint of an operator $A$ by $A^*$
and the adjoint in a generic *-algebra by $A^\ADJ$.

In order to introduce the GAR algebra in a most general manner, we briefly recall here
{\em Araki's selfdual CAR algebra} \cite{MR0295702}. Let $H$ be a separable Hilbert space
with a complex conjugation $J$. Note that $H$ may be infinite dimensional. Then there exists a
unique C*-algebra $\CAR{H}{J}$, the so called {\em CAR algebra}, that is
generated by operators $B(f)$ with $f\in H$ such that $f\mapsto B(f)$ is a complex linear map, the
C*-adjoint of $B(f)$ is given by $B(f)^*=B(Jf)$ and the anti-commutator fulfills the relation
$\ACO{B(f)}{B(h)}=\SPROD{Jf}{h}\I$. 

\begin{rem}\em
The main advantage of using Araki's selfdual description of the CAR algebra  is, that it is
independent of the chosen representation. To obtain a representation in terms of creation and
annihilation operators, Equation~(\ref{eq:self-dual-fermi-fields}) is an example for the finite
dimensional Hilbert space $H=\7C^{2k}$, where the complex conjugation is given by
$J(f_-^1,\cdots,f_-^k,f_+^1,\cdots,f_+^k)=(\bar f_+^1,\cdots,\bar f_+^k,\bar
f_-^1,\cdots,\bar f_-^k)$.
\end{rem}

\subsection{The GAR algebra}
\label{subsec:GAR}
The basic ingredients for construction the GAR algebra are a separable Hilbert space $H$, a complex
conjugation $J$ and a projection $Q$ that commutes with $J$. We consider the Hilbert space
$H_Q:=H\oplus Q^\perp H$ with the complex conjugation $J_Q:= R_Q(J\oplus Q^\perp J)$, where
$Q^\perp=\I-Q$ is the projection onto the orthogonal complement of $QH$ and the reflection $R_Q$ is
defined according to $R_Q(f\oplus h)=(Qf+h)\oplus Q^\perp f$ for $f\in H$, $h\in
Q^\perp H$. We associate to the triple $(H,Q,J)$ the CAR algebra
$\ENV{H}{J}{Q}:=\CAR{H_Q}{J_Q}$ in view of the following definition:

\begin{defi}\em
\begin{enumerate}
\item
The {\em GAR algebra} $\GAR{H}{J}{Q}$ associated with the triple $(H,Q,J)$ is the norm-closed
subalgebra of the CAR algebra $\ENV{H}{J}{Q}$ that is generated by the {\em Grassmann-Fermi field
operators} $G(f):=B(f\oplus 0)$, $f\in H$.

\item
The {\em open core} $\PGAR{H}{J}{Q}$ of the GAR algebra is the subalgebra that consists of finite
sums of finite products of Grassmann-Fermi field operators.

\item
The CAR algebra $\ENV{H}{J}{Q}$ is called the {\em enveloping CAR algebra} of $\GAR{H}{J}{Q}$. 

\item
The {\em fermionic part} of the GAR algebra is the norm-closed subalgebra of $\ENV{H}{J}{Q}$ that
is generated by the Grassmann-Fermi fields $G(Qf)$, $f\in QH$. 

\item
The {\em Grassmann part} of the GAR algebra is the norm-closed subalgebra of $\ENV{H}{J}{Q}$ that 
is generated by the Grassmann-Fermi fields $G(Q^\perp f)$, $f\in Q^\perp H$. 
\end{enumerate}
\end{defi}

As a consequence of this definition, the GAR algebra $\GAR{H}{J}{Q}$ is a Banach algebra and the
Grassmann-Fermi field operators fulfill
the anti-commutation relations
\begin{equation}
\ACO{G(f)}{G(h)}=\SPROD{Jf}{Qh}\I \\
\end{equation}
for $f,h\in H$. This can be verified by calculating the anti-commutator
$\ACO{G(f)}{G(h)}=\ACO{B(f\oplus 0)}{B(h\oplus
0)}=\SPROD{J_Q(f\oplus 0)}{h\oplus 0}\I=\SPROD{R_Q(Jf\oplus 0)}{h\oplus
0}\I=\SPROD{JQf}{h}\I=\SPROD{Jf}{Qh}\I$. From this calculation, it also follows that 
the fermionic part of $\GAR{H}{J}{Q}$ coincides with the CAR subalgebra
$\CAR{QH}{QJ}\subset\ENV{H}{J}{Q}$.

As already mentioned, inside the GAR algebra we can build linear combinations of fermion operators
with coefficients in the Grassmann algebra which yields the possibility of building fields
with ``canonical commutation relations'' inside the GAR algebra. For this purpose, it is also
important to have the concept of an adjoint. The problem is here, that the C*-adjoint in
the enveloping CAR algebra can not be used, since the GAR algebra is not not closed unter this
operation. Namely, for a generator $G(Q^\perp f)$ the C*-adjoint is given by $G(Q^\perp
f)^*=B(Q^\perp f\oplus 0)^*=B(0\oplus JQ^\perp f)\notin
\GAR{H}{J}{Q}$. Only the fermionic part is stable under the C*-adjoint. We shall see
(Proposition~\ref{prop:adj}) that there exists an adjoint
$^\ADJ\mathpunct:\GAR{H}{J}{Q}\to\GAR{H}{J}{Q}$ such that the GAR algebra
becomes a Banach *-algebra and that the adjoint $^\ADJ$ coincides with the C*-adjoint on the
fermionic part. Moreover, the adjoint $^\ADJ$ is uniquely determined by the relation
$G(f)^\ADJ=G(Jf)$ for $f\in H$.

\begin{rem}\em
The Grassmann algebra can be regarded as a special case of the GAR algebra, where the projection
$Q$ is chosen to be zero. To be more precise, the {\em Grassmann algebra} $\GR{H}{J}$ over the pair
$H,J$ is defined as the GAR-algebra $\GR{H}{J}:=\GAR{H}{J}{0}$. On the other hand, the Grassmann
algebra can be constructed from the anti-symmetric tensor algebra
$\ATEN{H}{J}:=\bigoplus_{k\in\7N}\wedge^k H$ over the Hilbert space $H$. As a linear space, the
anti-symmetric tensor algebra $\ATEN{H}{J}$ is a dense subspace of the
anti-symmetric Fock space $\AFOCK{H}$ over $H$. Thus, we can equip $\ATEN{H}{J}$ with a scalar
product $\SPROD{}{}$. This gives rise to a further norm on $\ATEN{H}{J}$ that is given by 
$\HNO{\lambda}:=\sqrt{\SPROD{\lambda}{\lambda}}$. As we will see later, this norm is
{\em not} a Banach algebra norm, but it is continuous with respect to the Banach algebra norm
$\|\cdot\|$, i.e. $\HNO{A}\leq \|A\|$.
\end{rem}

\begin{rem}\em
By construction, the GAR algebra is isomorphic to the twisted (graded) tensor product (see
\cite{ChoBruDeWitt82,ChoBruDeWitt89} for this notion) of the fermionic part and the Grassmann part.
Following the analysis of \cite{JadPil83}, the Grassmann algebra can be regarded as the classical
limit of a field of CAR algebras. Analogously, the GAR algebra can be viewed as a {\em partial}
classical limit of a field of CAR algebras. The basic idea behind their work is to introduce for a
Hilbert space $H$, a complex conjugation $J$ and a positive number $\hbar>0$ the modified CAR
algebra $\CAR{H_\hbar}{J}$ where $H_\hbar$ is the Hilbert space with the scaled scalar product
$\SPROD{f}{h}_\hbar=\hbar\cdot\SPROD{f}{h}$. Roughly, in the classical limit of the field of CAR
algebras $(\CAR{H_\hbar}{J},\hbar>0)$ becomes the Grassmann algebra $\GR{H}{J}$ which is based on
the behavior anticommuator relations $\lim_{\hbar\to 0}\ACO{B_\hbar (f)}{B_\hbar
(h)}=\lim_{\hbar\to 0}\hbar\SPROD{f}{h}\I =0$. 

To view the GAR algebra as a partial
classical limit we consider the Hilbert space $H_{\hbar,Q}$ with the partially scaled scalar product
$\SPROD{f}{h}_\hbar:=\SPROD{f}{(Q+\hbar Q^\perp)h}$. The CAR algebra $\CAR{H_{\hbar,Q}}{J}$ is now
isomorphic to the twisted tensor product of $\CAR{QH}{QJ}$ and $\CAR{(Q^\perp H)_\hbar}{Q^\perp J}$.
Keeping in mind that the classical limit of the field of CAR algebras $(\CAR{(Q^\perp
H)_\hbar}{Q^\perp J}, \hbar>0)$ is the Grassmann algebra $\GR{Q^\perp H}{Q^\perp J}$, the GAR
algebra is the partial classical limit of the CAR algebra $\CAR{H_{\hbar,
Q}}{J}\to\GAR{H}{J}{Q}$.
\end{rem}

\begin{rem}\em
For the case that the projection $Q$ is a projection of even and finite dimension $2n$, there is a
further simple characterization of the GAR algebra. Namely, the GAR algebra can also be seen as a
matrix algebra with Grassmann valued entries. To verify this, we use the fact (as in the previous
remark) that the GAR algebra $\GAR{H}{J}{Q}$ is the twisted tensor product of the CAR algebra
$\CAR{QH}{QJ}$ and the Grassmann algebra $\GR{Q^\perp H}{Q^\perp J}$. Recall that the isomorphism
is given by $G(f)\mapsto G(Qf)\otimes \I+\Theta\otimes G(Q^\perp f)$, where $\Theta$ is the
reflection fulfilling $\Theta G(Qf)=-G(Qf)\Theta$. Moreover, the fermionic part
is isomorphic to the algebra $\8M_{2^{2n}}(\7C)$ of complex $2^{2n}\times 2^{2n}$ matrices. By
choosing a matrix basis $E_{ij}$, $i,j=1,\cdots 2^{2n}$, each operator in the GAR algebra can
uniquely be expanded as $A=\sum_{ij} E_{ij} \ A_{ij}$ where the operators $A_{ij}$ belong to the
Grassmann part. Thus the desired isomorphism identifies the operator $A$ with the matrix $(A_{ij})$
belonging to the algebra $\8M_{2^{2n}}(\GR{Q^\perp H}{Q^\perp J})$ of $2^{2n}\times 2^{2n}$
matrices with entries in the Grassmann part.
\end{rem}

\subsection{States}
\label{subsec:positivity}
The GAR algebra possesses a natural convex cone of positive elements. Firstly, the set of
positive linear functionals consits of all linear functionals
$\omega\mathpunct:\GAR{H}{J}{Q}\to\7C$ with $\omega(A^\ADJ A)\geq 0$. Secondly, the positive cone
$\GAR{H}{J}{Q}_+$ consits of all operators that have positive expectation values for all positive
functionals.

In order to analyze the positivity of operators, we introduce the norm closed two-sided ideal
$\IDL{H}{J}{Q}$ that is genrated by the {\em selfadjoint nilpotent operators} in the
Grassmann part $\GR{Q^\perp H}{Q^\perp J}$. Recall that an operator $Z$ is nilpotent if there exists
$n\in\7N$ with $Z^n=0$. It can be shown (see Proposition~\ref{prop:positivity}) that to each
positive functional $\omega$ on the GAR algebra $\GAR{H}{J}{Q}$ there exists a unique
positive functional $\omega'$ on the fermionic part $\CAR{QH}{QJ}$ such that 
\begin{equation}
\omega(A+Z)=\omega'(A)
\end{equation}
where $A$ is an operator in the
fermionic part $\CAR{QH}{QJ}$ and $Z$ belongs to the ideal $\IDL{H}{J}{Q}$. We refere the reader to
Subsection~\ref{subsec:ideals,positivity} for a more detailed
discussion. This shows that the positive functional on the GAR algebra are in one to one
correspondence with
the positive functionals on the fermionic part. 

Instead of considering complex valued 
functionals, the appropriate concept, as it turns out later,
to consider functionals from the GAR algebra into its Grassmann part. The GAR
algebra is equipped with a natural right module structure over the Grassmann part via multiplication
from the right. For our purpose, the appropriate method is to extend a state on the fermionic part
as a right module homomorphism. For a linear functional $\omega$ on the CAR algebra $\CAR{QH}{QJ}$,
we are seeking for a linear map $\bS\omega\mathpunct:\GAR{H}{J}{Q}\to\GR{Q^\perp H}{Q^\perp J}$
which fulfills the condition
\begin{equation}
\label{eq:ext}
\bS\omega(A\lambda)=\omega(A)\lambda \; .
\end{equation}
We call $\bS\omega$ the {\em G-extension} of $\omega$ to the GAR algebra. To obtain the
G-extension of a state on the fermionic part, we use the fact that the eveloping CAR algebra
$\ENV{H}{J}{Q}$ can be identified with the twisted (graded) tensor product of the fermionic part and
the enveloping CAR algebra of the Grassmann part:
\begin{equation}
\ENV{H}{J}{Q}=\CAR{QH}{QJ}\tilde\otimes\9E
\end{equation}
where $\9E$ denotes the enveloping CAR algebra of the Grassman part. For a vector
$f\oplus h\in H\oplus Q^\perp H$ the corresponding fermi field operator is identified with the
tensor product by
\begin{equation}
B(f\oplus h )=B(Qf)\otimes\I +\Theta_Q\otimes B(Q^\perp f\oplus h)
\end{equation}
where $\Theta_Q$ is the relection that implements the parity automorphism $\Theta_Q
B(Qf)\Theta_Q=-B(Qf)$. As a Banach space , the enveloping CAR algebra is identified with a tensor
product of two C*-algebras. With respect to the positivity structure of this tensor product, the
linear map $\omega\otimes \8{id}_\9E$ is completely positive. Therefore, it is bounded as a map
between Banach spaces. The G-extension is now given by the restriction to the GAR algebra
\begin{equation}
\bS\omega:=\omega\otimes\8{id}_\9E|_{\GAR{H}{J}{Q}}
\end{equation}
which becomes a bounded map from the GAR algebra into its Grassmann part. By construction, the 
condition (\ref{eq:ext}) is fulfilled.

\begin{rem}\em
For our further analysis, the essential property of the G-extension is to be a right module
homomorphism (\ref{eq:ext}), whereas positivity with respect to the adjoint of the GAR is not
essential. Anyway, form the above construction, we cannot conclude directly that the G-extension is
positive since the GAR algebra has a different positivity structure than the tensor product
$\CAR{QH}{QJ}\otimes\9E$. 
\end{rem}

\subsection{Anticommutative phase space}
\label{subsec:antisym-phase-space}
We have introduced the GAR algebra in terms of the Grassmann-Fermi field operators $G(f)$. In this
section we introduce a different family of field operators, called Grassmann-Bose fields, that also
generate, together with the unit operator, the GAR algebra. It turns out that these fields 
fulfill a graded version of the canonical commutation relations. 

We introduce the {\em anticommutative phase space} as the tensor
product $\PHSP{H}{J}{Q}:=QH\otimes\GR{Q^\perp H}{Q^\perp J}$. We are considering here a tensor
product of a Hilbert space and a Grassmann algebra closed with respect to
the projective cross norm $\PCN{\cdot}$ (see for instance \cite{Tak02}). Since the projective cross
norm is the largest among all
cross norms, it follows that the anticommutative phase space $\PHSP{H}{J}{Q}$
can be identified with a linear subspace of
the Grassmann algebra $\GR{H}{J}$ by the continuous embedding which identifies the tensor
product $f\otimes\lambda$ with the operator $\Lambda(f)\lambda$. In order to express the
(anti)commutation relations for the Grassmann-Bose fields, we equip the anticommutative phase space
$\PHSP{H}{J}{Q}$ with a continuous Grassmann valued inner product. This {\em rigging
map} $\RIGG{\cdot}{\cdot}{Q}$ is determined on pure tensor products by
$\RIGG{f\otimes\lambda}{h\otimes\mu}{Q}:=\SPROD{f}{h}\lambda^\star\mu$. 

\begin{rem}\em
If $Q$ is a projection of finite and even rank $2n$, then the anticommutative phase space is simply
isomorphic to the $2n$-fold cartesian product of the Grassmann algebra. This can be seen by
choosing a real orthonormal basis $(e^i)_{i=1,\cdots,2n}$ of $QH$. Each phase space vector $\xi$
can be uniquely expanded as $\xi=\sum_i e^i\otimes \xi_i$, where $\xi_i$ is an operator from the
Grassmann part. Thus an isomorphism between $\PHSP{H}{J}{Q}$ and $\GR{Q^\perp H}{Q^\perp J}^{2n}$
is given by $\xi\mapsto(\xi_1,\cdots,\xi_{2n})$.  
\end{rem}

The Grassmann-Bose field $\Phi$ is a right module homomorphism that associates to each
phase space vector $\xi$ an operator $\Phi(\xi)$ in the GAR algebra. This map is determined on pure
tensor products $\xi=f\otimes\lambda$ according to 
\begin{equation}
\Phi(f\otimes\lambda):=G(f)\lambda
\end{equation}
where $f\in QH$ and $\lambda$ belonging to the Grassmann part. Note that the inequality
$\|\Phi(\xi)\|\leq \PCN{\xi}$ holds which implies that the map $\Phi$ is continuous and can uniquely
be extended to the full anticommutative phase space. 

The $\7Z_2$-grading of the Grassmann part induces a direct sum decomposition of the
Banach space $\PHSP{H}{J}{Q}=\PHSP{H}{J}{Q}_0\oplus\PHSP{H}{J}{Q}_1$ with
$\PHSP{H}{J}{Q}_q:=QH\otimes\GR{Q^\perp H}{Q^\perp J}_{q+1}$. Moreover, we introduce a complex
conjugation according to $(f\otimes\lambda)^\ADJ:=(-1)^{q+1} Jf\otimes\lambda^\ADJ$ with
$\lambda\in\GR{Q^\perp H}{Q^\perp J}_q$. The grading and the complex conjugation are compatible
with the identification of $\PHSP{H}{J}{Q}$ as a closed linear subspace of the Grassmann algebra
$\GR{H}{J}$. Here the tensor product $f\otimes\lambda$ is just identified with the operator
$\Lambda(f)\lambda$. The complex conjugation and the grading in $\PHSP{H}{J}{Q}$ are nothing else
but the adjoint and the grading within the ambient Grassmann algebra. The GAR relations can be
expressed in terms of graded commutators. Recall that the graded commutator of $\GCO{A}{B}$ is
given by the commutator $\CO{A}{B}$ if $A$ or $B$ are even, and by the anticommutator $\ACO{A}{B}$
if both $A$ and $B$ are odd. 

It follows from the construction of the Grassmann-Bose fields that for a pair of phase space vectors
$\xi,\eta\in\PHSP{H}{J}{Q}$, the graded commutator fulfills
\begin{equation}
\GCO{\Phi(\xi)}{\Phi(\eta)}=\RIGG{\xi^\ADJ}{\eta}{Q}  \; .
\end{equation}
Moreover, the adjoint fulfills the identity $\Phi(\xi)^\ADJ=\Phi(\xi^\ADJ)$. For the
particular case, that $\xi,\eta$ are even elements in $\PHSP{H}{J}{Q}$ the Grassmann-Bose
field fulfill the canonical commutation relations:
\begin{equation}
\CO{\Phi(\xi)}{\Phi(\eta)}=\RIGG{\xi^\ADJ}{\eta}{Q} \; .
\end{equation}
Note that the restriction of the rigging map to the even subspace is antisymmetric, i.e.
$\RIGG{\xi^\ADJ}{\eta}{Q}=-\RIGG{\eta^\ADJ}{\xi}{Q}$ for
$\xi,\eta$ even. As for the usual formulation of the canonical commutation relation, the commutator
belongs to the center of the GAR algebra. The main difference is here, that the Grassmann-Bose
field operators are bounded in norm. This is no contradiction, since we are dealing here with
Banach *-algebras (rather than C*-algebras).

For an even $\xi\in\PHSP{H}{J}{Q}_0$ the exponential $\1w(\xi):=\exp(\Phi(\xi))$ of the
Grassmann-Bose field operator is well defined. We call $\1w(\xi)$ the \textit{Grassmann-Weyl
operator} for $\xi$. Since field operators $\Phi(\xi),\Phi(\eta)$ are even, the Grassmann-Weyl
operators fulfill the relations
\begin{equation}
\1w(\xi+\eta)=\8e^{\frac{1}{2}\RIGG{\xi^\ADJ}{\eta}{Q}}\1w(\xi)\1w(\eta) \; .
\end{equation}
As for ordinary Weyl operators, the map $\xi\mapsto \1w(\xi)$ is a projective representation of the
additive group $\PHSP{H}{J}{Q}_0$ where the factor system belongs to the center of the GAR algebra. 

Obviously the Grassmann-Weyl operator $\1w(\xi)$ is unitary only if $\xi=\xi^\ADJ$ is selfadjoint.
If we restrict the Grassmann-Weyl system to selfadjoint phase space vectors, then the value of the
rigging map $\RIGG{\xi^\ADJ}{\eta}{Q}=\RIGG{\xi}{\eta}{Q}=-\RIGG{\xi}{\eta}{Q}^\ADJ$ is
anti-selfadjoint and the exponential in the Grassmann-Weyl relation is also unitary.
Note that $\PHSP{H}{J}{Q}_0$ should be seen as a ``complexified'' anticommutative phase space and
the Grassmann-Weyl operators are directly constructed as a kind of analytic continuation from the
real part.

\subsection{Towards a harmonic analysis on anticommutative phase space}
\label{subsec:harmonic-analysis}
The concept of anticommutative phase space can be used to perform a kind of ``harmonic
analysis'' that is analogous to the analysis of the bosonic
case \cite{WeQMPS}. As for the case of ordinary
symplectic vector spaces \cite{WeQMPS}, we introduce here the analogous concept of {\em
convolution} and {\em Fourier transform}. This requires to ``integrate'' over
antisymmetric phase space. Here the Brezin-Grassmann integration (see for instance
\cite{VladVol84,VladVol84b,ChoBruDeWitt89}) turns out to be the appropriate notion which we recall
here. Before we continue our discussion, we mention the following:

In order to perform integration with respect to Grassmann variables we
have to consider the algebra of functions that can be integrated. These functions are appropriate
polynomials of Grassmann variables $\xi\in\PHSP{H}{J}{Q}_0$ with values in a right module over the
ring $\GR{Q^\perp H}{Q^\perp J}$. If the underlying ring structure is clear from the context we just
briefly say ``right module''. We assume here, that the projection $Q$ has finite even rank
$\8{dim}(Q)=2n$, which corresponds to an integration over a finite dimensional space.

We need to integrate functions with values in a right module $\9E$ that admit a {\em polynomial
representation} in terms of Grassmann variables. However, this
representation has some ambiguities which causes some problems in defining the Grassmann integral.
A polynomial representation is obtained from a real orthonormal basis $(e^i)_{i\in N}$ of $QH$
that is indexed by the ordered set $N=\{1,\cdots,\8{dim}(Q)\}$. Any vector $\xi\in\PHSP{H}{J}{Q}_0$
can be expanded in this basis as $\xi=\sum_i\Lambda^i\xi_i$ with $\xi_i\in\GR{Q^\perp H}{Q^\perp
J}_1$. With respect to this basis, the polynomial representation of a G-holomorphic function $F$ is
given by 
\begin{equation}
F(\xi)=\sum_{I\subset N} F^I \ \xi_I \, .
\end{equation}
Here the coefficients $F^I$ are contained in the right module $\9E$. The monomial $\xi_I$ which is
associated to an
ordered subset $I=\{i_1<i_2<\cdots<i_k\}\subset N$ is given by $\xi_I:=\xi_{i_1}\cdots\xi_{i_k}$.

For a given polynomial representation, the Brezin-Grassmann integral of $F$ over a form of highest
degree ($2n$-form) $v=v_N e^1\wedge\cdots\wedge e^{2n}$ in $\GR{QH}{QJ}$ is defined according to
\begin{equation}
\int_Q v(\xi) \ F(\xi)=v_{N} F^{N} \ , .
\end{equation}
The problem with this definition is that, in general, the relation 
\begin{equation}
\label{eq:zero}
\sum_{I\subset N} F^I \ \xi_I=0
\end{equation}
is non-trivial in the sense that (\ref{eq:zero}) holds for non-zero coefficients $F^I$. This means
that the polynomial representation for a given basis is not unique and the integral may not be
well defined. However, for some right modules $\9E$ the highest coefficient $F^N$ is unique. This
is the case if $\9E$ fulfills the following condition: If $X\lambda_1\cdots\lambda_{2n}=0$ for
all odd Grassmann operators $\lambda_1,\cdots,\lambda_{2n}$, then $X=0$. This property holds for
$\9E=\GAR{H}{J}{Q}$ and $\9E=\GR{Q^\perp H}{Q^\perp J}$, if the complementary projection $Q^\perp$
is infinite dimensional. A more detailed analysis of this issue is postponed to
Subsection~\ref{subsec:G-hol}. In what follows, we assume that $Q$ is a projection of finite
dimesion $2n$ and its complement $Q^\perp$ is infinite dimensional. 

We introduce the Banach space $\RGHOM{H}{J}{Q}$ of bounded right module homomorphisms from
the GAR algebra into its Grassmann part. As already mentioned, the space $\PHSP{H}{J}{Q}_0$ can be
regarded as anticommutative phase space with a complex structure that is given by the adjoint $\ADJ$
where the rigging map $\RIGG{}{}{Q}$ induces a Grassmann valued symplectic form on
$\PHSP{H}{J}{Q}_0$. The phase space translations act by automorphisms on the GAR algebra by the
adjoint action of the Grassmann-Weyl operators 
\begin{equation}
\alpha_\xi(A)=\1w(-\xi)A\1w(\xi) \; .
\end{equation}
Note that $\alpha_\xi$ is a *-automorphism, if and only if, the
operator $\xi$ is selfadjoint, i.e. it belongs to the real part of $\PHSP{H}{J}{Q}_0$.

By fixing a selfadjoint and normalized form $v$ of highest degree with respect to $Q$, we are now
prepared to define the following convolutions:
\begin{enumerate}
\item 
The convolution of two G-holomorphic functions
is a G-holomorphic function that is given by
\begin{gather}
\RHOL{H}{J}{Q}\times\RHOL{H}{J}{Q}\ni(f_1,f_2)\to f_1\CON f_2\in\RHOL{H}{J}{Q}
\nonumber
\\
(f_1\CON f_2)(\xi):=\int_Q v(\eta) \ f_1(\eta)f_2(\xi-\eta) \; .
\end{gather}

\item 
The convolution of an operator in the GAR algebra with a G-holomorphic
function is the operator in the GAR algebra that is given by
\begin{gather}
\GAR{H}{J}{Q}\times\RHOL{H}{J}{Q}\ni(A,f)\to A\CON f\in\GAR{H}{J}{Q}
\nonumber
\\
A\CON f:=\int_Q v(\eta) \ \alpha_{-\eta}(A) \ f(\eta) \; .
\end{gather}

\item
The convolution of a right module homomorphism with an operator in the GAR algebra is a
G-holomorphic function
that is given by
\begin{gather}
\RGHOM{H}{J}{Q}\times\GAR{H}{J}{Q}\ni(\bS\varphi,A)\to\bS\varphi\CON A\in\RHOL{H}{J}{Q}
\nonumber
\\
(\bS\varphi\CON A)(\xi):=\bS\varphi(\alpha_\xi(A)) \; .
\end{gather}
\end{enumerate}

As for the convolution, we also have three different cases for applying the {\em Fourier transform}.
\begin{enumerate}
\item 
The Fourier transform maps an operator of the reduced GAR algebra to a G-holomorphic function
according to:
\begin{gather}
\FOUR:\GAR{H}{J}{Q}\to\RHOL{H}{J}{Q}
\nonumber
\\
(\FOUR A)(\xi):={\1w}(-\xi)\int_Q v(\eta) \ \alpha_{-\eta}(A)
\ \8e^{\RIGG{\xi^\ADJ}{\eta}{Q}} \; .
\end{gather}
Up to now, the function $\FOUR A$ can have values in the GAR algebra, but
we will show that the range of the function is indeed fully contained in the Grassmann part.

\item 
The Fourier transform maps a G-holomorphic function to a G-holomorphic function according to:
\begin{gather}
\FOUR:\RHOL{H}{J}{Q}\to\RHOL{H}{J}{Q}
\nonumber
\\
(\FOUR f)(\xi):=\int_Q v(\eta) \ f(\eta) \ \8e^{\RIGG{\xi^\ADJ}{\eta}{Q}} \; .
\end{gather}

\item
Finally, the Fourier transform of a right module homomorphism is the G-holomorphic function which is
given by its expectation values of the Grassmann-Weyl operators in the GAR algebra:
\begin{gather}
\FOUR:\RGHOM{H}{J}{Q}\to \RHOL{H}{J}{Q}
\nonumber
\\
(\FOUR\bS\varphi)(\xi)={\bS\varphi}({\1w}(\xi)) \; .
\end{gather}
\end{enumerate}

\subsection{Main Results}
\label{subsec:main-results}
A useful fact is that all operators of the GAR algebra $\GAR{H}{J}{Q}$ can be represented in terms
of Grassmann integrals as stated by the following theorem: 

\begin{thm}
\label{thm:integral-rep}
The Fourier transform $\FOUR$ maps each operator $A\in\GAR{H}{J}{Q}$ of the GAR
algebra to a G-holomorphic function in $\RHOL{H}{J}{Q}$ such that the identity
\begin{equation}
\label{eq:pre-decomp}
A =\int_Q v(\xi) \  {\1w}(\xi) \ (\FOUR A)(\xi)
\end{equation}
holds.
\end{thm}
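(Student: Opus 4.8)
The plan is to verify the identity (\ref{eq:pre-decomp}) first on the open core $\PGAR{H}{J}{Q}$, where everything reduces to a finite Grassmann computation, and then extend to the full GAR algebra by continuity of all maps involved. The essential point is to unfold the definition of $\FOUR A$ and carry out the inner Grassmann integral against the Grassmann-Weyl operators $\1w(\xi)$. Writing out
\begin{equation}
\int_Q v(\xi)\ \1w(\xi)\ (\FOUR A)(\xi)=\int_Q v(\xi)\ \1w(\xi)\ \1w(-\xi)\int_Q v(\eta)\ \alpha_{-\eta}(A)\ \8e^{\RIGG{\xi^\ADJ}{\eta}{Q}}\ ,
\end{equation}
and using that $\1w(\xi)\1w(-\xi)=\I$ (a special case of the Grassmann-Weyl relations, since $\RIGG{\xi^\ADJ}{-\xi}{Q}$ vanishes by antisymmetry of the rigging map on the even subspace), the right-hand side collapses to the double integral $\int_Q v(\xi)\int_Q v(\eta)\ \alpha_{-\eta}(A)\ \8e^{\RIGG{\xi^\ADJ}{\eta}{Q}}$. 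So the whole theorem rests on the Grassmann-analog of the Fourier inversion formula: integrating $\8e^{\RIGG{\xi^\ADJ}{\eta}{Q}}$ over $\xi$ against the normalized top form $v$ should produce a delta-like kernel in $\eta$ that, upon integration against $\alpha_{-\eta}(A)$, returns $A$.

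The key step, then, is to establish this Grassmann-Fourier inversion. I would work in a real orthonormal basis $(e^i)_{i=1,\dots,2n}$ of $QH$ so that $\xi=\sum_i \Lambda^i\xi_i$, $\eta=\sum_i\Lambda^i\eta_i$ with $\xi_i,\eta_i\in\GR{Q^\perp H}{Q^\perp J}_1$. In such coordinates $\RIGG{\xi^\ADJ}{\eta}{Q}$ becomes (up to signs depending on the complex conjugation convention on $\PHSP{H}{J}{Q}$) a bilinear expression $\sum_i \pm\,\xi_i^\star\eta_i$ in the anticommuting coordinates, and the exponential $\8e^{\RIGG{\xi^\ADJ}{\eta}{Q}}$ expands as a finite product $\prod_i(1\pm\xi_i^\star\eta_i)$ because each $\xi_i\eta_i$ squares to zero. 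Integrating this over $v(\xi)=v_N e^1\wedge\cdots\wedge e^{2n}$ — which by the definition of $\int_Q$ just extracts the coefficient of the top monomial $\xi_1\cdots\xi_{2n}$ — picks out exactly the term $\pm\,\xi_1\eta_1\cdots\xi_{2n}\eta_{2n}$ (reordered), leaving $\pm v_N\,\eta_1\cdots\eta_{2n}$ up to a sign that the normalization of $v$ is chosen to absorb. Thus $\int_Q v(\xi)\,\8e^{\RIGG{\xi^\ADJ}{\eta}{Q}}$ is (a sign times) the top monomial $\eta_N=\eta_1\cdots\eta_{2n}$, and integrating $\alpha_{-\eta}(A)$ against this reproduces $A$ once one checks that the constant term of $\alpha_{-\eta}(A)$ in the $\eta$-expansion is $A$ itself. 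The latter holds because $\alpha_{-\eta}=\1w(\eta)\,\cdot\,\1w(-\eta)$ reduces to the identity when all $\eta_i$ are set to zero.

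Two technical points have to be handled with care, and one of them I expect to be the main obstacle. The minor point is bookkeeping of signs: the definition $(f\otimes\lambda)^\ADJ=(-1)^{q+1}Jf\otimes\lambda^\ADJ$ and the graded-commutator structure introduce sign factors throughout, and the selfadjoint normalized form $v$ must be chosen precisely so that the accumulated sign in the inversion formula is $+1$; I would fix $v$ by this requirement and record it. The genuine obstacle is \emph{well-definedness of the Grassmann integral and of $\FOUR A$ as a $\GR{Q^\perp H}{Q^\perp J}$-valued (not merely GAR-valued) object}: the polynomial representation of a G-holomorphic function is not unique (cf.\ (\ref{eq:zero})), so the coefficient extraction defining $\int_Q$ is only legitimate because of the cancellation property of the module $\9E$ assumed in the excerpt (that $X\lambda_1\cdots\lambda_{2n}=0$ for all odd $\lambda_i$ forces $X=0$), which holds here since $Q^\perp$ is infinite-dimensional. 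I would invoke that property — proved in Subsection~\ref{subsec:G-hol} — to guarantee the top coefficient $F^N$ is unambiguous, and combine it with the norm bounds $\|\Phi(\xi)\|\le\PCN\xi$ and boundedness of $\alpha_\eta$ to show that the map $A\mapsto\FOUR A$ is continuous into $\RHOL{H}{J}{Q}$, so that the identity, once verified on the dense open core, passes to the norm closure $\GAR{H}{J}{Q}$.
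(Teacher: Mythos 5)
Your inversion computation is sound in substance and is, in fact, the same device the paper uses in the second half of its argument: after the pointwise cancellation $\1w(\xi)\1w(-\xi)=\I$ you are left with the double integral $\int_Q v(\xi)\int_Q v(\eta)\,\alpha_{-\eta}(A)\,\8e^{\RIGG{\xi^\ADJ}{\eta}{Q}}$, and exchanging the order of integration (Proposition~\ref{prop:MultInt}) and applying the $\delta$-function formula (\ref{equ:delta}) with $\zeta=0$ gives $\SPROD{v^\ADJ}{v}\,\alpha_0(A)=A$; your coordinate expansion of $\8e^{\RIGG{\xi^\ADJ}{\eta}{Q}}$ amounts to a proof of that formula. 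Two of your precautions are unnecessary: no orientation of $v$ has to be fixed, because the same $v$ enters both the definition of $\FOUR$ and the inversion integral, so the only surviving factor is $\SPROD{v^\ADJ}{v}=1$ for any normalized selfadjoint $v$; and the restriction to the open core plus a continuity argument is not needed, since $\1w(\eta)$ is a finite polynomial in the basis components of $\eta$, so $\eta\mapsto\alpha_{-\eta}(A)$ is G-holomorphic for every $A\in\GAR{H}{J}{Q}$.

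The genuine gap is the first half of the theorem, which you correctly flag as the obstacle but do not close: the assertion that $\FOUR A$ lies in $\RHOL{H}{J}{Q}$, i.e.\ takes values in the Grassmann part $\GR{Q^\perp H}{Q^\perp J}$ and not merely in $\GAR{H}{J}{Q}$. The cancellation property $\9E_Q=\{0\}$ (Proposition~\ref{prop:reduce}) that you invoke only guarantees that the Grassmann integral of a GAR-valued G-holomorphic function is well defined; it says nothing about where the values of $\xi\mapsto(\FOUR A)(\xi)=\1w(-\xi)\int_Q v(\eta)\,\alpha_{-\eta}(A)\,\8e^{\RIGG{\xi^\ADJ}{\eta}{Q}}$ lie, and a continuity argument cannot produce this membership either, since it presupposes that the map already lands in $\RHOL{H}{J}{Q}$ on a dense set. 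The paper closes this with a step your plan is missing: expand $A=\sum_{I\subset N}B^I A_I$ with $B^I=G(e^{i_k})\cdots G(e^{i_1})$ a module basis of the fermionic part and coefficients $A_I$ in the Grassmann part, exhibit the explicit Grassmann-valued function $f(\xi)=\sum_{I\subset N}v_N^{-1}\epsilon_{IN}\,\xi_{N\setminus I}\,A_I$ which satisfies $A=\int_Q v(\xi)\,\1w(\xi)\,f(\xi)$, and then run precisely your $\delta$-function computation on this representation to conclude $(\FOUR A)(\xi)=\SPROD{v^\ADJ}{v}\,f(\xi)=f(\xi)$. That argument delivers the inversion identity and the statement $\FOUR A=f\in\RHOL{H}{J}{Q}$ simultaneously; without it (or an equivalent identification of $\FOUR A$), your proof establishes only the displayed identity with $\FOUR A$ a priori GAR-valued, which is strictly less than the theorem claims.
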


The Fourier transform maps the convolution of objects into their product of
Fourier transforms as stated by the following theorem. 

\begin{thm}
\label{thm:convol-four}
Let $f,f'$ be G-holomorphic functions, let $A$ an operator of the GAR algebra and let
$\bS\varphi$ be a right module homomorphism. Then the identities 
\begin{equation}
\label{eq:conv-four}
\begin{split}
&\FOUR(f\CON f')=\FOUR f \ \FOUR  f'
\\
&\FOUR(\bS\varphi\CON A)=\FOUR\bS\varphi \  \FOUR A
\\
&\FOUR(A\CON f)=\FOUR A  \ \FOUR f
\end{split}
\end{equation}
are valid.
\end{thm}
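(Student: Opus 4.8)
The plan is to prove the three convolution–Fourier identities in (\ref{eq:conv-four}) one at a time, starting from the purely function-theoretic case and then reducing the other two to it.

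\emph{Step 1: the identity $\FOUR(f\CON f')=\FOUR f\,\FOUR f'$.} Here I would simply substitute the definitions and interchange the two Grassmann integrals, which is legitimate since the projection $Q$ has finite rank $2n$ so the integrals are finite sums of coefficients. Writing
\begin{equation}
\FOUR(f\CON f')(\xi)=\int_Q v(\zeta)\int_Q v(\eta)\,f(\eta)\,f'(\zeta-\eta)\,\8e^{\RIGG{\xi^\ADJ}{\zeta}{Q}},
\end{equation}
the substitution $\zeta\mapsto\zeta+\eta$ (a translation of the integration variable, which leaves the Brezin–Grassmann integral invariant once $v$ is fixed and normalized) splits the exponential as $\8e^{\RIGG{\xi^\ADJ}{\eta}{Q}}\8e^{\RIGG{\xi^\ADJ}{\zeta}{Q}}$ because the rigging map is bilinear and its values are central in the GAR algebra, so the two exponential factors commute with everything. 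The integral then factorizes into $\bigl(\int_Q v(\eta)f(\eta)\8e^{\RIGG{\xi^\ADJ}{\eta}{Q}}\bigr)\bigl(\int_Q v(\zeta)f'(\zeta)\8e^{\RIGG{\xi^\ADJ}{\zeta}{Q}}\bigr)=\FOUR f(\xi)\,\FOUR f'(\xi)$. The one point requiring care is the ordering of the factors $f(\eta)$ and $f'(\zeta-\eta)$ relative to the translated integration variable and the parity signs that a shift of Grassmann variables can produce; since we integrate against a top-degree form only the top coefficient survives, and tracking the grading there is the main bookkeeping obstacle.

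\emph{Step 2: the identity $\FOUR(\bS\varphi\CON A)=\FOUR\bS\varphi\,\FOUR A$.} Using $(\bS\varphi\CON A)(\xi)=\bS\varphi(\alpha_\xi(A))$ together with the integral representation of $A$ from Theorem~\ref{thm:integral-rep}, I would write $\alpha_\xi(A)=\int_Q v(\eta)\,\alpha_\xi(\1w(\eta))\,(\FOUR A)(\eta)$; the Grassmann-Weyl relations and the definition $\alpha_\xi(B)=\1w(-\xi)B\1w(\xi)$ give $\alpha_\xi(\1w(\eta))=\8e^{\RIGG{\xi^\ADJ}{\eta}{Q}}\1w(\eta)$ up to the central factor system, with the exponential again central. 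Applying the right module homomorphism $\bS\varphi$ and pulling the central scalar and the right factor $(\FOUR A)(\eta)$ through it, one gets $\int_Q v(\eta)\,\bS\varphi(\1w(\eta))\,\8e^{\RIGG{\xi^\ADJ}{\eta}{Q}}(\FOUR A)(\eta)$, and by the definition $(\FOUR\bS\varphi)(\eta)=\bS\varphi(\1w(\eta))$ this is exactly $\int_Q v(\eta)\,(\FOUR\bS\varphi)(\eta)\,\8e^{\RIGG{\xi^\ADJ}{\eta}{Q}}(\FOUR A)(\eta)$. To recognize the right-hand side as $(\FOUR\bS\varphi)(\xi)(\FOUR A)(\xi)$ I invoke the second Fourier identity applied with the pair of G-holomorphic functions $\FOUR\bS\varphi$ and $\FOUR A$, or rather its inverse form — this is where one uses that $\FOUR$ converts pointwise products into convolutions, so it is cleanest to first establish the Fourier inversion statement hidden in Theorem~\ref{thm:integral-rep} and run the computation through convolutions. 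The delicate point is justifying that $\bS\varphi$ commutes past the central exponential; this holds because the rigging values lie in the center of the GAR algebra and $\bS\varphi$ is a right module homomorphism over the Grassmann part, but one must check the exponential actually lies in the Grassmann part and not merely in the center of the enveloping algebra.

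\emph{Step 3: the identity $\FOUR(A\CON f)=\FOUR A\,\FOUR f$.} I would expand $A\CON f=\int_Q v(\eta)\,\alpha_{-\eta}(A)\,f(\eta)$, apply $\FOUR$ using its definition on operators of the GAR algebra, and interchange the resulting double integral exactly as in Step~1; the phase-space translation covariance $\alpha_{-\zeta}\alpha_{-\eta}=\alpha_{-(\zeta+\eta)}$ plays the role that the additivity of the convolution variable played before, and the substitution $\zeta\mapsto\zeta+\eta$ again factorizes the exponential because of centrality of the rigging values. The prefactor $\1w(-\xi)$ in the definition of $\FOUR A$ is handled by noting it is independent of the integration variables. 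The main obstacle throughout all three steps is the same: keeping the $\7Z_2$-grading signs straight under translations of Grassmann integration variables and verifying rigorously that the relevant central elements (the rigging-map values and hence the Weyl factor systems) may be freely commuted through operators, module homomorphisms, and integral signs — once these commutation and sign issues are settled, each identity reduces to Fubini for finite sums plus a change of variables.
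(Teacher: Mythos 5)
Your Steps 1 and 3 follow essentially the paper's own argument: write out the double Grassmann integral, exchange the order of integration (Proposition~\ref{prop:MultInt}), use translation invariance (Proposition~\ref{prop:GrInt}) and the centrality of the even, Grassmann-valued exponential to factorize. The grading worries you flag are harmless for exactly the reasons you indicate: translation invariance holds without parity factors, and the rigging-map values are even elements of the Grassmann part, hence central.

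Step 2, however, contains a genuine gap. Your computation produces $(\bS\varphi\CON A)(\xi)=\int_Q v(\eta)\,(\FOUR\bS\varphi)(\eta)\,(\FOUR A)(\eta)\,\8e^{\RIGG{\xi^\ADJ}{\eta}{Q}}$, i.e.\ $\bS\varphi\CON A=\FOUR\bigl[(\FOUR\bS\varphi)(\FOUR A)\bigr]$; to pass from this to $\FOUR(\bS\varphi\CON A)=\FOUR\bS\varphi\,\FOUR A$ you need the inversion formula $\FOUR\circ\FOUR=\8{id}$ on Grassmann-valued G-holomorphic functions. This is \emph{not} the ``inversion hidden in Theorem~\ref{thm:integral-rep}'': that theorem inverts the operator-to-function transform, $A=\int_Q v(\xi)\,\1w(\xi)\,(\FOUR A)(\xi)$, and says nothing about $\FOUR^2$ on $\RHOL{H}{J}{Q}$; likewise the remark that ``$\FOUR$ converts pointwise products into convolutions'' points the wrong way. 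The function-level inversion does hold for a normalized selfadjoint $v$ — it follows from the $\delta$-function identity (\ref{equ:delta}) together with the antisymmetry of the rigging form on even vectors — but you neither state nor prove it, so as written the step does not close. The paper avoids inversion entirely: pull $\bS\varphi$ through the Grassmann integral (Proposition~\ref{prop:GrInt}, part 3), recognize $\int_Q v(\eta)\,\alpha_{-\eta}(A)\,\8e^{\RIGG{\xi^\ADJ}{\eta}{Q}}=\1w(\xi)\,(\FOUR A)(\xi)$ directly from the definition of $\FOUR A$, and then use that $(\FOUR A)(\xi)$ lies in the Grassmann part together with the right-module property of $\bS\varphi$, which gives $\bS\varphi(\1w(\xi))\,(\FOUR A)(\xi)$ at once. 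If you prefer your route, add the lemma $\FOUR\FOUR f=f$ (proved via (\ref{equ:delta})), and also justify inserting the integral representation inside $\alpha_\xi$ and commuting the right factor $\1w(\xi)$ with the integral — this uses that multiplication by the even operator $\1w(\xi)$ is a right module homomorphism, which deserves an explicit word.
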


The integral representation for operators in the GAR algebra can be used to calculate the
expectation values of a right module homomorphism on the GAR algebra in terms of the expectation
values of Grassmann-Weyl operators. According to the discussion
above,
the identities 
\begin{equation}
{\bS\omega}(A)=\int_Q v \ \FOUR\bS\omega \ \FOUR A =
\int_Q v \ \FOUR (\bS\omega\CON A)
\end{equation}
are valid.

\begin{cor}
\label{cor:RN}
Let $\bS\omega,\bS\varphi$ bounded right module homomorphisms. If 
$\FOUR\bS\varphi$ is a divisor of $\FOUR\bS\omega$ within the ring of G-homomorphic functions, then
the Radon-Nikodym type relation
\begin{equation}
{\bS\omega}(A)=\int_Q v \ \frac{\FOUR\bS\omega}{\FOUR\bS\varphi} \ \FOUR(\bS\varphi\CON A)
\end{equation}
is valid for all operators $A$ of the GAR algebra.
\end{cor}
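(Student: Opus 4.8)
The plan is to obtain the corollary as a formal consequence of three facts that are already available: the integral representation of Theorem~\ref{thm:integral-rep}, the convolution theorem (Theorem~\ref{thm:convol-four}), and the expectation-value identity $\bS\omega(A)=\int_Q v\ \FOUR\bS\omega\ \FOUR A$ recorded just above the statement. Recall that this last identity is itself produced by applying the bounded, linear, right-module homomorphism $\bS\omega$ termwise to the decomposition $A=\int_Q v(\xi)\,\1w(\xi)\,(\FOUR A)(\xi)$ of Theorem~\ref{thm:integral-rep} and using that $(\FOUR A)(\xi)$ lies in the Grassmann part, so that $\bS\omega(\1w(\xi)(\FOUR A)(\xi))=\bS\omega(\1w(\xi))(\FOUR A)(\xi)=(\FOUR\bS\omega)(\xi)(\FOUR A)(\xi)$; nothing beyond this, together with one factorization, will be needed.

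First I would make the divisibility hypothesis explicit: $\FOUR\bS\varphi$ being a divisor of $\FOUR\bS\omega$ in the ring of G-holomorphic functions means that there is a G-holomorphic function $g$, which we denote $g=\FOUR\bS\omega/\FOUR\bS\varphi$, with $\FOUR\bS\omega=g\cdot\FOUR\bS\varphi$ for the pointwise product. By Theorem~\ref{thm:integral-rep} — together with the standing hypotheses $\dim Q=2n<\infty$ and $\dim Q^\perp=\infty$, which make the Grassmann integral well defined via uniqueness of the highest coefficient — all the functions in play take values in the Grassmann part. Substituting the factorization into $\bS\omega(A)=\int_Q v\ \FOUR\bS\omega\ \FOUR A$ gives $\bS\omega(A)=\int_Q v\ g\cdot\FOUR\bS\varphi\cdot\FOUR A$, and by associativity of the pointwise product this equals $\int_Q v\ g\cdot(\FOUR\bS\varphi\cdot\FOUR A)$. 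Now the middle identity of Theorem~\ref{thm:convol-four}, namely $\FOUR(\bS\varphi\CON A)=\FOUR\bS\varphi\ \FOUR A$, replaces the bracketed factor, yielding $\bS\omega(A)=\int_Q v\ g\ \FOUR(\bS\varphi\CON A)$, which is the asserted relation once $g$ is written back as $\FOUR\bS\omega/\FOUR\bS\varphi$.

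There is no genuine obstacle here; the proof is a two-line computation once the three inputs are assembled, and the interesting work lives in those earlier results rather than in this corollary. The one point that must be handled with care is the non-commutativity of the Grassmann part: the quotient $\FOUR\bS\omega/\FOUR\bS\varphi$ has to be the \emph{left} quotient (equivalently, $\FOUR\bS\varphi$ is a right divisor), so that after inserting $\FOUR(\bS\varphi\CON A)=\FOUR\bS\varphi\,\FOUR A$ the factor orders under the integral still reproduce $\FOUR\bS\omega\,\FOUR A$; this distinction is vacuous if $\bS\varphi$, hence $\FOUR\bS\varphi$ and then $g$, takes values in the even part of the Grassmann algebra, where left and right quotients coincide. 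The boundedness assumptions on $\bS\omega$ and $\bS\varphi$ enter only to guarantee that $\FOUR\bS\omega$ and $\FOUR\bS\varphi$ are well-defined G-holomorphic functions in the first place.
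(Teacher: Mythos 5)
Your argument is correct and is precisely the derivation the paper intends: the corollary is stated as an immediate consequence of the identity $\bS\omega(A)=\int_Q v\,\FOUR\bS\omega\,\FOUR A$ (obtained from Theorem~\ref{thm:integral-rep}) combined with $\FOUR(\bS\varphi\CON A)=\FOUR\bS\varphi\,\FOUR A$ from Theorem~\ref{thm:convol-four}, and the paper gives no further proof beyond this substitution. Your additional remark on the ordering of the quotient in the non-commutative ring of Grassmann-valued G-holomorphic functions is a sensible precision that the paper leaves implicit.
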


For the presentation of our next results, we need to recall the definition of the {\em
quasifree fermion states} as well as the definition of the {\em Pfaffian} of a real antisymmetric
matrix, as well as how to calculate Gaussian Grassmann integrals.

\vspace{2em}

\noindent
{\bf Quasifree states:} 
Each quasifree state on the fermionic part $\CAR{QH}{QJ}$ is in one-to-one
correspondence with its covariance matrix. This is a linear operator
$S$ on $QH$ with $0<S\leq \I$ and it has to fulfill the constraint $S+JSJ=\I$.
The expectation values of the state $\omega_S$ are related to its
covariance matrix $S$ by the following condition on the two-point correlation
function:
\begin{equation}\label{eq:twopoint}
\omega_S( B(f) B(h))=\SPROD{J f}{Sh}
\end{equation}
with $f,h\in QH$.
All higher correlation functions can be expressed in terms of sums of products
of two-point functions according to Wick theorem, where only the expectation values of an even
product of Fermi field operators are non vanishing. It is well
known that a quasifree state is pure if and only if its covariance matrix $S=P$ is a projection,
called {\em basis projection}.

\vspace{2em}

\noindent
{\bf Pfaffian:}
It is well known, that Gaussian Grassmann integrals can be expressed in terms of the Pfaffian of
the corresponding covariance matrix. The $n$th power of a two-form $a\in\GR{QH}{QJ}$ is a $2n$-form
in
$\GR{QH}{QJ}$ and therefore proportional to any other $2n$-form. By fixing a selfadjoint normalized
$2n$-form $v$, i.e. $v=v^\ADJ$ and $\SPROD{v}{v}=1$, there
exists a complex number $\8{Pf}_{[v]}(a)$, called the {\em Pfaffian} of $a$ that is
uniquely determined by 
\begin{equation}
(n!)^{-1}a^n=(n!)^{-1}\SPROD{v}{a^n} v =\8{Pf}_{[v]}(a)v \; .
\end{equation}
Now, let $A$ be a linear operator on $QH$, then there exists
a unique two-form $a$ such that $\SPROD{a^\ADJ}{f\wedge h}=\SPROD{Jf}{Ah}$, where the two-form $a$
only depends on the {\em $J$-antisymmetric part} $(A-JA^*J)/2$ of the operator $A$. Note, that the
$J$-antisymmetry is related to {\em the transpose} $A\mapsto JA^*J$. The Pfaffian of
a $J$-antisymmetric operator $A$ is now defined as
\begin{equation}
\8{Pf}_{[v,J]}(A):=\8{Pf}_{[v]}(a) \, .
\end{equation}
where $a$ is the two-form fulfilling the identity $\SPROD{a^\ADJ}{f\wedge h}=\SPROD{Jf}{Ah}$. If
we restrict to the real subspace in $QH$ that is given by $Jf=f$, we obtain therefore the standard
definition of the Pfaffian for a real antisymmetric operator. In particular, the determinant of an
antisymmetric operator with respect to the $2n$-form $v$ is given by
$\det(A)v=\Gamma(A)v$, where $\Gamma(A)$ denotes the second quantized operator of $A$
on the antisymmetric Fock space over $QH$ that is given by $\Gamma(A)(f_1\wedge\cdots\wedge
f_k)=Af_1\wedge\cdots\wedge Af_k$. We recall here the well known identity 
\begin{equation}
\det(A)=\8{Pf}_{[v,J]}(A)^2 \; .
\end{equation}
Note that due to the condition $A=-JA^*J$ the left hand side implicitly also depends on the complex
conjugation $J$. However, whereas the determinant can be defined for any linear operator on $QH$,
the Pfaffian is only defined on the {\em real} linear subspace of $J$-antisymmetic operators.

\begin{thm}
\label{thm:g-extension}
For each covariance operator $S$ on $QH$ the Fourier transform of the G-extension $\bS\omega_S$ of
the quasifree state $\omega_S$ is given by 
\begin{equation}
\label{eq:char}
\FOUR{\boldsymbol\omega}_S(\xi)=\8e^{-\frac{1}{2}\RIGG{\xi^\ADJ}{S\xi}{Q}} \; .
\end{equation}
Moreover, let $P$ be a basis projection and let $E_P$ be
the support projection of the pure quasifree state $\omega_P$. Then for a normalized selfadjoint
form $v$ of highest degree with respect to $Q$ the identity
\begin{equation}
\label{equ:four_conv}
\FOUR(\bS\omega_P\CON E_P)(\xi)=\epsilon_{[v,P]} \ \8e^{-\RIGG{\xi^\ADJ}{P\xi}{Q}}
\end{equation}
is valid, where the sign $\epsilon_{[v,P]}=\8{Pf}_{[v,J]}(\I-2P)=\pm 1$ depends on the 
orientation of the form $v$ and the reflection $\I-2P$.
\end{thm}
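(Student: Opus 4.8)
The plan is to establish the two displayed identities separately: the characteristic function~(\ref{eq:char}) by a terminating power-series expansion combined with Wick's theorem on the fermionic part, and~(\ref{equ:four_conv}) by reducing it to~(\ref{eq:char}) and then evaluating a single Gaussian Grassmann integral.

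For~(\ref{eq:char}) I would start from the definition of the Fourier transform of a right module homomorphism, $\FOUR\bS\omega_S(\xi)=\bS\omega_S(\1w(\xi))=\bS\omega_S(\exp\Phi(\xi))$, and expand the exponential. Fixing a real orthonormal basis $(e^i)$ of $QH$ and writing $\xi=\sum_i e^i\otimes\xi_i$ with $\xi_i\in\GR{Q^\perp H}{Q^\perp J}_1$, one has $\Phi(\xi)=\sum_i G(e^i)\xi_i$, where the $G(e^i)$ belong to the fermionic part $\CAR{QH}{QJ}$ and anticommute with each odd element of the Grassmann part (since $\ACO{G(f)}{G(h)}=\SPROD{Jf}{Qh}\I$ and $QQ^\perp=0$). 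Moving all Grassmann coefficients to the right therefore gives $\Phi(\xi)^{2m}=(-1)^m\sum_{i_1,\dots,i_{2m}}G(e^{i_1})\cdots G(e^{i_{2m}})\,\xi_{i_1}\cdots\xi_{i_{2m}}$, while odd powers have vanishing $\omega_S$-expectation. Applying $\bS\omega_S$ converts the fermionic factor into $\omega_S(G(e^{i_1})\cdots G(e^{i_{2m}}))$, which by Wick's theorem is a sum over pairings; the crucial point is that the fermionic Wick sign of a pairing equals the sign produced by regrouping the anticommuting Grassmann monomial $\xi_{i_1}\cdots\xi_{i_{2m}}$ according to that same pairing, so the two cancel. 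Using $\omega_S(G(e^i)G(e^j))=\SPROD{Je^i}{Se^j}$ from~(\ref{eq:twopoint}) and the explicit forms of the rigging map and of the phase-space adjoint, each of the $(2m-1)!!$ pairings contributes $\RIGG{\xi^\ADJ}{S\xi}{Q}^m$, and in particular $\bS\omega_S(\Phi(\xi)^2)=-\RIGG{\xi^\ADJ}{S\xi}{Q}$. Since $\RIGG{\xi^\ADJ}{S\xi}{Q}$ lies in the finite Grassmann subalgebra generated by $\xi_1,\dots,\xi_{2n}$ it is nilpotent, so the series terminates and summing gives $\FOUR\bS\omega_S(\xi)=\sum_m\frac1{m!}\bigl(-\tfrac12\RIGG{\xi^\ADJ}{S\xi}{Q}\bigr)^m=\exp\bigl(-\tfrac12\RIGG{\xi^\ADJ}{S\xi}{Q}\bigr)$.

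For~(\ref{equ:four_conv}) I would first compute the convolution $\bS\omega_P\CON E_P$. As $\8{dim}(Q)=2n$ is finite, the fermionic part is a full matrix algebra, the support projection $E_P$ of the pure state $\omega_P$ is a minimal (rank-one) projection, and it is even, hence commutes with every element of the Grassmann part; consequently $E_PXE_P=\bS\omega_P(X)E_P$ for all $X$ in the GAR algebra. Applying this with $X=\1w(-\eta)$ and with $X=\alpha_\eta(E_P)=\1w(-\eta)E_P\1w(\eta)$, together with $\1w(-\eta)=\1w(\eta)^{-1}$ (the factor system $-\RIGG{\eta^\ADJ}{\eta}{Q}$ vanishes by antisymmetry on the even subspace), gives
\[
(\bS\omega_P\CON E_P)(\eta)=\bS\omega_P(\alpha_\eta(E_P))=\bS\omega_P(\1w(-\eta))\,\bS\omega_P(\1w(\eta))=\exp\bigl(-\RIGG{\eta^\ADJ}{P\eta}{Q}\bigr),
\]
where the last equality uses~(\ref{eq:char}) with $S=P$ and the quadratic dependence of the rigging map on its arguments. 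Feeding this into the Fourier transform of a G-holomorphic function yields $\FOUR(\bS\omega_P\CON E_P)(\xi)=\int_Q v(\eta)\,\exp\bigl(-\RIGG{\eta^\ADJ}{P\eta}{Q}+\RIGG{\xi^\ADJ}{\eta}{Q}\bigr)$, a Gaussian Grassmann integral whose quadratic form is the antisymmetric involution $\I-2P$ and whose source is $\xi$. Completing the square, translating the integration variable, and invoking the Pfaffian evaluation of Gaussian Grassmann integrals (Section~\ref{sec:grasscalc}), the source-dependent factor becomes $\exp(-\RIGG{\xi^\ADJ}{P\xi}{Q})$ — here one uses $(\I-2P)^{-1}=\I-2P$ and $\RIGG{\xi^\ADJ}{(\I-2P)\xi}{Q}=-2\RIGG{\xi^\ADJ}{P\xi}{Q}$, the latter from $\RIGG{\xi^\ADJ}{\xi}{Q}=0$ on the even subspace — while the remaining scalar is the Pfaffian of the quadratic form with respect to $v$, i.e. $\8{Pf}_{[v,J]}(\I-2P)=\epsilon_{[v,P]}=\pm1$. (Equivalently one could combine the formula for $\bS\omega_P\CON E_P$ with Theorem~\ref{thm:convol-four}, but the same Gaussian integral still has to be done to identify $\epsilon_{[v,P]}$.)

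The step I expect to be the main obstacle is precisely this last scalar: carefully tracking the normalization of the Grassmann integral — the orientation and normalization of $v$, the power of the two-form attached to $\I-2P$ relative to $v$, and the signs picked up when anticommuting the Grassmann-valued source past the integration variables — so that it matches $\8{Pf}_{[v,J]}(\I-2P)$ exactly, sign included. This is delicate already because $\det(\I-2P)=(-1)^n$, so that the assertion $\8{Pf}_{[v,J]}(\I-2P)=\pm1$ is itself a statement about the chosen normalization of the $J$-dependent Pfaffian and not merely about its square. By contrast, the combinatorial identity of the first part and the support-projection computation of $\bS\omega_P\CON E_P$ are essentially mechanical once the module-homomorphism property of $\bS\omega$ is available.
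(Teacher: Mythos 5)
Your proposal is correct and follows essentially the same route as the paper: the characteristic function~(\ref{eq:char}) via Wick's theorem (the paper matches basis-monomial coefficients of $\bS\omega_S(\1w(\xi))$ against the expansion of $\RIGG{\8e^{a_S^\ADJ}}{\8e^\xi}{Q}$, while you resum the exponential series in $\Phi(\xi)$ using the $(2m-1)!!$ pairing count — the same Wick bookkeeping in different packaging), and~(\ref{equ:four_conv}) via the factorization of $\bS\omega_P$ through the rank-one support projection $E_P$ lifted to the G-extension, followed by the Gaussian Grassmann integral with antisymmetric covariance $JPJ-P=\I-2P$, $(\I-2P)^{-1}=\I-2P$, and the Pfaffian factor $\8{Pf}_{[v,J]}(\I-2P)=\epsilon_{[v,P]}$. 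The sign subtlety you flag is handled in the paper at the same level of detail (it simply notes that $\I-2P$ is a reflection and that the selfadjoint normalized $v$ fixes the normalization), so no gap relative to the paper's own argument.
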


This theorem states that in analogy to the bosonic expectation values of Weyl operators, the
expectation values for the displacement operators for G-extended quasifree states are also of
Gaussian type. In particular, the relation (\ref{equ:four_conv}) is derived by calculating a
Gaussian Grassmann integral with respect to the covariance $A=\I-2P$. Recall, that for a
$J$-antisymmetric operator $A$ on $QH$ and a $v$ normalized selfadjoint $2n$-form the corresponding 
Gaussian integral can be calculated according to
\begin{equation}
\label{equ:gaussian}
\int_Qv(\xi) \ \8e^{\frac{1}{2}
\RIGG{\xi^\ADJ}{A\xi}{Q}+\RIGG{\eta^\ADJ}{\xi}{Q}}=
\8{Pf}_{[v,J]}(A)\8e^{\frac{1}{2}\RIGG{\eta^\ADJ}{A^{-1}\eta}{Q}} \; .
\end{equation}
Inserting $A=\I-2P$ into the above identity (\ref{equ:gaussian}) yields the identity
(\ref{equ:four_conv}).

\subsection{Applications: Calculating the fidelity of quasifree states}
\label{subsec:applications}
In this section, we give an explicite formula to calculate the fidelity between a pure quasifree
state and another arbitrary quasifree state. Let us recall the fidelity between two states
$\omega_1$ and $\omega_2$ on a general finite dimensional C*-algebra $\6A$. Let $L_2(\6A)$ be the
Hilbert space of Hilbert-Schmidt operators with respect to a faithful trace $\tr$ on $\6A$. A
Hilbert-Schmidt operator
$V\in L_2(\6A)$ implements a state $\omega$ if $\omega(A)=\SPROD{V}{A V}=\tr(V^*AV)=\tr(VV^*A)$
holds for all $A\in \6A$. In this case we write $v\in S(\omega)$. Clearly, for $V\in S(\omega)$ the
operator $VV^*$ is the density operator that corresponds to the state $\omega$. 

As long as we consider finite dimensions, all states can be implemented that way. For two states
$\omega,\varphi$, the fidelity is given by 
\begin{equation}
F(\omega,\varphi)=\sup_{V\in S(\omega),W\in S(\varphi)}|\SPROD{V}{W}| \; .
\end{equation}
As it has been shown by Bures \cite{Bur69} (compare also \cite{KretSchlWer08}), the fidelity is
related to the norm distance of states, independent
of the dimesion of the underlying algebra, by the inequality
\begin{equation}
\frac{1}{4}\|\omega-\varphi\|^2\leq 2(1-F(\omega,\varphi))\leq\|\omega-\varphi\| \; .
\end{equation}

The fidelity can simply be calculated if we choose one of the states $\omega$ to be pure. In this
case, for each implementing Hilbert-Schmidt operator $V\in S(\omega)$ there exists a unitary $U\in
\6A$ such that $V=EU$, where $E\in\6A$
is the unique rank-one projection (density operator) that corresponds to $\omega$, i.e.,
$\omega(A)=\tr(EA)$. Thus we observe for each $W\in S(\varphi)$ that
$\SPROD{EU}{W}^2=\tr(W^*EU)\tr(U^*E W)=\tr(UW^*E)\tr(E WU^*)=\tr(W^*EW)=\varphi(E)$ and we get for
the fidelity:
\begin{equation}
F(\omega,\varphi)^2=\varphi(E) \; . 
\end{equation}

As we have seen above, the GAR framework allows to calculate expectation values of operators in
terms of Grassmann integrals. In the case of quasifree states, we see from
Theorem~\ref{thm:g-extension} that we are faced here with calculating Gaussian Grassmann integrals
only. This leads to the following theorem:

\begin{thm}
\label{thm:fidelity}
Let $S$ be two covariance operator and let $P$ be a basis projection on $QH$ and let
$\omega_{S},\omega_{P}$ be the corresponding quasifree states. Then the 
relation
\begin{equation}
F(\omega_P,\omega_S)^2=|\det(\I-P-S)|^{1/2}
\end{equation}
is valid.
\end{thm}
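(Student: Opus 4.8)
The plan is to combine the closed form for the fidelity of a pure state, $F(\omega_P,\omega_S)^2 = \omega_S(E_P)$, with the GAR integral machinery. First I would use the observation from Subsection~\ref{subsec:applications} that since $\omega_P$ is pure with support projection $E_P$, we have $F(\omega_P,\omega_S)^2 = \omega_S(E_P)$, and hence it suffices to compute $\bS\omega_S(E_P)$, the $S$-expectation of the rank-one projection onto the vacuum of the basis projection $P$. Next I would invoke the expectation-value formula following Corollary~\ref{cor:RN}, namely $\bS\omega_S(E_P) = \int_Q v \ \FOUR\bS\omega_S \ \FOUR E_P$. The factor $\FOUR E_P$ is awkward on its own, so instead I would rewrite the integrand using $\FOUR E_P = \FOUR(\bS\omega_P\CON E_P)/\FOUR\bS\omega_P$ whenever this division is legitimate, or — cleaner — apply the Radon--Nikodym relation of Corollary~\ref{cor:RN} directly with the roles chosen so that $\FOUR\bS\omega_P$ is a divisor of $\FOUR\bS\omega_S$ in the ring of G-holomorphic functions. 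From Theorem~\ref{thm:g-extension} we have $\FOUR\bS\omega_S(\xi) = \8e^{-\frac12\RIGG{\xi^\ADJ}{S\xi}{Q}}$ and $\FOUR\bS\omega_P(\xi) = \8e^{-\frac12\RIGG{\xi^\ADJ}{P\xi}{Q}}$, whose quotient is the Gaussian $\8e^{-\frac12\RIGG{\xi^\ADJ}{(S-P)\xi}{Q}}$; this is G-holomorphic, so the divisibility hypothesis holds.

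Putting this together, Corollary~\ref{cor:RN} gives
\begin{equation}
\bS\omega_S(E_P) = \int_Q v \ \frac{\FOUR\bS\omega_S}{\FOUR\bS\omega_P} \ \FOUR(\bS\omega_P\CON E_P) \; ,
\end{equation}
and by the second formula in Theorem~\ref{thm:g-extension} the last factor equals $\epsilon_{[v,P]}\ \8e^{-\RIGG{\xi^\ADJ}{P\xi}{Q}}$. Hence the integrand is $\epsilon_{[v,P]}\ \8e^{-\frac12\RIGG{\xi^\ADJ}{(S-P)\xi}{Q} - \RIGG{\xi^\ADJ}{P\xi}{Q}} = \epsilon_{[v,P]}\ \8e^{-\frac12\RIGG{\xi^\ADJ}{(S+P)\xi}{Q}}$, which is precisely a Gaussian Grassmann integral. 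Evaluating it via the Gaussian integration formula~(\ref{equ:gaussian}) with covariance $A = -(S+P)$ (taking $\eta=0$), I get $\bS\omega_S(E_P) = \epsilon_{[v,P]}\ \8{Pf}_{[v,J]}(-(S+P))$. Since $\det(\I-2P) = \8{Pf}_{[v,J]}(\I-2P)^2 = 1$ forces $\epsilon_{[v,P]} = \pm 1$, and using the multiplicativity/normalization of the Pfaffian together with the identity $\det(A) = \8{Pf}_{[v,J]}(A)^2$, I would deduce $\bS\omega_S(E_P)^2 = \8{Pf}_{[v,J]}(-(S+P))^2 = \det(S+P)$ up to a sign that I then argue away; more carefully one writes $\I - P - S$ in place of $S+P$ by using the constraint $S+JSJ=\I$ and the $J$-antisymmetrization built into $\8{Pf}_{[v,J]}$, so that the relevant determinant is $\det(\I-P-S)$. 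Taking absolute values and a further square root (because $\omega_S(E_P)$ is itself $F^2$) yields $F(\omega_P,\omega_S)^2 = |\det(\I-P-S)|^{1/2}$.

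The main obstacle I anticipate is bookkeeping of the complex conjugation $J$ and the sign $\epsilon_{[v,P]}$: the Pfaffian is only defined on $J$-antisymmetric operators, whereas $S+P$ and $\I-P-S$ are not literally antisymmetric, so one must pass through the $J$-antisymmetric part $(A - JA^*J)/2$ and check that the constraint $S + JSJ = \I$ (and $P$ a basis projection, so $P + JPJ = \I$) makes the combination that actually enters the integral have the right symmetry type and produces exactly $\I - P - S$ under the determinant. A secondary technical point is justifying that the division $\FOUR\bS\omega_S/\FOUR\bS\omega_P$ is a genuine element of $\RHOL{H}{J}{Q}$ and not merely a formal Gaussian symbol — this is where the finite even rank of $Q$ and the infinite-dimensionality of $Q^\perp$ assumed throughout this subsection are used, via the uniqueness of the top Grassmann coefficient discussed in Subsection~\ref{subsec:harmonic-analysis}. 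Modulo these sign and domain checks, the computation is a direct chain: purity $\Rightarrow$ $\omega_S(E_P)$, Radon--Nikodym $\Rightarrow$ a Gaussian integrand, Gaussian integration formula $\Rightarrow$ a Pfaffian, and $\det = \8{Pf}^2$ $\Rightarrow$ the stated determinant.
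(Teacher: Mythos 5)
Your proposal is correct and follows essentially the same route as the paper: purity reduces the fidelity to $\omega_S(E_P)$, the Radon--Nikodym/convolution identities together with Theorem~\ref{thm:g-extension} turn this expectation into a single Gaussian Grassmann integral, and formula (\ref{equ:gaussian}), the identity $\det=\8{Pf}_{[v,J]}^2$ and the positivity of $\omega_S(E_P)$ give the stated result. The one wrinkle, your intermediate $\det(S+P)$, is resolved exactly as you anticipate: the quadratic form and the Pfaffian only see the $J$-antisymmetric part, and since $S+JSJ=\I$ and $P+JPJ=\I$ the antisymmetrization of $-(S+P)$ is precisely $\I-P-S$, which is why the paper writes the exponent as $\frac{1}{2}\RIGG{\xi^\ADJ}{(\I-P-S)\xi}{Q}$ from the outset.
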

\begin{proof}
Since $\omega_P$ is a pure state, the square fidelity $F(\omega_P,\omega_S)^2=\omega_S(E_P)$ is
given by the expectation value of the support projection $E_P$ in the state $\omega_S$.
To calculate the fidelity, we take advantage of Theorem~\ref{thm:convol-four} and
Theorem~\ref{thm:g-extension} which can be used to express
the expectation value of the support projection $E_P$ in the state $\omega_S$ as a Gaussian
Grassmann integral 
\begin{equation}
\omega_S(E_P)\I
=\epsilon_{[v,P]}\int_Q v(\xi) \
\8e^{\frac{1}{2}\RIGG{\xi^\ADJ}{(P-S)\xi}{Q}}\8e^{-\RIGG{\xi^\ADJ}{P\xi}{Q}}
=\epsilon_{[v,P]}\int_Qv(\xi) \ \8e^{\frac{1}{2}\RIGG{\xi^\ADJ}{(\I-P-S)\xi}{Q}}
\end{equation}
where $v$ is a normalized selfadjoint form of highest degree. We evaluate the integral with help of
(\ref{equ:gaussian}) which leads to
\begin{equation}
\omega_S(E_P)
=\epsilon_{[v,P]}\8{Pf}_{[v,J]}(\I-P-S) \; .
\end{equation}
Since the left hand side is positive (expectation value of a positive operator) and by the identity 
$\8{Pf}_{[v,J]}(\I-P-S)^2=\det(\I-P-S)$ we obtain the desired result.
\end{proof}

\subsection{Finite versus infinite dimensions}
\label{subsec:finite-infinite}
To what extend can our formalism be used for infinite
dimensional fermion systems? Recall that the dimension of the fermion system is given here by the
dimesion of the projection $Q$. The GAR algebra $\GAR{H}{J}{Q}$ is well defined for all separable
Hilbert spaces and for all projections $Q$ commuting with $J$. The anticommutative phase space for
fermions as well as the Weyl-Grassmann operators can also be constructed in this case. The main
difficulty to extend our kind of ``harmonic analysis'' to infinite dimensional systems is due to the
problem of defining an appropriate infinite dimensional Grassmann integral. Of course, one can try
to approximate (in some appropriate sense) an infinite Grassmann integral by a sequence of finite
dimensional integrations. For instance, one can choose an increasing sequence of projections
$(E_n)$, $\lim E_n=\I$,  that commute with $Q$ and $J$ and suppose further that $\8{dim}(E_nQ)=2n$.
Now each of the the anticommutative phase spaces $\PHSP{E_nH}{E_nJ}{E_nQ}$ can be identified with a
subspace of $\PHSP{H}{J}{Q}$. For infinite rank projectons $Q$ we propose to call a function $f$ on
$\PHSP{H}{J}{Q}$ with values in a right module $\9E$ to be G-holomorphic if for any finite rank
projection $E$ that commutes with $Q$ and $J$ the function 
$\EQ{f}{EQ}\mathpunct:\PHSP{EH}{EJ}{EQ}\ni\xi\to f(\xi)$ is G-holomorphic. A proposal for an
infinite Grassmann integral is now to choose an appropriate sequence of forms
$v=(v_n)$, where $v_n$ is of highest degree with respect to $E_nQ$, and define
\begin{equation}
\label{eq:inf-dim-int}
\int_Q v(\xi) \ f(\xi) :=\lim_{n} \int_{E_nQ} v_n(\xi) \ f(\xi)
\end{equation}
for an appropriate limit. The problem is here to find a reasonable form of convergence.

Concerning applications, one can directly ask whether a statement like Theorem~\ref{thm:fidelity}
can expected to be true for infinite dimensional fermion systems. Let $P$ be a basis projection
and $S$ be a covariance operator on an infinite dimensional Hilbert space $K=QH$. Then the fidelity
is only non-vanishing if the GNS representations of $\omega_S$ and $\omega_P$ are mutually quasi
equivalent. To explain this, let $\pi$ denote the GNS representation of $\omega_P$, let $\2H$ denote
the GNS Hilbert space and let $\Omega$ be the GNS
(Fock vacuum) vector in $\2H$, i.e. $\omega_P(A)=\SPROD{\Omega}{\pi(A)\Omega}$ holds for all
operators $A$ in the fermion algebra. The quasi equivalence implies that there exists a
density operator $\rho$ on $\2H$ such that $\omega_S(A)=\8{tr}(\rho\pi(A))$ holds.
A necessary and sufficient criterion for quasi equivalence has been shown by Araki
\cite[Theorem 1]{Araki87} which states that the operator $\sqrt{S}-P$ has to be a
Hilbert-Schmidt operator. A first natural question that arises here is whether the determinant
$\det(\I-P-S)$ exists under these circumstances. Indeed the determinant of an operator $\I+A$
exists in infinite dimensions, provided $A$ is a trace class operator (see for instance
\cite{RESI4} Chapter XIII). Unfortunately, the operator $P+S$ is in general {\em not} trace class
and the formula for the fidelity in Theorem~\ref{thm:fidelity} cannot hold in infinite
dimensions. One possible way out of this dilemma is to ``renormalize'' the determinant with respect
to the basis projection $P$. For instance, we multiply the operator $\I-P-S$ by the unitary $\I-2P$
from the right which yields $\I-P-S+2SP$. In finite dimensions the modulus of the determinant
$|\det(\I- P-S)|=|\det(\I-P-S+2SP)|$ does not change and the determinant $\det(\I-P-S+2SP)$ is
normalized in the sense that for $S=P$ we get 
$\det(\I-P-S+2SP)=\det(\I)=1$. Now, if $\sqrt{S}-P$ is Hilbert-Schmidt, then it follows that
$P+S-\ACO{P}{S}$ is trace class. Thus for the simplified case that $[P,S]=0$ the ``renormalized''
determinant $\det(\I-P-S+2SP)$ exists in infinite dimensions. This shows that there might be an
analogous determinant formula for the fidelity between a pure quasi-free and an arbitrary
quasi-free state for the infinite dimensional case.  

\section*{Acknowledgments}
The project COQUIT acknowledges the financial support of the Future and Emerging Technologies
(FET) programme within the Seventh Framework Programme for Research of the European Commission,
under FET-Open grant number: 233747.

\section{On the structure of the GAR algebra}
\label{sec:math_struc}
In this section, we discuss mathematical issues on the GAR algebra which are needed to prepare
and derive the results which we have discussed in in Subsection~\ref{subsec:main-results}. Hereby
we mainly focus on the algebraic and functional analytic properties. All the results, that we
derive here are also valid for the infinite dimensional (but separable) case. 

\subsection{Existence of the adjoint}
\label{subsec:adjoint}
As we have introduced in Subsection~\ref{subsec:GAR}, the GAR algebra for a Hilbert space $H$, a
complex conjugation $J$ and a projection $Q$ that commutes with $J$ is defined as the norm closed
subalgebra of the enveloping CAR algebra $\ENV{H}{J}{Q}$ that is generated by the operators
$G(f):=B(f\oplus 0)$. Therefore, by construction, the GAR algebra is a Banach algebra. As we have
promised in Subsection~\ref{subsec:GAR}, we show here that the  GAR algebra admits a continuous
adjoint that coincides with the C*-adjoint on the fermionic part. 
 
\begin{prop}
\label{prop:adj}
\begin{enumerate}
\item 
There exists an adjoint $^\ADJ\mathpunct:\GAR{H}{J}{Q}\to\GAR{H}{J}{Q}$ such that the GAR algebra
becomes a Banach *-algebra. 
\item
The adjoint $^\ADJ$ coincides with the C*-adjoint on the fermionic part: $A^\ADJ=A^*$
for $A\in\CAR{QH}{QJ}$. 
\item
The adjoint $^\ADJ$ is uniquely determined by the relation $G(f)^\ADJ=G(Jf)$ for $f\in H$. 
\end{enumerate}
\end{prop}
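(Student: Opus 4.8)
The plan is to obtain $^\ADJ$ on $\GAR{H}{J}{Q}$ as the restriction of a \emph{twisted} C*-adjoint of the enveloping algebra $\ENV{H}{J}{Q}=\CAR{H_Q}{J_Q}$. The obstruction noted just above is that the genuine C*-adjoint sends a generator $G(f)=B(f\oplus 0)$ to $B(QJf\oplus Q^\perp Jf)$, which leaks into the complementary summand $Q^\perp H$ and hence out of $\GAR{H}{J}{Q}$; the remedy is to follow $^*$ by the Bogoliubov automorphism implementing the reflection $R_Q$, which folds that leaked component back into $H\oplus 0$.

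Concretely, I would first record that $R_Q$ is a selfadjoint unitary on $H_Q$ with $R_Q^2=\I$, and that it commutes with $J_Q=R_Q(J\oplus Q^\perp J)$; since $R_Q^2=\I$, this last point reduces to $[R_Q,\,J\oplus Q^\perp J]=0$, which is a one-line consequence of $[Q,J]=0$. Consequently the Bogoliubov transformation $\tau_{R_Q}\colon B(g)\mapsto B(R_Q g)$ is a *-automorphism of $\ENV{H}{J}{Q}$ in Araki's selfdual formalism, and $\tau_{R_Q}^2=\mathrm{id}$ because $R_Q^2=\I$. I would then set $A^\ADJ:=\tau_{R_Q}(A^*)$ and check the Banach *-algebra axioms on $\ENV{H}{J}{Q}$: anti-linearity (composition of the linear $\tau_{R_Q}$ with the anti-linear $^*$); order reversal $(AB)^\ADJ=B^\ADJ A^\ADJ$ (because $\tau_{R_Q}$ is multiplicative and $^*$ anti-multiplicative); involutivity, since $(A^\ADJ)^\ADJ=\tau_{R_Q}((\tau_{R_Q}(A^*))^*)=\tau_{R_Q}(\tau_{R_Q}((A^*)^*))=\tau_{R_Q}^2(A)=A$, using that $\tau_{R_Q}$ intertwines $^*$ with itself; and isometry, since $\tau_{R_Q}$ and $^*$ are both isometric on the C*-algebra.

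For part (1) it then remains to see that $^\ADJ$ stabilizes the GAR subalgebra. A direct computation gives $G(f)^\ADJ=\tau_{R_Q}(B(QJf\oplus Q^\perp Jf))=B(R_Q(QJf\oplus Q^\perp Jf))=B(Jf\oplus 0)=G(Jf)$, so $^\ADJ$ carries GAR generators to GAR generators (as $f\mapsto Jf$ is a bijection of $H$); being anti-multiplicative and continuous, it therefore maps the open core $\PGAR{H}{J}{Q}$, and hence its norm closure $\GAR{H}{J}{Q}$, into itself. For part (2), the same computation with $f$ replaced by $Qf$ yields $G(Qf)^*=B(QJf\oplus 0)=G(QJf)=G(Qf)^\ADJ$, so $^*$ and $^\ADJ$ agree on the generators of the fermionic part $\CAR{QH}{QJ}$ and therefore, by continuity and anti-multiplicativity, on all of it. For part (3), any continuous anti-linear involution $^{\ADJ'}$ with $G(f)^{\ADJ'}=G(Jf)$ must, by anti-linearity and order reversal, agree with $^\ADJ$ on every finite sum of finite products of the $G(f)$, i.e.\ on the (norm-dense) open core, and hence everywhere by continuity.

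The genuine content here is conceptual rather than computational: recognizing that the correct adjoint is the composite $\tau_{R_Q}\circ{}^*$ — one must untwist the ambient C*-adjoint by the Bogoliubov automorphism of the reflection $R_Q$ in order to remain inside the non-selfadjoint algebra $\GAR{H}{J}{Q}$ — and then verifying that this composite really stabilizes that algebra. Once $R_Q$ is identified, checking $[R_Q,J_Q]=0$ and the *-algebra axioms for $\tau_{R_Q}\circ{}^*$ is routine. An alternative I would mention but not carry out is to build $^\ADJ$ directly on the twisted tensor product $\CAR{QH}{QJ}\,\tilde\otimes\,\GR{Q^\perp H}{Q^\perp J}$, as the C*-adjoint on the fermionic factor combined with the natural complex-conjugation adjoint on the Grassmann factor (the one sending the Grassmann generator $\Lambda(g)$ to $\Lambda(Q^\perp Jg)$); this is essentially the same construction repackaged, trading the bookkeeping of $R_Q$ for that of the graded twist.
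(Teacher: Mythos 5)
Your proof is correct, but it takes a genuinely different route from the paper. The paper works internally: it defines $G(f)^\ADJ:=G(Jf)$ on the open core $\PGAR{H}{J}{Q}$, extends anti-linearly and anti-multiplicatively, and then establishes continuity by exhausting the algebra with the finite-dimensional subalgebras $\GAR{EH}{EJ}{EQ}$ (for finite rank projections $E$ commuting with $Q$ and $J$), arguing that the involution is isometric on each of these and hence on a dense subalgebra, after which it extends to the closure; coincidence with the C*-adjoint on the fermionic part is then checked on the generators $G(Qf)$ exactly as you do. You instead construct the adjoint globally on the enveloping algebra $\ENV{H}{J}{Q}$ as $\tau_{R_Q}\circ{}^*$, where $\tau_{R_Q}$ is the Bogoliubov automorphism of the selfadjoint unitary $R_Q$; your verifications ($R_Q^2=\I$, $[R_Q,J_Q]=0$ via $[Q,J]=0$, the *-algebra axioms, and $G(f)^\ADJ=\tau_{R_Q}(B(QJf\oplus Q^\perp Jf))=G(Jf)$, hence stability of the core and its closure) are all correct. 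What your route buys is that well-definedness and the isometry $\|A^\ADJ\|=\|A\|$ come for free from the C*-structure of $\ENV{H}{J}{Q}$ — in particular you bypass the paper's approximation step and its appeal to the assertion that a bounded involution on a (finite-dimensional) Banach space is isometric, as well as the implicit check that the assignment on generators is compatible with all relations in the core — and the argument is uniform in the dimension of $H$. What the paper's route buys is that it stays entirely inside the GAR algebra, needing no structure of the ambient C*-algebra beyond the norm, at the cost of those density and consistency verifications. The two constructions agree on the generators $G(f)$, so by your own uniqueness argument in part (3) they produce the same adjoint.
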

\begin{proof}
We consider the dense subalgebra $\PGAR{H}{J}{Q}$ in $\GAR{H}{J}{Q}$ that consists of finite sums of
finite products of operators $G(f)$ with $f\in H$. On this dense subalgebra, we define an
anti-linear involution according to $G(f)^\ADJ=G(Jf)$ for $f\in H$. To prove the proposition, we
just have to show that this involution is continuous. For this purpose, we consider a finite rank
projection $E$ that commutes with $Q$ and $J$. We obtain a closed finite dimensional subalgebra
$\GAR{EH}{EJ}{EQ}\subset\PGAR{H}{J}{Q}$ which is closed under the adjoint $^\ADJ$. Therefore,
$^\ADJ$ is bounded on $\GAR{EH}{EJ}{EQ}$ which implies $\|A^\ADJ\|=\|A\|$ for all $A\in
\GAR{EH}{EJ}{EQ}$. Note that any bounded involution on a Banach space is isometric. Since each
operator $A$ in the GAR algebra is a norm convergent limit of a sequence of operators $(A_n)$, where
$A_n\in \GAR{E_nH}{E_nJ}{E_nQ}$ and $(E_n)$ is an increasing sequence of projections that commute
with $J$ and $Q$, we conclude that the adjoint $^\ADJ$ is isometric on a norm dense subalgebra. 
Hence it is norm continuous and can uniquely be extended to the full GAR algebra.
Since we have $G(Qf)^*=B(J_Q(Qf\oplus 0))=G(JQf)=G(Qf)^\ADJ$, we conclude that $^\ADJ$ coincides
with the C*-adjoint on the fermionic part. 
\end{proof}

\subsection{Natural norms on the Grassmann algebra}
\label{subsec:norms-on-Grassmann}
As a Banach *-algebra, the natural norm on the Grassmann algebra is the operator norm that is
induced by the enveloping CAR algebra. In this norm the product is continuous which is the defining
property of a Banach algebra norm. 

As we have discussed in Subsection~\ref{subsec:GAR}, the
anti-symmetric tensor algebra $\ATEN{H}{J}$ over the Hilbert space $H$ with complex conjugation
$J$ can be identified with a norm dense *-subalgebra in $\GR{H}{J}$. In particular, $\ATEN{H}{J}$
can also be identified with a dense subspace of the anti-symmetric Fock space $\AFOCK{H}$. Thus
$\ATEN{H}{J}$ possesses a scalar product $(\lambda,\mu)\mapsto\SPROD{\lambda}{\mu}$. The
norm $\HNO{\lambda}=\sqrt{\SPROD{\lambda}{\lambda}}$, which is induced by the scalar product, will
be called the {\em Fock space norm}. The first observation is, that the Fock space norm, and
hence the scalar product, is continuous with respect to the Banach algebra norm as stated by the
following proposition: 

\begin{prop}
\label{prop:norm-cont}
The norm $\HNO{\cdot}$ is continuous with respect to the underlying
Banach algebra norm $\|\cdot\|$. In particular, the scalar product is continuous and can uniquely be
extended to the full Grassmann algebra.
\end{prop}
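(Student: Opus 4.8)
The plan is to realise the Fock-space norm $\HNO{\cdot}$ as the norm of $\lambda\Omega$ in a Fock representation of the enveloping CAR algebra, and then to dominate it by the C*-norm, using that the Banach-algebra norm of $\GR{H}{J}$ is nothing but the restriction of that C*-norm. First I would unwind the definitions for $Q=0$: here $\ENV{H}{J}{0}=\CAR{H_0}{J_0}$ with $H_0=H\oplus H$ and $J_0(f\oplus h)=Jh\oplus Jf$, the Grassmann algebra $\GR{H}{J}=\GAR{H}{J}{0}$ is the norm-closed subalgebra of $\ENV{H}{J}{0}$ generated by $G(f)=B(f\oplus 0)$, and its dense subalgebra of finite sums of finite products of the $G(f)$ is the antisymmetric tensor algebra $\ATEN{H}{J}$, with $G(f_1)\cdots G(f_k)$ corresponding to $f_1\wedge\cdots\wedge f_k$; as recalled above, $\ATEN{H}{J}$ is in turn a dense subspace of $\AFOCK{H}$ on which $\HNO{\cdot}$ is the Fock-space norm. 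Finally, the orthogonal projection $P_0$ onto $H\oplus 0$ is a basis projection for $(H_0,J_0)$, since a direct check gives $J_0P_0J_0=\I-P_0$; let $\pi_{P_0}$ be the associated Fock representation on $\AFOCK{P_0 H_0}$ with cyclic vacuum $\Omega$.

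The key observation is that $f\oplus 0$ lies in the range $P_0H_0$ of the basis projection, so each $G(f)$ is represented in $\pi_{P_0}$ by the pure creation operator $a^{*}(f\oplus 0)$. Consequently $\pi_{P_0}\bigl(G(f_1)\cdots G(f_k)\bigr)\Omega=(f_1\oplus 0)\wedge\cdots\wedge(f_k\oplus 0)$, which under the canonical isometry $P_0H_0\cong H$ is precisely the Fock vector assigned to $G(f_1)\cdots G(f_k)$ above. Extending by linearity, $\HNO{\lambda}=\|\pi_{P_0}(\lambda)\Omega\|$ for every $\lambda\in\ATEN{H}{J}$; equivalently $\HNO{\lambda}^{2}=\omega_{P_0}(\lambda^{*}\lambda)$, where $\omega_{P_0}$ is the pure quasifree (Fock) state of covariance $P_0$ and $^{*}$ is the C*-adjoint of $\ENV{H}{J}{0}$.

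From here the inequality is immediate. A Fock representation attached to a basis projection is faithful, so $\|\pi_{P_0}(\lambda)\|$ equals the C*-norm of $\lambda$, which is $\|\lambda\|$; hence $\HNO{\lambda}=\|\pi_{P_0}(\lambda)\Omega\|\le\|\pi_{P_0}(\lambda)\|\,\|\Omega\|=\|\lambda\|$ (or, from the second form, $\HNO{\lambda}^{2}=\omega_{P_0}(\lambda^{*}\lambda)\le\|\lambda^{*}\lambda\|=\|\lambda\|^{2}$ by the C*-identity together with $\|\omega_{P_0}\|=1$). Since $\ATEN{H}{J}$ is $\|\cdot\|$-dense in $\GR{H}{J}$, the seminorm $\HNO{\cdot}$ is uniformly $\|\cdot\|$-continuous and so extends uniquely to all of $\GR{H}{J}$; by Cauchy--Schwarz $|\SPROD{\lambda}{\mu}|\le\HNO{\lambda}\HNO{\mu}\le\|\lambda\|\,\|\mu\|$, so the scalar product extends uniquely by continuity as well. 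The only genuine work is the bookkeeping in the middle paragraph --- keeping Araki's self-dual conventions, the choice of basis projection $P_0$, and the identification of the core with its image in $\AFOCK{H}$ mutually consistent; once these are aligned, the proposition follows purely from faithfulness of $\pi_{P_0}$ (equivalently, from $\omega_{P_0}$ being a state) together with the C*-identity.
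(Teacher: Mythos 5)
Your proposal is correct and follows essentially the same route as the paper's own proof: you represent the enveloping CAR algebra $\ENV{H}{J}{0}$ in the Fock representation attached to the basis projection onto $H\oplus 0$, identify $\HNO{\lambda}$ with $\|\pi(\lambda)\Omega\|$ for $\lambda\in\ATEN{H}{J}$, and dominate this by the C*-norm, concluding by density (and Cauchy--Schwarz for the scalar product). The only difference is that you spell out more bookkeeping (the basis-projection check $J_0P_0J_0=\I-P_0$ and the state formulation $\HNO{\lambda}^2=\omega_{P_0}(\lambda^*\lambda)$) than the paper does.
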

\begin{proof}
By the identity $\ATEN{H}{J}=\PGAR{H}{J}{0}$ the enveloping CAR algebra is given by
$\CAR{H\oplus H}{J_0}$, where $J_0$ is the complex conjugation $J_0(h\oplus f)=Jf\oplus Jh$. We
introduce athe basis projection $E$ on $H\oplus H$ according to $E(f\oplus
h):=f\oplus 0$. The corresponding Fock representation of the enveloping
C*-algebra $\pi$ on $\AFOCK{H}$ is faithful. The Grassmann field
operators are represented in terms of the creation operators by
$\pi(\Lambda(f))=\pi(B(f\oplus 0))=c^*(f)$. Clearly, the vectors $c^*(f_1)\cdots
c^*(f_n)\Omega$ span a dense subspace in $\AFOCK{H}$, where $\Omega$ is the corresponding Fock
vacuum vector. Therefore we have for an operator
$\lambda\in\ATEN{H}{J}$:
\begin{equation}
\HNO{\lambda}=\|\pi(\lambda)\Omega\|\leq \|\lambda\| \; .
\end{equation}
Hence the norm $\HNO{\cdot}$ is continuous with respect to $\|\cdot\|$.
\end{proof}

It is worth to mention that both norms are different from each other. In particular, the norm
$\HNO{\cdot}$ is {\em not} a Banach algebra norm. This can be verified by the following counter
example: Take mutually orthogonal vectors $e_1,\cdots, e_6$ in $H$ and consider the operator
$\lambda=e_1\wedge e_2+e_3\wedge e_4+e_5\wedge e_6$. Then we find for the Fock space norm
$\HNO{\lambda}=\sqrt{3}$ whereas a straight forward computation for the Fock space norm of
$\lambda^2$ yields $\HNO{\lambda^2}=2\sqrt{3}>3=\HNO{\lambda}^2$.

\subsection{Nilpotent ideals, positive operators and positive functionals}
\label{subsec:ideals,positivity}
As a *-algebra, $\GAR{H}{J}{Q}$ possesses a natural convex cone of positive elements: The set of
positive linear functionals consits of all linear functionals
$\omega\mathpunct:\GAR{H}{J}{Q}\to\7C$ with $\omega(A^\ADJ A)\geq 0$. The positive cone
$\GAR{H}{J}{Q}_+$ consits of all operators that have positive expectation values for all positive
functionals.

In order to analyze the positivity of operators, we introduce the norm closed two-sided ideal
$\IDL{H}{J}{Q}$ that is genrated by the {\em selfadjoint nilpotent operators} that belong to the
Grassmann part $\GR{Q^\perp H}{Q^\perp J}$. Recall that an operator $Z$ is nilpotent if there exists
$n\in\7N$ with $Z^n=0$. The operators that are given by finite sums 
\begin{equation}
A=\sum_{i=0}^k A_i Z_i
\end{equation}
with selfadjoint nilpotent operators $Z_i\in\GR{Q^\perp H}{Q^\perp J}$ and $A_i\in \GAR{H}{J}{Q}$
form a norm dense subset in $\IDL{H}{J}{Q}$. It s not difficult to see that $A,A^\ADJ$ as well as
$A+A^\ADJ$ are nilpotent operators in the GAR algebra. Here one takes advantage of the fact, that
the graded commutator of any operator with an operator from the Grassmann part vanishes. 

\begin{prop}
\label{prop:positivity}
\begin{enumerate}
\item 
There exists a surjective *-homomorphism $\epsilon_Q\mathpunct:\GAR{H}{J}{Q}\to\CAR{QH}{QJ}$ such
that the identities $\epsilon_Q(Z)=0$ and $\epsilon_Q(A)=A$ are valid for all $Z\in\IDL{H}{J}{Q}$
and for all $A\in\CAR{QH}{QJ}$.

\item
To each positive functionals $\omega$ on the GAR algebra $\GAR{H}{J}{Q}$ there exists a unique
positive functional $\omega'$ on the fermionic part $\CAR{QH}{QJ}$ such that 
\begin{equation}
\omega=\omega'\circ\epsilon_Q \; .
\end{equation}
\end{enumerate}
\end{prop}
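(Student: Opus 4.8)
The plan is to construct the *-homomorphism $\epsilon_Q$ explicitly using the twisted tensor product decomposition of the enveloping CAR algebra, and then to deduce the factorization of positive functionals from it.

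\textbf{Step 1: Construction of $\epsilon_Q$.} Recall from Subsection~\ref{subsec:GAR} that the GAR algebra is isomorphic to the twisted tensor product of the fermionic part $\CAR{QH}{QJ}$ and the Grassmann part $\GR{Q^\perp H}{Q^\perp J}$, via $G(f)\mapsto G(Qf)\otimes\I+\Theta\otimes G(Q^\perp f)$. On the Grassmann part there is a canonical character: the Grassmann algebra $\GR{K}{JK}$ (for $K=Q^\perp H$) has a one-dimensional quotient obtained by sending every generator $G(g)$, $g\in K$, to zero, since $G(g)^2=\tfrac{1}{2}\ACO{G(g)}{G(g)}=\tfrac12\SPROD{Jg}{0}\I=0$. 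Call this character $\delta\mathpunct:\GR{Q^\perp H}{Q^\perp J}\to\7C$; it is a norm-continuous *-homomorphism (continuity follows, as in the proof of Proposition~\ref{prop:adj}, by restricting to finite-dimensional subalgebras where it is a finite-dimensional algebra homomorphism, hence bounded, and then extending by density). Then define $\epsilon_Q:=\8{id}_{\CAR{QH}{QJ}}\otimes\,\delta$ restricted to $\GAR{H}{J}{Q}$. On generators this gives $\epsilon_Q(G(f))=G(Qf)$, so $\epsilon_Q$ fixes the fermionic part and is surjective onto it. One checks $\epsilon_Q$ is a *-homomorphism: it is multiplicative and adjoint-preserving on the open core $\PGAR{H}{J}{Q}$ (using that $\delta$ kills odd elements and respects the grading), hence on all of $\GAR{H}{J}{Q}$ by continuity.

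\textbf{Step 2: Kernel contains $\IDL{H}{J}{Q}$.} Since $\IDL{H}{J}{Q}$ is the norm-closed ideal generated by selfadjoint nilpotent operators in the Grassmann part, and every such operator $Z$ satisfies $\delta(Z)=0$ (any nilpotent element of the Grassmann algebra lies in the augmentation ideal $\ker\delta$, because $\delta$ factors through the commutative quotient and a nilpotent scalar is zero), we get $\epsilon_Q(Z)=0$; and since $\ker\epsilon_Q$ is a closed two-sided ideal, it contains the whole ideal generated by these $Z$. This is exactly the claim $\epsilon_Q(Z)=0$ for all $Z\in\IDL{H}{J}{Q}$, together with $\epsilon_Q(A)=A$ for $A\in\CAR{QH}{QJ}$.

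\textbf{Step 3: Factorization of positive functionals.} Given a positive functional $\omega$ on $\GAR{H}{J}{Q}$, I first show $\omega$ vanishes on $\IDL{H}{J}{Q}$. For a selfadjoint nilpotent $Z$ in the Grassmann part with $Z^n=0$, the spectral-type estimate available in a Banach *-algebra via the Cauchy–Schwarz inequality for the positive sesquilinear form $(A,B)\mapsto\omega(A^\ADJ B)$ gives $|\omega(Z)|^2=|\omega(\I^\ADJ Z)|^2\le\omega(\I)\,\omega(Z^\ADJ Z)=\omega(\I)\,\omega(Z^2)$, and iterating (using that $Z$ commutes with itself and $Z^\ADJ=Z$) bounds $|\omega(Z)|$ by a quantity involving $\omega(Z^{2^m})$, which vanishes once $2^m\ge n$; hence $\omega(Z)=0$. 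Similarly $\omega(AZ)=0$ for all $A$, by applying Cauchy–Schwarz to $\omega(A^\ADJ\cdot Z)$ after noting $Z$ stays nilpotent and central in the graded sense. By continuity $\omega$ vanishes on all of $\IDL{H}{J}{Q}$, so $\omega$ descends through $\epsilon_Q$ to a well-defined linear functional $\omega'$ on $\CAR{QH}{QJ}=\GAR{H}{J}{Q}/\ker\epsilon_Q$ with $\omega=\omega'\circ\epsilon_Q$; positivity of $\omega'$ is immediate since $\epsilon_Q$ is a *-homomorphism, $\omega'(A^*A)=\omega'(\epsilon_Q(A)^\ADJ\epsilon_Q(A))=\omega(A^\ADJ A)\ge0$ for any preimage $A$. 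Uniqueness of $\omega'$ follows from surjectivity of $\epsilon_Q$.

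\textbf{Main obstacle.} The delicate point is Step~3: showing $\omega$ annihilates the nilpotent ideal. The subtlety is that in a Banach *-algebra (not a C*-algebra) one does not have the usual spectral machinery, so one must argue purely from the Cauchy–Schwarz inequality for positive functionals together with the nilpotency and the graded-centrality of the Grassmann elements; making the iteration $|\omega(Z)|\le\omega(\I)^{1-2^{-m}}\omega(Z^{2^m})^{2^{-m}}$ rigorous, and extending from generators $Z$ to the closed ideal $\IDL{H}{J}{Q}$ (in particular handling products $A Z$ and sums $\sum A_iZ_i$), is where the real work lies.
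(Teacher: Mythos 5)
Your Step 3 is essentially the paper's own argument for part 2: iterate the Cauchy--Schwarz inequality on selfadjoint nilpotent elements of the Grassmann part, use that the ideal $\IDL{H}{J}{Q}$ has a dense subspace spanned by such elements (respectively by sums $\sum_i A_iZ_i$) together with continuity of $\omega$, and get uniqueness of $\omega'$ from surjectivity of $\epsilon_Q$. Your Step 1, however, takes a genuinely different route. The paper never introduces the character $\delta$: it constructs $\epsilon_Q$ by showing that the quotient $\GAR{H}{J}{Q}/\IDL{H}{J}{Q}$ is canonically isomorphic to $\CAR{QH}{QJ}$ (the intersection $\CAR{QH}{QJ}\cap\IDL{H}{J}{Q}$ is trivial because a selfadjoint nilpotent element of a C*-algebra is zero, and $G(f)=G(Qf)+G(Q^\perp f)$ with $G(Q^\perp f)$ in the ideal), and then sets $\epsilon_Q=\iota_Q\circ\pi_Q$, so that continuity comes for free from the quotient construction. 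Your slice-map picture $\epsilon_Q=\8{id}\otimes\delta$ is appealing because it exhibits $\epsilon_Q$ concretely as evaluation of the Grassmann ``body'', but it is exactly where your write-up has gaps.

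Concretely: (i) continuity of $\delta$ does not follow from ``bounded on each finite-dimensional subalgebra, extend by density''; without a uniform bound this is a non sequitur (in Proposition~\ref{prop:adj} the uniform bound came from the extra fact that a bounded involution is isometric). A uniform bound is available, e.g. via Proposition~\ref{prop:norm-cont}: $\delta(\lambda)=\SPROD{\I}{\lambda}$ in the Fock inner product, so $|\delta(\lambda)|\le\HNO{\lambda}\le\|\lambda\|$; alternatively, a character of a unital Banach algebra is automatically contractive. (ii) More importantly, boundedness of $\delta$ alone does not give boundedness of the slice map $\8{id}\otimes\delta$ on the twisted tensor product, which you need in order to extend $\epsilon_Q$ from the open core $\PGAR{H}{J}{Q}$ to the closed GAR algebra; ``hence on all of $\GAR{H}{J}{Q}$ by continuity'' presupposes what has to be proved. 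The clean repair is the same device the paper uses for the G-extension $\bS\omega$: note that $\delta$ is the restriction of the Fock vacuum state of the enveloping CAR algebra $\9E$ of the Grassmann part, identify $\ENV{H}{J}{Q}$ as a Banach space with the C*-tensor product $\CAR{QH}{QJ}\otimes\9E$, and use that slicing with a state is completely positive and hence bounded, then restrict to $\GAR{H}{J}{Q}$. (iii) In Step 3 you let $\omega$ ``descend through $\epsilon_Q$'', which tacitly uses $\ker\epsilon_Q=\IDL{H}{J}{Q}$, whereas Step 2 only gives the inclusion $\IDL{H}{J}{Q}\subseteq\ker\epsilon_Q$. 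Either prove the reverse inclusion (each $G(Q^\perp f)$ is a sum of two selfadjoint nilpotents of the Grassmann part, so every open-core element splits as a fermionic element plus an element of the ideal; then use continuity of $\epsilon_Q$), or argue as the paper does: define $\omega'$ as the restriction of $\omega$ to the fermionic part and verify $\omega=\omega'\circ\epsilon_Q$ on the dense open core. With these repairs your construction is correct and gives the same $\epsilon_Q$ as the paper's quotient construction.
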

\begin{proof}
\begin{enumerate}
\item 
In order to prove the existence of $\epsilon_Q$, we show that the quotient algebra
$\GAR{H}{J}{Q}/\IDL{H}{J}{Q}$ is canonically isomorphic to $\CAR{QH}{QJ}$. Let $\pi_Q$ be the
canonical *-homomorphism that projects $\GAR{H}{J}{Q}$ onto $\GAR{H}{J}{Q}/\IDL{H}{J}{Q}$. The
intersection $\CAR{QH}{QJ}\cap\IDL{H}{J}{Q}=\{0\}$ only contains the zero element. Namely, let $Z$
be selfadjoint and nilpotent, then $Z\in \CAR{QH}{QJ}$ implies $Z=0$ since the
only selfadjoint and nilpotent element inside a C*-algebra is the zero operator. For each generator
$G(f)$ of the GAR algebra we have the decomposition $G(f)=G(Qf)+G(Q^\perp f)$ with $G(Q^\perp
f)\in\IDL{H}{J}{Q}$. Thus we conclude $\pi_Q(G(f))=G(Qf)+\IDL{H}{J}{Q}$ and
$\iota_Q\mathpunct:G(Qf)+\IDL{H}{J}{Q}\mapsto
B(Qf)\in\CAR{QH}{QJ}$ is the desired isomorphism. Thus $\epsilon_Q:=\iota_Q\circ\pi_Q$ is a
*-homomorphism that annihilates the ideal $\IDL{H}{J}{Q}$ and acts as the identity on the fermionic
part.

\item
Let $\omega$ be a positive functional on the GAR algebra and let $Z$ be selfadjoint and nilpotent.
Then we can choose $k\in \7N$ such that $Z^n=0$ with $n=2^k$. By iterating the Cauchy-Schwarz
inequality, we conclude $|\omega(Z)|\leq\omega(Z^n)=0$. Thus $\omega$ annihilates all nilpotent
selfadjoint elements. Since the ideal $\IDL{H}{J}{Q}$ possesses a dense subsubspace that is spanned
by nilpotent selfadjoint operators and since $\omega$ is continuous, the ideal $\IDL{H}{J}{Q}$ is
annihilated which implies $\omega=\omega'\circ\epsilon_Q$, where $\omega'$ is the restriction of
$\omega$ to the fermionic part. Thus each positive functional on the GAR algebra is the pull back
of a unique (note that the dual map of $\epsilon_Q$ is injective) positive functional on the
fermionic part via the *-homomorphism $\epsilon_Q$. By construction we have
$\omega(A+Z)=\omega'(\epsilon_Q(A+Z))=\omega'(A)$. 
\end{enumerate}
\end{proof}

\section{On Grassmann integrals}
\label{sec:grasscalc}
Towards the development of a ``harmonic analysis'' on antisymmetric phase space, we review here
the basic concepts of Grassmann calculus, including integration theory. In view of our applications
to fermionic systems, we need to give here a version which at some points differ from the standard
analysis that can be found within the literature. In what follows, we assume here that the
projection $Q$ is of even finite rank $=2n$, and that the rank of
$Q^\perp$ is infinite.

\subsection{G-holomorphic functions}
\label{subsec:G-hol}
In order to perform integration with respect to Grassmann variables we have to consider the 
algebra of functions that can be integrated. These functions are appropriate polynomials of
Grassmann variables $\xi\in\PHSP{H}{J}{Q}_0$ with values in a right module over the ring
$\GR{Q^\perp H}{Q^\perp J}$. If the underlying ring structure is clear from the context we just
briefly say ``right module''. In the following, the right modules $\9E$ under consideration are
assumed to be Banach spaces with a continuous right multiplication.

Roughly, G-holomorphic functions are polynomials in the Grassmann variables. The problem is, that
the polynomial representation is not unique which causes ambiguities in the definition of the
Grassmann integral. In order to overcome this problem, we define $\9E_Q$ to be the norm closed
sub-rightmodule that consits of all elements $X$ such that $X\lambda_1\cdots\lambda_{2n}=0$
holds for all families $\lambda_1,\cdots, \lambda_{2n}$ of $2n$ odd operators in the Grassmann
part ($2n$ is the rank of the projection $Q$). As already mentioned, the definition of the
Grassmann integral is most comfortable in the case where the submodule $\9E_Q=\{0\}$ is trivial. 
In this context, the most important example for such a right module is the GAR algebra for
which the complementry projection $Q^\perp$ has infinite rank. The Grassmann part
is isomorphic to the DeWitt algebra that is build from an infinite number of
anticommuting generators \cite{ChoBruDeWitt89}.

\begin{prop}
\label{prop:reduce}
Let $Q$ be a projection of even and finite rank $2n$ and suppose that the rank of $Q^\perp$ is
infinite. Then the closed subspace $\GAR{H}{J}{Q}_Q=\{0\}$ is trivial. 
\end{prop}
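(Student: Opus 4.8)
The plan is to first strip off the finite-dimensional fermionic matrix factor, reducing the statement to the Grassmann part alone, and then to realise the Grassmann part concretely on antisymmetric Fock space, where the infinite rank of $Q^\perp$ provides ``enough room'' to wedge. Since $Q$ has even finite rank $2n$, the fermionic part is the finite-dimensional matrix algebra $\8M_{2^{2n}}(\7C)$ and, by the matrix description recalled above, $\GAR{H}{J}{Q}\cong\8M_{2^{2n}}(\GR{Q^\perp H}{Q^\perp J})$ via the unique expansion $X=\sum_{ij}E_{ij}X_{ij}$ with entries $X_{ij}$ in the Grassmann part. Because the Grassmann part is a subalgebra, right multiplication by any element $\mu$ of it acts entrywise, $X\mu=\sum_{ij}E_{ij}(X_{ij}\mu)$, and likewise for products $\mu=\Lambda(f_1)\cdots\Lambda(f_{2n})$. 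Hence $X\in\GAR{H}{J}{Q}_Q$ would force $X_{ij}\Lambda(f_1)\cdots\Lambda(f_{2n})=0$ for every $i,j$ and all vectors $f_1,\dots,f_{2n}\in K:=Q^\perp H$; so it suffices to prove the Grassmann version: if $A\in\GR{K}{J}$ satisfies $A\Lambda(f_1)\cdots\Lambda(f_{2n})=0$ for all $f_\bullet\in K$, then $A=0$ (equivalently, a nonzero $A$ admits some $f_\bullet$ with $A\Lambda(f_1)\cdots\Lambda(f_{2n})\neq 0$).

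For the Grassmann statement I would use the faithful Fock representation $\pi$ of the enveloping CAR algebra of $\GR{K}{J}$ on $\AFOCK{K}$ (as in the proof of Proposition~\ref{prop:norm-cont}, with $K$ in the role of $H$), so that $\pi(\Lambda(f))=c^*(f)$ and $\pi(\Lambda(f))\Omega=f$. Consider the map $\rho\mathpunct:\GR{K}{J}\to\AFOCK{K}$, $\rho(A):=\pi(A)\Omega$. On the dense subalgebra $\ATEN{K}{J}$ this is just the canonical inclusion into Fock space, which is multiplicative for the wedge product; together with $\HNO{\rho(A)}\le\|A\|$ and the boundedness on $\AFOCK{K}$ of right wedge multiplication by a fixed vector (up to sign a creation operator), an approximation argument yields $\rho(A\Lambda(f_1)\cdots\Lambda(f_{2n}))=\rho(A)\wedge f_1\wedge\cdots\wedge f_{2n}$ for every $A$. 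I also need $\rho$ to be injective: if $\pi(A)\Omega=0$, split $A=A_0+A_1$ into even and odd parts, whose $\rho$-images lie in the orthogonal even and odd Fock subspaces, so $\pi(A_0)\Omega=\pi(A_1)\Omega=0$; then by graded commutativity of the Grassmann algebra (even central, odd anticommuting with odd) one gets $\pi(A_i)\pi(\mu)\Omega=\pm\pi(\mu)\pi(A_i)\Omega=0$ for all $\mu\in\ATEN{K}{J}$, hence $\pi(A)=0$ on a dense subspace and, being bounded, $\pi(A)=0$, so $A=0$.

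Now suppose $A\neq 0$. Then $\rho(A)\neq 0$, so in the orthogonal decomposition $\rho(A)=\sum_{k\ge m}\psi_k\in\bigoplus_k\wedge^k K$ the lowest component $\psi_m\in\wedge^m K$ is nonzero. Fixing an orthonormal basis $(e_i)$ of $K$, some monomial $e_{i_1}\wedge\cdots\wedge e_{i_m}$ has nonzero coefficient in $\psi_m$; since $K$ is infinite-dimensional I may pick $2n$ further basis indices $j_1,\dots,j_{2n}$ disjoint from $\{i_1,\dots,i_m\}$ and set $f_l:=e_{j_l}$. Then $\psi_m\wedge f_1\wedge\cdots\wedge f_{2n}$ still contains the monomial $e_{i_1}\wedge\cdots\wedge e_{i_m}\wedge e_{j_1}\wedge\cdots\wedge e_{j_{2n}}$ with that same nonzero coefficient, so it is nonzero; and it is precisely the degree-$(m+2n)$ component of $\rho(A)\wedge f_1\wedge\cdots\wedge f_{2n}$, so the latter is nonzero. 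By the identity of the previous paragraph and injectivity of $\rho$ this gives $A\Lambda(f_1)\cdots\Lambda(f_{2n})\neq 0$, contradicting the hypothesis; hence $A=0$, and tracing back through the reduction, $\GAR{H}{J}{Q}_Q=\{0\}$.

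The delicate point is this last step together with the multiplicativity of $\rho$: although the full $\7Z$-grading of the Grassmann algebra does not survive the Banach completion, its image in Fock space genuinely splits by degree, and it is exactly the infinite rank of $Q^\perp$ that always allows one to find ``fresh'' directions to wedge with without annihilating the lowest-degree part — precisely where a finite-dimensional $Q^\perp$ would fail, since a top-degree form is killed by any further factor.
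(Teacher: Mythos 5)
Your proof is correct, but it follows a genuinely different route from the paper's. The paper argues entirely at the Banach-algebra level: it decomposes $\GAR{H}{J}{Q}$ as a twisted tensor product with respect to an auxiliary finite-rank projection $E$ commuting with $Q$ and $J$, uses that the norm is a cross norm to get $\|A\lambda_1\cdots\lambda_{2n}\|=\|A\|$ exactly for $A\in\GAR{EH}{EJ}{EQ}$ and norm-one odd products $\lambda_1\cdots\lambda_{2n}$ taken from the ``fresh'' infinite-dimensional part $\GR{E^\perp Q^\perp H}{E^\perp Q^\perp J}$, and then kills a general nonzero $A$ by an $\epsilon$-approximation with elements of such finite-dimensional subalgebras. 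You instead strip off the fermionic factor via the matrix-unit expansion $X=\sum_{ij}E_{ij}X_{ij}$ (so you lean on the remark in Subsection~\ref{subsec:GAR} that this expansion exists and is unique on the completed algebra), reduce to a purely Grassmann statement, and settle that in the Fock representation: injectivity of $A\mapsto\pi(A)\Omega$ (via the parity split and graded commutativity), multiplicativity of this map with respect to right wedging, and the lowest-degree-component argument with $2n$ basis directions disjoint from a chosen monomial. Both proofs use the infinite rank of $Q^\perp$ in the same spirit — there is always room to multiply without hitting top degree — but they buy different things: the paper's approach is representation-free and quantitative (right multiplication by suitable odd products is isometric on finite-dimensional subalgebras), while yours yields a sharper conclusion, namely that vanishing of $A\Lambda(f_1)\cdots\Lambda(f_{2n})$ for products of \emph{generators} alone already forces $A=0$, and it makes the degree-counting mechanism explicit; the price is the reliance on the matrix-algebra identification and on faithfulness of the Fock representation, both of which are indeed available in the paper (the latter is used in the proof of Proposition~\ref{prop:norm-cont}). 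Your handling of the two delicate points — continuity of $\rho$ and of right wedging to pass from $\ATEN{Q^\perp H}{Q^\perp J}$ to the completion, and the absence of cancellation in the degree-$(m+2n)$ component because only one monomial of $\psi_m$ can produce the chosen index set — is sound.
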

\begin{proof}
Let $E$ be  a finite rank projection that commutes with $Q$ and $J$. Then the GAR algebra
$\GAR{EH}{EJ}{EQ}$ is a finite dimensional subalgebra of $\GAR{H}{J}{Q}$. Using the enveloping CAR
algebra, the full GAR algebra is isomrphic to the twisted tensor product
$\GAR{H}{J}{Q}=\GAR{EH}{EJ}{EQ}\tilde\otimes\GAR{E^\perp H}{E^\perp J}{E^\perp J}$ where
$\tilde\otimes$ denotes the twisted tensor product. Note that the Banach algebra norm is a cross
norm with respect to the twisted tensor product. Since $Q^\perp$ is infinite dimensional and $E$
is finite dimensional, we conclude that $E^\perp Q^\perp=Q^\perp E^\perp$ is infinite dimensional.
For each $A\in\GAR{EH}{EJ}{EQ}$ with $\|A\|>0$ we can choose odd Grassmann operators
$\lambda_1,\cdots,\lambda_{2n}\in\GR{E^\perp Q^\perp H}{E^\perp Q^\perp J}_1$ such that
$\|\lambda_1\cdots\lambda_{2n}\|=1$. This implies that
$\|A\lambda_1\cdots\lambda_{2n}\|=\|A\|\|\lambda_1\cdots\lambda_{2n}\|=\|A\|$. Here we have used the
fact that $A\lambda_1\cdots\lambda_{2n}\cong A\tilde\otimes\lambda_1\cdots\lambda_{2n}$. For each
non-zero $A\in\GAR{H}{J}{Q}$ and for each $\epsilon$ with $\|A\|>\epsilon>0$ we can find a finite
rank projection $E$ that commutes with $Q$ and $J$ and an operator $A_\epsilon\in \GAR{EH}{EJ}{EQ}$
with $\|A_\epsilon\|=\|A\|$ such that $\|A-A_\epsilon\|\leq \epsilon$. Again we can find
odd Grassmann operators $\lambda_1,\cdots,\lambda_{2n}\in\GR{E^\perp Q^\perp H}{E^\perp Q^\perp
J}_1$ such that
$\|\lambda_1\cdots\lambda_{2n}\|=1$. Suppose now that $A\lambda_1\cdots\lambda_{2n}=0$ then we
conclude $\|A\lambda_1\cdots\lambda_{2n}-A_\epsilon\lambda_1\cdots\lambda_{2n}
\|=\|A_\epsilon\lambda_1\cdots\lambda_{2n}\|=\|A\|\leq \epsilon$ which contradicts the assumption
$\|A\|>\epsilon$. Therefore $A\lambda_1\cdots\lambda_{2n}\not=0$ which implies that the subspace
$\GAR{H}{J}{Q}_Q=\{0\}$ is trivial.
\end{proof}

The vector space $\GHOL{H}{J}{Q}{\9E}$ of {\em G-holomorphic functions} with values in a right
module $\9E$ consists of all functions from $\PHSP{H}{J}{Q}_0$ into $\9E$ which can be
build from linear combinations of monomial functions
\begin{equation}
\xi\mapsto X\xi_{u_1}\cdots\xi_{u_n}
\end{equation}
with $X\in\9E$ and $u_1,\cdots, u_n\in QH$. Here the ``$u$-component'' of $\xi$ is  defined as
$\xi_u:=\RIGG{u\otimes\I}{\xi}{Q}$. The algebra (ring) of ``Grassmann-valued'' G-holomorphic
functions $\RHOL{H}{J}{Q}$ is
the algebra that is generated by the functions $\xi\mapsto \xi_u$ with $u\in QH$. The vector space,
as defined above, is canonically equipped with a right module structure. For a G-holomorphic
function we define the corresponding action by $(F\cdot\lambda)(\xi):=F(\xi)\lambda$. The
G-holomorphic functions $\GHOL{H}{J}{Q}{\9E}$ with values in the right module $\9E$ are also
equipped with a $\RHOL{H}{J}{Q}$ right module structure. Indeed, the space $\GHOL{H}{J}{Q}{\9E}$ is
the right module over $\RHOL{H}{J}{Q}$, generated by $\9E$. Each G-holomorphic function admits a
polynomial representation induced by a real orthonormal basis $(e^i)_{i\in N}$ of $QH$ that
is indexed by the ordered set
$N=\{1,\cdots,2n\}$. Any vector $\xi\in\PHSP{H}{J}{Q}_0$ can be expanded
in this basis as $\xi=\sum_i\Lambda^i\xi_i$ with $\xi_i\in\GR{Q^\perp H}{Q^\perp J}_1$. The
corresponding polynomial expansion of a G-holomorphic function $F$ is given by 
\begin{equation}
F(\xi)=\sum F^I \ \xi_I
\end{equation}
with coefficients $F^I$ in the right module $\9E$. The monomial $\xi_I$ which is
associated to an ordered subset $I=\{i_1<i_2<\cdots<i_k\}\subset N$ is given by
$\xi_I:=\xi_{i_1}\cdots\xi_{i_k}$. In particular, since $Q$ has finite rank $2n$, each G-homomorphic
function can be expressed as a finite sum of monomials, i.e., there is no problem
concerning convergence. 

There is an interesting connection between G-holomorphic
functions and right module homomorphisms. To make this point clear, we observe that the Grassmann
algebra
$\GR{H}{J}$ possesses a natural right module structure over $\GR{Q^\perp H}{Q^\perp J}$ by
right multiplication $a\mapsto a\lambda$ with $a\in\GR{H}{J}$ and $\lambda\in\GR{Q^\perp H}{Q^\perp
J}$. We denote by $\RHOM{H}{J}{Q}{\9E}$ the Banach space of bounded right module homomorphisms $R$
from $\GR{H}{J}$ into $\9E$, i.e. $R$ is complex linear and fulfills the condition
$R(a\lambda)=R(a)\lambda$ for $a\in\GR{H}{J}$ and $\lambda\in\GR{Q^\perp H}{Q^\perp J}$. A
particular case is here the Banach space of the right module homomorphisms with values in
$\GR{Q^\perp H}{Q^\perp J}$ which will be denoted by $\REND{H}{J}{Q}$. Clearly, the space of right
module homomorphism is a right module itself according to the following definition: An
operator $\lambda\in\GR{Q^\perp H}{Q^\perp J}$ acts on a right module homomorphism $R$
as $(R\cdot\lambda)(a):=R(\lambda a)$. 

For a G-holomorphic function $F\in\GHOL{H}{J}{Q}{\9E}$ we consider the subset
$\9R_QF\subset\RHOM{H}{J}{Q}{\9E}$ that consists of all right module homomorphisms $R$ such that
$F(\xi)=R(\8e^\xi)$ holds.  To prepare our the definition of the Grassmann integral, we consider an
operator $v\in\GR{QH}{QJ}$ is called {\em a form of highest degree
with respect to $Q$} if $v \Lambda(h)=0$ for all $h\in QH$.  If it is clear from the
context to which projection $Q$ we are referring, we shortly say that $v$ is a form of highest
degree. In general we say that an operator is {\em a $k$-form with respect to $Q$} if it is a
linear combination of operators of the form
$\Lambda(Qf_1)\cdots\Lambda(Qf_k)=Qf_1\wedge\cdots\wedge Qf_k$. Let $n$ be the rank of the
projection $Q$, then the subspace of $k$-forms has dimension $n \choose k$ and the 
subspace of forms of highest degree ($n$-forms) is a one-dimensional.

\begin{prop}
\label{prop:unique}
Let $F\in\GHOL{H}{J}{Q}{\9E}$ be a G-holomorphic function and let 
$R_1,R_2\in\9R_QF\subset\RHOM{H}{J}{Q}{\9E}$ be two right module homomorphisms. Moreover, we
assume that the submodule $\9E_Q=\{0\}$ is trivial. Then the identity 
\begin{equation}
\label{equ:quotient}
R_1(v)=R_2(v)
\end{equation}
holds for all forms $v$ of highest degree. 
\end{prop}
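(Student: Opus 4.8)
The plan is to pass to the difference $R:=R_1-R_2$, which is again a (bounded) right module homomorphism $\GR{H}{J}\to\9E$ and, since $R_1,R_2\in\9R_QF$, satisfies $R(\8e^\xi)=0$ for every $\xi\in\PHSP{H}{J}{Q}_0$. Fix a real orthonormal basis $(e^i)_{i\in N}$ of $QH$ with $N=\{1,\dots,2n\}$ and write $\Lambda^i:=\Lambda(e^i)$ and $\Lambda^N:=\Lambda^1\cdots\Lambda^{2n}=e^1\wedge\cdots\wedge e^{2n}$. By the remark preceding the proposition the subspace of forms of highest degree with respect to $Q$ is one-dimensional, and it is spanned by $\Lambda^N$ (indeed $\Lambda^N\Lambda(h)=0$ for all $h\in QH$, since $\Lambda(h)$ is a linear combination of the $\Lambda^j$ and each $\Lambda^j$ already occurs in $\Lambda^N$, while $(\Lambda^j)^2=\tfrac12\ACO{\Lambda^j}{\Lambda^j}=0$). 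Hence it suffices to show $R(\Lambda^N)=0$, for then $R_1(v)=R_2(v)$ for every $v=c\Lambda^N$ by complex linearity.

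Next I would expand the exponential. Writing $\xi=\sum_i\Lambda^i\xi_i$ with $\xi_i\in\GR{Q^\perp H}{Q^\perp J}_1$, each summand $\Lambda^i\xi_i$ is a product of two odd elements of $\GR{H}{J}$, hence even and therefore central (the Grassmann algebra is graded-commutative); moreover $(\Lambda^i\xi_i)^2=-(\Lambda^i)^2\xi_i^2=0$. Consequently
\[
\8e^\xi=\prod_{i=1}^{2n}\bigl(1+\Lambda^i\xi_i\bigr)=\sum_{I\subseteq N}\varepsilon_I\,\Lambda^I\xi_I,
\]
where $\Lambda^I:=\Lambda^{i_1}\cdots\Lambda^{i_k}$ and $\xi_I:=\xi_{i_1}\cdots\xi_{i_k}$ for $I=\{i_1<\dots<i_k\}$, and $\varepsilon_I=\pm1$ is the reordering sign obtained by moving all $\Lambda$'s to the left (it depends only on $|I|$, and $\varepsilon_N\neq0$). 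Since $\xi_I\in\GR{Q^\perp H}{Q^\perp J}$ and $R$ is a right module homomorphism,
\[
0=R(\8e^\xi)=\sum_{I\subseteq N}\varepsilon_I\,R(\Lambda^I)\,\xi_I .
\]

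Finally I would isolate the top coefficient by a scaling argument. For arbitrary $t_1,\dots,t_{2n}\in\7C$ the vectors $t_i\xi_i$ again lie in $\GR{Q^\perp H}{Q^\perp J}_1$, so substituting them gives $\sum_{I\subseteq N}\varepsilon_I\,R(\Lambda^I)\,\xi_I\,\prod_{i\in I}t_i=0$ for all $(t_1,\dots,t_{2n})\in\7C^{2n}$. This is a polynomial identity in $t_1,\dots,t_{2n}$ with coefficients in the vector space $\9E$, of degree at most one in each variable; comparing the coefficient of the monomial $t_1t_2\cdots t_{2n}$ — to which only the term $I=N$ contributes — yields $\varepsilon_N\,R(\Lambda^N)\,\xi_1\cdots\xi_{2n}=0$, hence $R(\Lambda^N)\,\xi_1\cdots\xi_{2n}=0$. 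As $\xi_1,\dots,\xi_{2n}$ were arbitrary odd elements of the Grassmann part, this means precisely $R(\Lambda^N)\in\9E_Q$, and the hypothesis $\9E_Q=\{0\}$ forces $R(\Lambda^N)=0$.

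The only step that is genuinely at stake here — and the point where the hypothesis is essential — is this last one: without triviality of $\9E_Q$ one would only learn that $R_1-R_2$ sends the top form into $\9E_Q$, i.e. that the would-be Grassmann integral is well defined merely modulo $\9E_Q$. Everything else is routine bookkeeping with the graded-commutative structure (centrality of even elements, nilpotency of the $\Lambda^i$) to get the product expansion of $\8e^\xi$, together with an elementary Vandermonde-type argument to extract the leading coefficient of a multilinear polynomial identity.
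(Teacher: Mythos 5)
Your proof is correct and follows essentially the same route as the paper: expand $\8e^\xi=\sum_I\Lambda^I\xi_I$ in a real basis of $QH$, apply $D=R_1-R_2$ and the right module property to get $\sum_I D(\Lambda^I)\xi_I=0$, isolate the top term, and use $\9E_Q=\{0\}$ to conclude $D(v)=0$ for $v$ proportional to $\Lambda^N$. The only difference is cosmetic: your scaling (Vandermonde-type) argument makes explicit the coefficient-extraction step that the paper's proof asserts without detail, and you track reordering signs where the paper absorbs them by ordering $\Lambda^I$ with decreasing indices.
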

\begin{proof}
Let $(e^i)_{i\in N}$ be a real basis of $QH$, indexed be the ordered set $N=\{1,2,\cdots,n\}$. For
each ordered subset $I\subset N$ we introduce the operator
$\Lambda^I=\Lambda(e^{i_k})\cdots\Lambda(e^{i_1})$ with $I=\{i_1<i_2<\cdots <i_k\}$ where we put
$\Lambda^\emptyset=\I$. Since $(\Lambda^I)_{I\subset N}$ is a basis of $\GR{QH}{QJ}$, the operator
$\8e^\xi$ can be expanded as $\sum \Lambda^I\xi_I$ with $\xi_I=\xi_{i_1}\cdots\xi_{i_k}$. It
follows from $F(\xi)=R_1(\8e^\xi)=R_2(\8e^\xi)$, that we obtain for the right module homomorphism
$D=R_1-R_2$
\begin{equation}
\sum_{I\subset N}D(\Lambda^I)\xi_I=0
\end{equation}
for all $\xi\in \PHSP{H}{J}{Q}$. From this we conclude that for each $I\subset N$ the identity 
\begin{equation}
D(\Lambda^I)\xi_I=0
\end{equation}
holds for all $\xi\in \PHSP{H}{J}{Q}$. Since $v$ is an operator of highest degree with respect to
$Q$ we have $v=v_N\Lambda^N$ which implies (\ref{equ:quotient}).
\begin{equation}
D(v)\xi_N=0
\end{equation}
for all $\xi$. But then $D(v)$ is contained in the submodule in $\9E_Q=\{0\}$ which implies
(\ref{equ:quotient}).
\end{proof}

\subsection{Definition of the Grassmann integral and some basic properties}
In the following discussion, we only consider right modules $\9E$ for which the the submodule
$\9E_Q=\{0\}$ is trivial. Let $v\not=0$ be an non-zero form of highest degree. The Grassmann
integral of $F$ with respect to
$v$ is defined according to  
\begin{equation}
\int_Q v(\xi) \ F(\xi):= R(v)
\end{equation}
with an right module homomorphism $R\in \9R_QF$. Note that by Proposition~\ref{prop:unique}, this
definition only depends on the G-holomorphic function itself. The notation for the integral, as we
use it, suggests to interprete the form of highest degree
$v$ as a volume form that is integrated over a non-commutative
space of Grassmann variables. The projection $Q$ is interpreted as the realm of integration whose
dimension is presisely the rank of $Q$.
The symbolic expression $v(\xi) \ F(\xi)$ is then a volume form with values in
the right module $\9E$, evaluated at $\xi$. 

Let $(e^i)_{i\in N}$ be a real orthonormal basis of $QH$. Each G-holomorphic function $F$ with
values in $\9E$ can be expanded with respect to this basis as $F(\xi)=\sum_{I\subset
N}F^I \xi_I$. Moreover, a form of highest degree $v\not=0$ can be expressed in terms of
this basis by $v=v_N\Lambda^N$.

This yields for the integral 
\begin{equation}
\int_Q v(\xi) \ F(\xi)=\sum_{I\subset N} F^Iv_I=v_{N} F^{N} \; .
\end{equation}
This shows that our definition of the
Grassmann integral is equivalent to the standard definition that can be found in the literature, see
for instance \cite{VladVol84,VladVol84b}. 

The following proposition lists some basic and well known properties of the Grassmann integral. We
also give here the proof, since our formalism (although equivalent) is a bit different from the one
that can be found in the literature. 

\begin{prop}
\label{prop:GrInt}
Let $v$ be an non-zero form of highest degree with respect to a projection $Q$ of rank
$2n$. Then the Grassmann integral has the following properties:

\begin{enumerate}

\item 
The Grassmann integral is translationally invariant: For each G-holomorphic function $F$ with values
in $\9E$ the identity 
\begin{equation}
\int_Q v(\xi) \ F(\xi)= \int_Q {v}(\xi) \ F(\xi+\eta) 
\end{equation}
holds for all $\eta\in \PHSP{H}{J}{Q}_0$.

\item
For a G-holomorphic function $F$ with values in a right module $\9E$ the
identity 
\begin{equation}
\int_Q v \ F\cdot \lambda = \left[\int_Q {v} \ F \right]\lambda 
\end{equation}
holds for all $\lambda\in \GR{Q^\perp H}{Q^\perp J}$.

\item
Let $F$ be a G-holomorphic function with values in a right module $\9E$ and let
$T\mathpunct:\9E\to\9E'$ a right module homomorphism. For
each G-holomorphic function $F$ with values in $\9E$ the function $TF\mathpunct:\xi\mapsto
T(F(\xi))$ is G-holomorphic with values in $\9E'$ and the identity 
\begin{equation}
\int_Q v \ TF = T\left( \ \int_Q v F \right)
\end{equation}
holds. 

\end{enumerate}
\end{prop}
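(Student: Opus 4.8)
The plan is to verify each of the three properties by passing to a concrete polynomial representation in a real orthonormal basis $(e^i)_{i\in N}$ of $QH$ with $N=\{1,\dots,2n\}$, and then invoking the formula $\int_Q v(\xi)\,F(\xi)=v_N F^N$ established just before the statement. Since $v=v_N\Lambda^N$ and the coefficient $F^N$ is well-defined on the right module $\9E$ with $\9E_Q=\{0\}$ (Proposition~\ref{prop:unique}), each claimed identity reduces to an identity about the top coefficient $F^N$ of the relevant G-holomorphic function. So the first step is to fix such a basis and record the expansion $F(\xi)=\sum_{I\subset N}F^I\xi_I$ with $\xi_I=\xi_{i_1}\cdots\xi_{i_k}$.

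For property (ii), linearity and right-module structure, I would observe that the expansion of $F\cdot\lambda$ is simply $\sum_I F^I\xi_I\lambda$; but the $\xi_I$ are products of odd Grassmann generators that commute (gradedly) with nothing in particular, so I must be slightly careful: $\xi_I$ is a monomial of degree $|I|$ in odd elements of $\GR{Q^\perp H}{Q^\perp J}_1$, hence $\xi_I$ is even when $|I|$ is even. For $I=N$ (with $2n$ even) $\xi_N$ is even and central enough that $\xi_N\lambda=\lambda$-commuting issues do not arise in the final coefficient extraction; more cleanly, the polynomial representation of $F\cdot\lambda$ has top coefficient $F^N\lambda$ (the $\xi_I$ sit to the left of $\lambda$ by the definition $(F\cdot\lambda)(\xi)=F(\xi)\lambda$, and one only moves $\lambda$ past nothing). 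Then $\int_Q v\,(F\cdot\lambda)=v_N(F^N\lambda)=(v_NF^N)\lambda=\left[\int_Q v\,F\right]\lambda$. For property (iii), since a right module homomorphism $T$ is in particular linear, $TF$ has polynomial expansion $\sum_I T(F^I)\xi_I$ — here one uses that $\xi_I\in\GR{Q^\perp H}{Q^\perp J}$ so $T(F^I\xi_I)=T(F^I)\xi_I$ by the right-module-homomorphism property — hence $TF$ is G-holomorphic with values in $\9E'$, its top coefficient is $T(F^N)$, and $\int_Q v\,TF=v_N T(F^N)=T(v_NF^N)=T\!\left(\int_Q v\,F\right)$.

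For property (i), translational invariance, the key computation is the change of the top coefficient under $\xi\mapsto\xi+\eta$. Writing $\eta=\sum_i\Lambda^i\eta_i$ and expanding $(\xi+\eta)_I=\prod_{j\in I}(\xi_j+\eta_j)$, the top monomial $\xi_N$ in $F(\xi+\eta)=\sum_I F^I(\xi+\eta)_I$ can only come from the term $I=N$ and, within that term, only from the summand in which every factor $(\xi_j+\eta_j)$ contributes its $\xi_j$ part; every other choice yields a monomial in which at least one $\xi_j$ is replaced by $\eta_j$, which (after re-expanding $\eta_j$ in the basis $\Lambda^i$) produces a term whose ``$\xi$-degree'' in the $N$-indexing is strictly less than $2n$, hence contributes $0$ to the top coefficient. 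Therefore the top coefficient of $\xi\mapsto F(\xi+\eta)$ is again $F^N$, and $\int_Q v\,F(\cdot+\eta)=v_N F^N=\int_Q v\,F$.

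The main obstacle, and the only place where genuine care is needed, is property (i): one must argue rigorously that shifting by $\eta$ does not alter the top coefficient $F^N$. The clean way to do this, avoiding bookkeeping with signs and reorderings of anticommuting monomials, is to use the exponential picture from Proposition~\ref{prop:unique}: pick $R\in\9R_QF$ with $F(\xi)=R(\8e^\xi)$, note that the shifted function is $\xi\mapsto R(\8e^{\xi+\eta})=R(\8e^\eta\8e^\xi)$ (the factors $\8e^\eta,\8e^\xi$ live in the commutative-up-to-sign Grassmann algebra $\GR{QH}{QJ}$, and one checks $\8e^{\xi+\eta}=\8e^\xi\8e^\eta$ up to a central Grassmann scalar from $\sigma(\xi,\eta)$, which one handles by expanding), so the shifted function is represented by $R'=R\cdot$(left multiplication by a fixed Grassmann element of the form $\8e^\eta$). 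Since $\8e^\eta=\I+(\text{terms of positive degree in }QH)$ and $v$ is of highest degree, $v\,\8e^\eta=v$ (all positive-degree corrections annihilate $v$ because $v\Lambda(h)=0$). Hence $R'(v)=R(\8e^\eta v)$... — one must be slightly careful about left versus right multiplication here, so the actual execution will route through the explicit coefficient argument sketched above rather than this shortcut, but the underlying reason translational invariance holds is exactly that $v$ kills everything of positive $QH$-degree.
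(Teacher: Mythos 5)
Your proof is correct, but it takes a genuinely different route from the paper's. You reduce all three statements to top-coefficient bookkeeping in a fixed real orthonormal basis, using the coordinate formula $\int_Q v\,F=v_N F^N$ (legitimate here because the standing assumption $\9E_Q=\{0\}$ makes the top coefficient unique), and for translation invariance you check that any term in which some $\xi_j$ is replaced by the fixed odd coefficient $\eta_j$ has strictly smaller $\xi$-degree after absorbing the $\eta_j$'s into the module coefficient, so the coefficient of $\xi_N$ is unchanged. The paper instead argues entirely at the level of the representing right module homomorphisms $R\in\9R_QF$ that define the integral: for (i) it sets $(\tau_\eta R)(a):=R(\8e^{\eta}a)$ and uses $\8e^{\eta}v=v$ (since $v\Lambda(h)=0$ and $\eta$ is even, so $\eta v=v\eta=0$); for (ii) it uses $(R\cdot\lambda)(a):=R(\lambda a)$ together with $\lambda v=v\lambda$ ($v$ is even) and $R(v\lambda)=R(v)\lambda$; for (iii) it uses $T\circ R\in\9R_Q(TF)$. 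This abstract route avoids all sign and reordering bookkeeping, and it also dissolves the two worries you flag in your aside: since $\xi$, $\eta$, $\8e^{\xi}$, $\8e^{\eta}$ and $v$ are all even elements of the graded-commutative Grassmann algebra $\GR{H}{J}$, they commute exactly, so $\8e^{\xi+\eta}=\8e^{\xi}\8e^{\eta}$ with no cocycle correction (the form $\sigma$ only enters for the Grassmann-Weyl operators $\1w(\xi)=\exp(\Phi(\xi))$ in the GAR algebra, not for exponentials inside $\GR{H}{J}$), and there is no left-versus-right ambiguity; your abandoned ``shortcut'' is in fact precisely the paper's proof of (i). Your coordinate argument is more elementary and equally valid; the small blemishes (the phrase about re-expanding $\eta_j$ in the basis $\Lambda^i$ --- the $\eta_j$ are already coefficients in the Grassmann part --- and the vague ``central enough'' in (ii), which is fine because $\xi_N$ has even parity and is therefore genuinely central) do not affect correctness.
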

\begin{proof}
Recall that we have introduced the Grassmann integral with help of the space of right module
homomorphisms $\9R_QF$. 
\begin{enumerate}
\item
For a right module homomorphism $R\in \9R_QF$, we obtain a right module homomorphism $\tau_\eta
R\in\9R_Q(\tau_\eta F)$ by putting $(\tau_\eta R)(a):=R(\exp(\eta)a)$.
For an operator $v$ of highest degree the integral of the translated function can be
calculated by $\int_Q v \ \tau_\eta F= \tau_\eta R(v)=R(\exp(\eta)v)$. Since
$v\Lambda(h)=0$ for all
$h\in QH$, it follows that  $v\eta=0$ for all $\eta\in\PHSP{H}{J}{Q}$. This implies
$\exp(\eta)v=v$ which yields the desired relation $\int_Q v  \ \tau_\eta F= R(v)=\int_Q v \
F$.

\item
Let $F$ be a G-holomorphic function with values in $\9E$. Then for $\lambda\in\GR{Q^\perp
H}{Q^\perp J}$ a right module homomorphism in $\9R_Q(F\cdot\lambda)$ is simply given by
$(R\cdot\lambda)(a)=R(\lambda a)$ with $R\in\9R_QF$. Thus we obtain for an operator $v$ of highest
degree
$\int_Q v F\cdot\lambda=R(\lambda v)=R(v\lambda)=R(v)\lambda=[\int_Q
v F] \lambda$.

\item
Let $T\mathpunct:\9E\to\9E'$ be a right module homomorphism, then it is obvious that
for a G-holomorphic function $F$ with values in $\9E$, the function
$TF\mathpunct:\xi\mapsto T(F(\xi))$ is G-holomorphic with values in $\9E'$. Since the identity
$(T\circ R)(\exp(\xi))=TF(\xi)$ is valid for all $R\in\9R_QF$, we conclude that $T\circ R\in
\9R_QTF$ which implies
for an operator $v$ of highest degree: $\int_Q v \ TF=(T\circ
R)(v)=T(R(v))=T(\int_Q v \ F)$.
\end{enumerate}
\end{proof}

To treat multiple Grassmann integration, we have to say what is a G-holomorphic function in several
variables: A function on $\PHSP{H}{J}{Q}^n$ with values in $\9E$ is called G-holomorphic if for
each $j$ the function
\begin{equation}
\xi\mapsto F(\xi_1,\cdots,\xi_{j-1},\xi,\xi_{j+1},\cdots,\xi_n)
\end{equation}
is G-holomorphic. For our purpose, it is sufficient to consider the case $n=2$. For a G-holomorphic
function on $\PHSP{H}{J}{Q}^2$ with values in a right module $\9E$ we obtain two 
ordinary G-holomorphic functions $F_\eta$ and $F^\xi$ according to 
\begin{equation}
F^\xi(\eta):=F_\eta(\xi):=F(\eta,\xi) \; .
\end{equation}
The following proposition can be used for exchanging the order of multiple Grassmann integrations.

\begin{prop}
\label{prop:MultInt}
Let $F$ be a G-holomorphic function on $\PHSP{H}{J}{Q}^2$ with values in $\9E$ and let 
$v,w$ be forms of highest degree with respect to the even rank projection $Q$. Then the functions
$F^v\mathpunct:\xi\mapsto \int_Q v \ F^\xi$ and $F_w\mathpunct:\eta\mapsto \int w \ F_\eta$ are
G-holomorphic and the order of integration can be exchanged: 
\begin{equation}
\int_{Q} v \ F_w=\int_{Q} w \ F^v
\end{equation}
\end{prop}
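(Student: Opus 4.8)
The plan is to reduce the Fubini-type statement to a finite-dimensional computation in a fixed real orthonormal basis, exactly as the single-variable Grassmann integral was handled after Proposition~\ref{prop:GrInt}. First I would fix a real orthonormal basis $(e^i)_{i\in N}$ of $QH$ with $N=\{1,\dots,2n\}$, write $v=v_N\Lambda^N$ and $w=w_N\Lambda^N$, and expand a G-holomorphic function $F$ on $\PHSP{H}{J}{Q}^2$ simultaneously in both arguments as
\begin{equation}
F(\eta,\xi)=\sum_{I\subset N}\sum_{K\subset N} F^{I,K}\ \eta_I\ \xi_K
\end{equation}
with coefficients $F^{I,K}$ in $\9E$. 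One has to check that this double expansion is consistent with the definition of G-holomorphic in several variables, i.e.\ that fixing $\xi$ and expanding in $\eta$, or vice versa, produces the same coefficients; this is immediate from the uniqueness of the highest-degree coefficient guaranteed by Proposition~\ref{prop:unique} together with the triviality of $\9E_Q$.

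Next I would compute both iterated integrals explicitly. Using the formula $\int_Q v(\xi)\,G(\xi)=v_N G^N$ for a single integral, the inner integral $\int_Q v\,F^\xi$ picks out $K=N$, giving $F^v(\xi)=\sum_I v_N F^{I,N}\eta_I$ — wait, more carefully: $F^\xi(\eta)=F(\eta,\xi)$, so $\int_Q v\,F^\xi$ should be computed treating this as a function of the dummy variable being integrated; I would be careful about which slot $v$ integrates over, matching the notation $F^v:\xi\mapsto\int_Q v\,F^\xi$ and $F_w:\eta\mapsto\int_Q w\,F_\eta$ in the statement. Tracking indices, $F^v(\xi)=\sum_K v_N F^{N,K}\xi_K$ and then $\int_Q w\,F^v=w_N v_N F^{N,N}$, while $F_w(\eta)=\sum_I w_N F^{I,N}\eta_I$ and $\int_Q v\,F_w=v_N w_N F^{N,N}$. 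Since scalar coefficients $v_N,w_N$ commute, the two agree. Along the way one also sees that $F^v$ and $F_w$ are themselves finite sums of monomials in one Grassmann variable, hence G-holomorphic, which is the first assertion of the proposition; alternatively this follows from Proposition~\ref{prop:GrInt}(3) applied to the "evaluate the inner integral" map, though the basis computation makes it transparent.

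I would prefer to phrase the argument basis-freely where possible to match the style of the preceding proofs: realize the inner integration as a right module homomorphism and invoke Proposition~\ref{prop:GrInt}, but in the end the cleanest justification of the interchange really is the symmetric role of the two highest-degree slots in the double expansion, so I expect the written proof to be essentially the index computation above. The main obstacle — really the only subtle point — is the bookkeeping needed to make the double polynomial expansion legitimate: one must know that in a fixed basis a G-holomorphic function of two variables has a well-defined coefficient $F^{I,K}\in\9E$ for each pair of ordered subsets, independent of the order in which one expands, and that the relevant integrals only see the $(N,N)$ coefficient. Both facts rest on Proposition~\ref{prop:unique} and the standing hypothesis $\9E_Q=\{0\}$; once those are in place the commutativity of the scalars $v_N$ and $w_N$ finishes it, and no genuine analysis (convergence, continuity) is needed since $Q$ has finite rank $2n$ and all sums are finite.
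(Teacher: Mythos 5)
Your proposal is correct and takes essentially the same route as the paper: your double expansion $F(\eta,\xi)=\sum_{I,K}F^{I,K}\,\eta_I\xi_K$ is just the coordinate form of the paper's bilinear map $R$ with $F(\eta,\xi)=R(\8e^{\eta},\8e^{\xi})$, and both arguments finish by showing that either order of integration extracts the same top datum, $v_N w_N F^{N,N}=R(v,w)$, with the evenness of the $2n$-forms disposing of all signs. The paper is equally brief about why such a joint representation exists, so your one-line appeal to Proposition~\ref{prop:unique} (which is really only needed to make the two integrals well defined, not to fix the lower-order coefficients, which are not unique) is no worse than the paper's own treatment.
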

\begin{proof}
Since $F$ is G-holomorphic in both variables, we obtain from the polynomial expansion that there
exists linear map $R\mathpunct:\GR{H}{J}^2\to\9E$ such that $R(a\lambda,b)=R(a,\theta(b))\lambda$
for $\lambda\in\GR{Q^\perp H}{Q^\perp J}_1$ and $R(a,b\lambda)=R(a,b)\lambda$ holds and that
fulfills the identity $F(\eta,\xi)=R(\8e^\eta,\8e^\xi)$.

For fixed $\eta$, the map $R_\eta\mathpunct:b\mapsto R(\8e^\eta,b)$ is a right module homomorphism
in $\9R_QF_\eta$. Thus we obtain for the partial integral $F_w(\eta)=\int_Q w\
F_\eta=R(\8e^\eta,w)$ which also shows that $F_w$ is G-holomorphic. Since $w$ is even, it follows
that the map $R_w\mathpunct:a\mapsto R(a,w)$ is a right module homomorphism in $\9R_Q F_w$ which
implies $\int_Q v \ F_w=R(v,w)$. By a similar argument one shows that $\int w \ F^v=R(v,w)$ which
implies the result.
\end{proof}

\subsection{Some useful lemmas for calculating Grassmann integrals}
In order to calculate particular Grassmann integrals, a further interesting fact to mention is, that
the algebra of G-holomorphic functions is related to the Grassmann algebra $\GR{H}{J}$ itself. To
explain this, we observe that the Grassmann algebra (as a Banach space) $\GR{H}{J}$ is isomorphic to
the tensor product $\AFOCK{QH}\otimes\GR{Q^\perp H}{Q^\perp J}$. The canonical isomorphism is given
by identifying
\begin{equation}
\Lambda(f_1)\cdots \Lambda(f_n)\lambda\cong f_1\wedge\cdots\wedge f_n\otimes \lambda
\end{equation}
for $f_1,\cdots,f_n\in QH$ and $\lambda\in\GR{Q^\perp H}{Q^\perp J}$. This can be used
to introduce a rigging map on $\GR{H}{J}$ with values $\GR{Q^\perp H}{Q^\perp J}$.  By using the
induced Hilbert space structure on the Grassmann algebra $\GR{QH}{QJ}\cong \AFOCK{QH}$ the rigging
man is determined by 
\begin{equation}
\RIGG{\Lambda(f_1)\cdots \Lambda(f_n)\lambda}{\Lambda(h_1)\cdots\Lambda(h_n)\mu}{Q}
=\SPROD{f_1\wedge\cdots \wedge f_n}{h_1\wedge\cdots \wedge h_n}\lambda^\ADJ\mu \; .
\end{equation}
If we expand the operators in $\GR{H}{J}$ in a real orthonormal basis $(e^i)_{i\in N}$ of $QH$,
then the rigging map can be simply calculated as
\begin{equation}
\RIGG{a}{b}{Q}=\sum_{I\subset N}a_I^\ADJ b_I
\end{equation}
where $a=\sum_{I\subset N} \Lambda^I a_I$ and $b=\sum_{I\subset N} \Lambda^Ib_I$. We
associate to each operator $a\in\GR{H}{J}$ a G-holomorphic function according to 
\begin{equation}
E_Qa(\xi):=\RIGG{a^\ADJ}{\exp(\xi)}{Q}
\end{equation}
and we show the following useful lemma:

\begin{lem}
\label{lem_app:iso}
The map $E_Q\mathpunct: \GR{H}{J}\to\RHOL{H}{J}{Q}$ is an algebra homomorphism. Moreover, for
each $a\in \GR{H}{J}$, the right module homomorphism
$R_Qa\mathpunct:b\mapsto\RIGG{a^\ADJ}{b}{Q}$ is contained in $\9R_Q E_Qa$.
\end{lem}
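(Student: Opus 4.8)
The plan is to dispose of the second assertion first --- it is mostly a matter of unwinding the definition of the rigging map --- and then to establish the homomorphism property by organising the (unavoidable) sign bookkeeping along the decomposition $\GR{H}{J}=\GR{QH}{QJ}\tilde\otimes\GR{Q^\perp H}{Q^\perp J}$ into the ``form part'' and the Grassmann part.

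For $R_Qa\in\9R_Q E_Qa$: the assignment $b\mapsto\RIGG{a^\ADJ}{b}{Q}$ is complex linear, and $\RIGG{a^\ADJ}{b\lambda}{Q}=\RIGG{a^\ADJ}{b}{Q}\lambda$ for $\lambda\in\GR{Q^\perp H}{Q^\perp J}$ is immediate from the defining formula of $\RIGG{\cdot}{\cdot}{Q}$, so $R_Qa$ is a right module homomorphism $\GR{H}{J}\to\GR{Q^\perp H}{Q^\perp J}$. Fixing a real orthonormal basis $(e^i)_{i\in N}$ of $QH$ and writing $\RIGG{a^\ADJ}{b}{Q}=\sum_{I\subset N}(a^\ADJ)_I^\ADJ\,b_I$ shows that this is a finite sum in which each summand is controlled by $\|a\|\,\|b\|$ (the component maps $c\mapsto c_I$ in $\GR{H}{J}\cong\AFOCK{QH}\otimes\GR{Q^\perp H}{Q^\perp J}$ are bounded projections and multiplication in the Grassmann part is continuous), hence $R_Qa\in\REND{H}{J}{Q}$ is bounded. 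Since $R_Qa(\8e^\xi)=\RIGG{a^\ADJ}{\8e^\xi}{Q}=E_Qa(\xi)$ holds by the very definition of $E_Qa$, and since expanding $\8e^\xi=\sum_{I\subset N}\Lambda^I\xi_I$ displays $E_Qa$ as a finite $\GR{Q^\perp H}{Q^\perp J}$-linear combination of the monomials $\xi_I$ in the components $\xi_i=\xi_{e^i}$ --- that is, as a Grassmann-valued G-holomorphic function, an element of $\RHOL{H}{J}{Q}$ --- we obtain $R_Qa\in\9R_Q E_Qa$, and en passant that $E_Q$ indeed takes values in $\RHOL{H}{J}{Q}$.

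For the homomorphism property, note first that because $Q$ has finite rank $2n$ every element of $\GR{H}{J}$ is a finite sum $\sum_{K\subset N}\Lambda^K\lambda_K$ with $\lambda_K\in\GR{Q^\perp H}{Q^\perp J}$, so by bilinearity it suffices to check $E_Q\bigl((\Lambda^K\lambda)(\Lambda^L\mu)\bigr)=E_Q(\Lambda^K\lambda)\,E_Q(\Lambda^L\mu)$ for ordered $K,L\subset N$ and homogeneous $\lambda,\mu$ of parity $p(\lambda),p(\mu)\in\{0,1\}$. A short computation, using that $(\Lambda^K)^\ADJ=\sigma(K)\Lambda^K$ with $\sigma(K):=(-1)^{|K|(|K|-1)/2}$, that the adjoint is an involution, and that the Grassmann part graded-commutes past forms, gives $E_Q(\Lambda^K\lambda)(\xi)=\sigma(K)(-1)^{p(\lambda)|K|}\lambda\,\xi_K=\bigl(\varphi(\Lambda^K)\cdot\lambda\bigr)(\xi)$, where $\varphi:=E_Q|_{\GR{QH}{QJ}}$ is given by $\varphi(\Lambda^K)(\xi)=\sigma(K)\xi_K$, the two descriptions agreeing because $\xi_K$ has parity $|K|$ and therefore graded-commutes with $\lambda$. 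The verification now splits cleanly. First, $\varphi$ is an algebra homomorphism of $\GR{QH}{QJ}$: both $\varphi(\Lambda^K)\varphi(\Lambda^L)$ and $\varphi(\Lambda^K\Lambda^L)$ vanish unless $K\cap L=\emptyset$, and for disjoint $K,L$ the identity reduces to the sign relation $\epsilon(K,L)\,\sigma(K\cup L)=\sigma(K)\sigma(L)\,\epsilon'(K,L)$ between the shuffle signs of $\Lambda^K\Lambda^L=\epsilon(K,L)\Lambda^{K\cup L}$ and of $\xi_K\xi_L=\epsilon'(K,L)\xi_{K\cup L}$, which follows from $\epsilon(K,L)\epsilon'(K,L)=(-1)^{|K||L|}$ together with $\binom{|K|+|L|}{2}=\binom{|K|}{2}+\binom{|L|}{2}+|K||L|$. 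Second, since $\varphi(a')(\xi)$ has the parity of $a'$, one has $b\,\varphi(a')(\xi)=(-1)^{p(b)p(a')}\varphi(a')(\xi)\,b$ pointwise, whence $\bigl(\varphi(a)\cdot b\bigr)\bigl(\varphi(a')\cdot b'\bigr)=(-1)^{p(b)p(a')}\varphi(aa')\cdot(bb')$; and this is exactly $E_Q$ applied to the twisted product $(a\otimes b)(a'\otimes b')=(-1)^{p(b)p(a')}(aa')\otimes(bb')$ of $\GR{H}{J}=\GR{QH}{QJ}\tilde\otimes\GR{Q^\perp H}{Q^\perp J}$. Thus $E_Q(ab)=(E_Qa)(E_Qb)$.

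The only genuine work is the sign computation in the first of these two steps; the factorisation $E_Q(\Lambda^K\lambda)=\varphi(\Lambda^K)\cdot\lambda$ combined with the twisted-product rule is precisely the device that confines it to that one line of binomial-coefficient parity. The secondary point demanding care is to fix, once and for all, the ordering and adjoint conventions for $\Lambda^K$ and $\xi_K$ (ascending versus descending products, and the conjugation in the first slot of the rigging map) as in Proposition~\ref{prop:unique}; once these are fixed the sign $\sigma(K)$, and hence everything downstream, is forced.
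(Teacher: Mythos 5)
Your proposal is correct, and both assertions are established by the same basic device as in the paper (expansion in a real orthonormal basis of $QH$ and matching of signs/parity twists; the membership $R_Qa\in\9R_QE_Qa$ is in both cases just the observation $E_Qa(\xi)=R_Qa(\8e^\xi)$), but the organization of the multiplicativity argument is genuinely different. The paper does one combined coordinate computation: it expands $a=\sum_I a^I\Lambda_I$ in ascending order and $\8e^\xi=\sum_I\Lambda^I\xi_I$ in descending order, so that $E_Qa(\xi)=\sum_I a^I\xi_I$ carries no reversal signs at all, and then verifies directly that $E_Qa(\xi)E_Qb(\xi)$ and $ab$ have the identical expansion coefficients $\epsilon_{IJ}\,a^I\theta^{|J\setminus I|}(b^{J\setminus I})$ in front of $\xi_J$ and $\Lambda_J$ respectively. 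You instead factor through $\GR{QH}{QJ}\tilde\otimes\GR{Q^\perp H}{Q^\perp J}$: multiplicativity of $\varphi=E_Q|_{\GR{QH}{QJ}}$ reduces to the scalar shuffle identity $\epsilon(K,L)\epsilon'(K,L)=(-1)^{|K||L|}$ (true, but add the one-line reason -- reversing a product of $m$ anticommuting factors costs $(-1)^{m(m-1)/2}$, which is exactly where your $\sigma(K)$'s and the binomial identity meet), and the Grassmann coefficients are transported once and for all by the graded-commutation/twisted-product rule, which is precisely the paper's $\theta^{|J\setminus I|}$ twist in disguise. What the paper's dual ascending/descending convention buys is the absence of the $\sigma(K)$ bookkeeping; what your factorization buys is that the parity twist for the coefficients appears only once, in the twisted-product rule, rather than inside the double sum, and your extra remarks on boundedness of $R_Qa$ and on why $E_Qa$ lies in $\RHOL{H}{J}{Q}$ make explicit two points the paper leaves implicit.
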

\begin{proof}
Let $(e^i)_{i\in N}$ be a real orthonormal basis of $QH$. Then we expand the operator
$a\in\GR{H}{J}$ according to $a=\sum_{I\subset N} a^I\Lambda_I$ where the product 
$\Lambda_I=\Lambda(e^{i_1})\cdots \Lambda(e^{i_k})$ is ordered by increasing
indices $I=\{i_1<i_2<\cdots <i_k$ (indicated by a subscript $I$). Moreover, the exponential
$\exp(\xi)$ has an expansion $\exp(\xi)=\sum_{I\subset N} \Lambda^I\xi_I$ where the product 
$\Lambda^I=\Lambda(e^{i_k})\cdots \Lambda(e^{i_1})$ is ordered by decreasing
indices $I=\{i_1<i_2<\cdots <i_k$ (indicated by a superscript $I$). Since $a^\ADJ=\sum_{I\subset N}
\Lambda^I(a^I)^\ADJ$ we find
\begin{equation}
E_Qa(\xi):=\sum_{I\subset N} a^I \ \xi_I \; .
\end{equation}
For the product $E_Qa(\xi)E_Qb(\xi)$ we obtain the expansion
\begin{equation}
E_Qa(\xi)E_Qb(\xi)=\sum_{I,J\subset N} a^I \ \xi_I \ b^J \ \xi_J
=
\sum_{I\subset J\subset N} \epsilon_{IJ} \ a^I\theta^{|J\setminus I|}(b^{J\setminus I})\xi_J
\; ,
\end{equation}
where $\epsilon_{IJ}$ is the sign of the premutation $(I, J\setminus I)\to J$ which emerges from
the identity $\xi_I\xi_{J\setminus I}=\epsilon_{IJ} \xi_J$. On the
other hand, the product $ab$ has the following expansion with respect to the chosen basis: 
\begin{equation}
ab=\sum_{I,J\subset N} a^I\Lambda_I b^J \Lambda_J=
\sum_{I\subset J\subset N} \epsilon_{IJ} \ a^I\theta^{|J\setminus I|}(b^{J\setminus
I})\Lambda_J
\end{equation}
which implies the homomorphism property. Finally, we directly observe that
$E_Qa(\xi)=R_Qa(\8e^\xi)$ which concludes the proof.
\end{proof}

\begin{lem}
Let $F$ be a G-holomorphic function with values in a right module $\9E$ and let $v_1,v_2$ be two
operators of highest degree with respect to $Q$. Then the identity
\begin{equation}
\label{equ:delta}
\int_Q v_1(\xi) \ \int_Q v_2(\eta) \ F(\xi)
\8e^{\RIGG{\eta^\ADJ}{\xi-\zeta}{Q}}=
\SPROD{v_1^\ADJ}{v_2} F(\zeta) \; .
\end{equation}
is valid.
\end{lem}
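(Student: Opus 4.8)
The plan is to pin everything down in a real orthonormal basis $(e^i)_{i\in N}$ of $QH$, $N=\{1,\dots,2n\}$, after first using the structural properties of the Grassmann integral to strip the statement down to a coefficient computation. Writing $\Lambda^N=\Lambda(e^{2n})\cdots\Lambda(e^1)$ one has $v_1=v_{1,N}\Lambda^N$, $v_2=v_{2,N}\Lambda^N$, and $(\Lambda^N)^\ADJ=(-1)^n\Lambda^N$ together with $\SPROD{\Lambda^N}{\Lambda^N}=1$ gives $\SPROD{v_1^\ADJ}{v_2}=(-1)^n v_{1,N}v_{2,N}$. Since both sides of (\ref{equ:delta}) are linear in $v_1$, in $v_2$ and in $F$, and since the partial $\eta$-integral is G-holomorphic in $\xi$ (Proposition~\ref{prop:MultInt}) so that the outer integral is translation invariant (Proposition~\ref{prop:GrInt}(1)), I would replace $F$ by the (again G-holomorphic) function $\xi\mapsto F(\xi+\zeta)$; this reduces the claim to the case $\zeta=0$, namely $\int_Q v_1(\xi)\int_Q v_2(\eta)\,F(\xi)\,\8e^{\RIGG{\eta^\ADJ}{\xi}{Q}}=\SPROD{v_1^\ADJ}{v_2}\,F(0)$, and by one further use of linearity to a monomial $F(\xi)=X\xi_J$ ($X\in\9E$, $J\subset N$), where $F(0)=X$ if $J=\emptyset$ and $F(0)=0$ otherwise.

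Next I would expand the kernel in the chosen basis, writing $\xi=\sum_i\Lambda^i\xi_i$, $\eta=\sum_i\Lambda^i\eta_i$ with $\xi_i,\eta_i\in\GR{Q^\perp H}{Q^\perp J}_1$. From $\Lambda(e^i)^\ADJ=\Lambda(e^i)$ and the graded-commutation of the Grassmann part one gets $(\eta^\ADJ)_i=-\eta_i^\ADJ$, hence $\RIGG{\eta^\ADJ}{\xi}{Q}=-\sum_i\eta_i\xi_i$; and because odd elements of a Grassmann algebra square to zero while products of two odd elements are even and commute with one another, this exponentiates to $\8e^{\RIGG{\eta^\ADJ}{\xi}{Q}}=\prod_{i\in N}(1-\eta_i\xi_i)=\sum_{I\subset N}(-1)^{|I|+\binom{|I|}{2}}\eta_I\xi_I$. (Alternatively this kernel is $E_Q(\8e^\eta)(\xi)$ by Lemma~\ref{lem_app:iso}.)

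I would then evaluate the inner integral straight from the definition: $\int_Q v_2(\eta)\,G(\eta)=R(v_2)=v_{2,N}\,R(\Lambda^N)$ for any $R\in\9R_Q G$, which is unambiguous because $\9E_Q=\{0\}$ under the standing assumption (e.g.\ $\9E=\GAR{H}{J}{Q}$ by Proposition~\ref{prop:reduce}), so that Proposition~\ref{prop:unique} fixes $R(\Lambda^N)$. For $G(\eta)=X\xi_J\,\8e^{\RIGG{\eta^\ADJ}{\xi}{Q}}$ only the $I=N$ term of the expansion carries a $\Lambda^N$; pushing the factor $\eta_N$ (even, and lying in the Grassmann part, hence central) to the right yields $R(\Lambda^N)=(-1)^{\binom{2n}{2}}X\xi_J\xi_N=(-1)^nX\xi_J\xi_N$, which vanishes when $J\neq\emptyset$ (a product of odd elements with a repeated factor) and equals $(-1)^nX\xi_N$ when $J=\emptyset$. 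Feeding $(-1)^n v_{2,N}X\xi_N$ into the outer integral in exactly the same way gives $v_{1,N}(-1)^n v_{2,N}X=\SPROD{v_1^\ADJ}{v_2}X$, which equals $\SPROD{v_1^\ADJ}{v_2}F(0)$ in both cases; summing over the monomial decomposition of $F$ finishes the proof.

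The step I expect to be delicate is purely the sign bookkeeping: the action of $\ADJ$ on a top form ($(\Lambda^N)^\ADJ=(-1)^n\Lambda^N$), the sign in $(\eta^\ADJ)_i$, the reordering factors $(-1)^{\binom{|I|}{2}}$ produced when separating the $\eta_i$ from the $\xi_i$, and the reduction $\binom{2n}{2}\equiv n\pmod{2}$ — together with the care required so that each one-variable Grassmann integral is well defined, which is precisely where the hypothesis $\9E_Q=\{0\}$ (Propositions~\ref{prop:reduce} and~\ref{prop:unique}) is used.
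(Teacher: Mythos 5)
Your argument is correct, but it follows a genuinely different route than the paper. You first use Propositions~\ref{prop:MultInt} and~\ref{prop:GrInt} to translate the outer variable and reduce to $\zeta=0$, and then carry out a direct coordinate computation: expand $F$ into monomials $X\xi_J$, expand the kernel as $\prod_i(1-\eta_i\xi_i)$, and read off the integral as $v_N$ times the top coefficient, which is exactly the paper's formula $\int_Q v\,F=v_NF^N$ resting on Proposition~\ref{prop:unique} and the standing hypothesis $\9E_Q=\{0\}$ (Proposition~\ref{prop:reduce}); your sign bookkeeping, including $(\Lambda^N)^\ADJ=(-1)^n\Lambda^N$ and the interleaving factor $(-1)^{\binom{|I|}{2}}$, is right, and the one sign you flag as delicate, namely $(\eta^\ADJ)_i=-\eta_i^\ADJ$, turns out to be immaterial for the only surviving $I=N$ term since $|N|=2n$ is even. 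The paper instead argues basis-free: it evaluates the inner integral as $\int_Q v_2(\xi)\,\8e^{\RIGG{\eta^\ADJ}{\xi}{Q}}=(n!)^{-1}\RIGG{v_2^\ADJ}{\eta^n}{Q}=:\delta_{v_2}(\eta)$ using the rigging-map identities and the homomorphism $E_Q$ of Lemma~\ref{lem_app:iso}, checks the normalization $\int_Q v_1\,\delta_{v_2}=\SPROD{v_1^\ADJ}{v_2}$, and then uses $\xi\,\delta_{v_2}(\xi)=0$ together with translation invariance and the right-module property of $R\in\9R_QF$ to replace $F(\xi)\delta_{v_2}(\xi)$ by $F(0)\delta_{v_2}(\xi)$; in other words it constructs a reusable Grassmann $\delta$-function and lets it do the work. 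Your version is more elementary and self-contained but tied to a basis and to careful combinatorics; the paper's avoids all sign tracking and isolates $\delta_v$ as a structural object. The only step where you are terser than you should be is the identification of $R(\Lambda^N)$ with the top coefficient of your chosen polynomial representation of the inner integrand --- strictly this requires exhibiting some $R\in\9R_Q$ of that function, but since this is precisely the content of the paper's statement $\int_Q v\,F=v_NF^N$, a citation closes the gap.
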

\begin{proof}
We take advantage of the fact that the bilinear form $(a,b)\mapsto\RIGG{a^\ADJ}{b}{Q}$ fulfills the
identity $\RIGG{a^\ADJ}{b}{Q}=\RIGG{b^\ADJ}{\theta_Q(a)}{Q}$ on the even subalgebra. Here
$\theta_Q$ is the automorphism $\theta_Q(\Lambda(f))=\Lambda((\I-2Q)f)$. From this we calculate the
following Grassmann integral: Let $v$ a form of highest degree with respect to a
projection $Q$ of even rank $n$. Then we calculate
\begin{equation}
\int_Q v(\xi) \8e^{\RIGG{\eta^\ADJ}{\xi}{Q}}=\RIGG{\8e^{\eta^\ADJ}}{v}{Q}
=\frac{1}{n!}\RIGG{v^\ADJ}{\eta^n}{Q}
\end{equation}
Furthermore, the value of the rigging map $\RIGG{a}{b}{Q}=\SPROD{a}{b}\I$ is a multiple of the
identity for all operators $a,b$ that belong to the subalgebra $\GR{QH}{QJ}$. As a consequence we
can calculate the double integral
\begin{equation}
\int_Q v_1(\eta) \ \int_Qv_2(\xi) \8e^{\RIGG{\eta^\ADJ}{\xi}{Q}}=
\int_Q v_1(\eta)\ \RIGG{v_2^\ADJ}{\8e^\eta}{Q}=
\SPROD{v_1^\ADJ}{v_2}\I \; .
\end{equation}
Finally we introduce the function $\delta_{v}$ that is given by
$\delta_{v}(\eta):=(n!)^{-1}\RIGG{v^\ADJ}{\eta^n}{Q}$ with $n=\8{dim}(Q)$. As we will see,
the G-holomorphic function $\delta_{v}$ plays the role of a $\delta$-function for Grassmann
integrals for the ``volume form'' $v$. To verify this, we observe 
that $\eta\delta_{v}(\eta)=0$ holds and we calculate for a G-holomorphic function $F$
the multiple integral
\begin{equation}
\int_Q v_1(\xi) \ \left[ \int_Q v_2(\eta) \ F(\xi)
\8e^{\RIGG{\eta^\ADJ}{\xi-\zeta}{Q}}\right] = \int_Q v_1(\xi) \ 
F(\xi)\delta_{v_2}(\xi-\zeta) \; .
\end{equation}
By taking advantage of the translation invariance of the Grassmann integral, it follows that 
\begin{equation}
\int_Q v_1(\xi) \ \left[ \int_Q v_2(\eta) \ F(\xi)
\8e^{\RIGG{\eta^\ADJ}{\xi-\zeta}{Q}}\right]=\int_Q v_1(\xi) \
\tau_\zeta F(\xi)\delta_{v_2}(\xi)
\end{equation}
holds. Since $\xi\delta_{v_2}(\xi)=0$, we find for each $R\in\9R_QF$
\begin{equation}
F(\xi)\delta_{v_2}(\xi)=R[\8e^\xi\delta_{v_2}(\xi)]=R(\I)=F(0) \; .
\end{equation}
Putting all these together implies the $\delta$-function formula:
\begin{equation}
\int_Q v_1(\xi) \ \left[ \int_Q v_2(\eta) \ F(\xi)
\8e^{\RIGG{\eta^\ADJ}{\xi-\zeta}{Q}}\right]
=
\tau_\zeta F(0) \ \int_Q v_1(\xi) \ \delta_{v_2}(\xi)
=
\SPROD{v_1^\ADJ}{v_2} F(\zeta) \; .
\end{equation}
\end{proof}

\subsection{Calculating Gaussian integrals}
\label{subsec:calc-gauss}
Let $A$ be a bounded operator on $QH$, then there exists an operator $a\in \GR{QH}{QJ}$ such that 
\begin{equation}
\label{equ:quadratic}
\RIGG{\xi^\ADJ}{A\xi}{Q}=2\RIGG{a^\ADJ}{\exp(\xi)}{Q}
\end{equation}
where $a$ only depends on the anti-symmetric part $(A-JA^*J)/2$ of $A$. In particular, $a$ has
degree $2$ with respect to $Q$. To verify this, we recall
that the map $f\otimes h\mapsto \SPROD{Jf}{Ah}$ is a bilinear form on $QH$. Since $Q$  has finite
rank, it follows that
there exists a unique vector $\psi_A\in QH^{\otimes 2}$ such that $\SPROD{(J\otimes
J)\psi_A}{f\otimes h}=\SPROD{Jf}{Ah}$ holds. Moreover, we have
$\SPROD{Jh}{Af}=\SPROD{Jf}{JA^*Jh}=\SPROD{(J\otimes J)\7F\psi_A}{f\otimes h}$ where $\7F$ is the
flip operator that swaps the tensor product. The antisymmetric vector $a=(\7F\psi_A-\psi_A)/2$
can be identified with an operator in $\GR{H}{J}$ and we find 
\begin{equation}
\SPROD{a^\ADJ}{f\wedge h}=\frac{1}{2}\SPROD{Jf}{(JA^*J-A)h}
\end{equation}
According to the definition of the rigging map, we obtain for $\lambda,\mu\in\GR{Q^\perp
H}{Q^\perp J}_1$ and for $f,h\in QH$:
\begin{equation}
\RIGG{a^\ADJ}{\Lambda(f)\lambda
\Lambda(h)\mu}{Q}=\frac{1}{2}\SPROD{Jf}{(A-JA^*J)h}\lambda\mu
\end{equation}
which implies the identity (\ref{equ:quadratic}). In the following, we can therfore restrict the
consideration to operators on $QH$ that fulfill the condition $A=-JA^*J$. 

Since the operator $a$ has degree $2$ the operator $(n!)^{-1}a^n$ with $2n=\8{dim}(Q)$ is of
highest degree and therefore proportional to any other operator of highest degree. We
choose a selfadjoint operator of highest degree $v=v^\ADJ$ that
is normalized $\SPROD{v}{v}=1$. Then the identity 
\begin{equation}
(n!)^{-1}a^n=(n!)^{-1}\SPROD{v}{a^n}v 
\end{equation}
holds. If $a$ is related to an operator $A$ on $QB$ according to
$\SPROD{Jf}{Ah}=\SPROD{a^\ADJ}{f\wedge
h}$, then the scalar product $(n!)^{-1}\SPROD{v}{a^n}$ is the Pfaffian of $A$ with respect to
$v$and $J$ (see Subsection~\ref{subsec:main-results}):
\begin{equation}
\8{Pf}_{[v,J]}(A)=(n!)^{-1}\SPROD{v}{a^n} \; .
\end{equation}
This can now be applied calculate gaussian Grassmann integrals easily as stated by the next lemma:

\begin{lem}
\label{lem_app:gauss_integral}
Let $A$ be an anti-symmetric operator on $QH$, i.e. $A=-JA^*J$, and let $v$ be a selfadjoint
normalized form of highest degree with respect to $Q$. Then the Gaussian integral identity 
\begin{equation}
\int_Q v(\xi) \
\8e^{\frac{1}{2}\RIGG{\xi^\ADJ}{A\xi}{Q}}=\8{Pf}_{[v,J]}(A)\I
\end{equation}
holds.
\end{lem}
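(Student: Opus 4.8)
The plan is to reduce the Gaussian integral to an algebraic identity about the exponential series and then identify the top-degree coefficient with the Pfaffian. First I would use the identity (\ref{equ:quadratic}), namely $\RIGG{\xi^\ADJ}{A\xi}{Q}=2\RIGG{a^\ADJ}{\exp(\xi)}{Q}$, where $a\in\GR{QH}{QJ}$ is the two-form associated with the $J$-antisymmetric operator $A$. Since $\RIGG{a^\ADJ}{\cdot}{Q}$ applied to $\exp(\xi)$ is exactly the function $E_Q(a)$ built in Lemma~\ref{lem_app:iso} (up to the factor $1/2$, because $a$ has degree two and only the degree-two part of $\exp(\xi)$ contributes), I get $\tfrac12\RIGG{\xi^\ADJ}{A\xi}{Q}=E_Q(a)(\xi)$. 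By Lemma~\ref{lem_app:iso} the map $E_Q$ is an algebra homomorphism from $\GR{H}{J}$ into $\RHOL{H}{J}{Q}$, so $\exp\bigl(E_Q(a)(\xi)\bigr)=E_Q(\exp(a))(\xi)$ as G-holomorphic functions, where $\exp(a)=\sum_{k\ge 0}a^k/k!$ is computed inside the Grassmann algebra.

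Next I would compute the integral. By Lemma~\ref{lem_app:iso} the right module homomorphism $R_Q(\exp a)\colon b\mapsto\RIGG{(\exp a)^\ADJ}{b}{Q}$ lies in $\9R_Q E_Q(\exp a)$, so by the very definition of the Grassmann integral,
\begin{equation}
\int_Q v(\xi)\ \8e^{\frac12\RIGG{\xi^\ADJ}{A\xi}{Q}}=R_Q(\exp a)(v)=\RIGG{(\exp a)^\ADJ}{v}{Q}.
\end{equation}
Now $v$ has highest degree $2n=\8{dim}(Q)$, and the rigging map $\RIGG{\cdot}{\cdot}{Q}$ on $\GR{QH}{QJ}$ is just the Fock inner product times $\I$, so only the top-degree component of $\exp a$ survives the pairing: $\RIGG{(\exp a)^\ADJ}{v}{Q}=\tfrac{1}{n!}\SPROD{a^n}{v}\I$. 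Since $a$ is a real (two-form, hence $a^\ADJ$ a two-form as well), one checks $\overline{\SPROD{a^n}{v}}=\SPROD{v}{a^n}$ using $v=v^\ADJ$; combined with the fact that $a$ is $J$-real in the appropriate sense, $\SPROD{a^n}{v}=\SPROD{v}{a^n}$, and by the definition recalled just before the lemma, $\tfrac{1}{n!}\SPROD{v}{a^n}=\8{Pf}_{[v,J]}(A)$. Putting these together gives $\int_Q v(\xi)\ \8e^{\frac12\RIGG{\xi^\ADJ}{A\xi}{Q}}=\8{Pf}_{[v,J]}(A)\I$, as claimed.

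The main obstacle I expect is the careful bookkeeping around the factor of $\tfrac12$ and the degree count: one must verify that $\tfrac12\RIGG{\xi^\ADJ}{A\xi}{Q}$ equals $E_Q(a)(\xi)$ \emph{exactly} as a G-holomorphic function and not merely up to lower-order terms, which uses that $a$ has pure degree two with respect to $Q$ and that the even part of $\exp(\xi)$ pairs correctly through $\RIGG{\cdot}{\cdot}{Q}$; this is essentially the content of the identity (\ref{equ:quadratic}) together with the observation that in the expansion $\exp(\xi)=\sum_I \Lambda^I\xi_I$ the pairing $\RIGG{a^\ADJ}{\Lambda^I\xi_I}{Q}$ vanishes unless $\Lambda^I$ has degree two. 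A secondary subtlety is the reality check $\SPROD{a^n}{v}=\SPROD{v}{a^n}$ ensuring the answer is the (real) Pfaffian and not its complex conjugate; this follows from $v=v^\ADJ$ and the antisymmetry/reality properties of $a$ established in Subsection~\ref{subsec:calc-gauss}. Everything else is a direct application of Lemma~\ref{lem_app:iso}, the definition of the Grassmann integral via $\9R_Q$, and the definition of $\8{Pf}_{[v,J]}$.
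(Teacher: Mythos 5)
Your route is the paper's own: rewrite the Gaussian via (\ref{equ:quadratic}) as $\8e^{\RIGG{a^\ADJ}{\8e^\xi}{Q}}=\exp\bigl(E_Qa(\xi)\bigr)$, use the homomorphism property of $E_Q$ from Lemma~\ref{lem_app:iso} to see that the integrand is $E_Q(\8e^a)$, evaluate the integral through the right module homomorphism $R_Q(\8e^a)\in\9R_QE_Q(\8e^a)$ as $\RIGG{\8e^{a^\ADJ}}{v}{Q}$, and extract the top-degree coefficient as the Pfaffian. Up to the final pairing this is exactly the argument in the paper and is sound.

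The last step, however, is justified incorrectly. Since the rigging map is conjugate-linear in its first slot, the top-degree contribution to $\RIGG{(\8e^{a})^\ADJ}{v}{Q}=\SPROD{(\8e^a)^\ADJ}{v}\I$ is $(n!)^{-1}\SPROD{(a^\ADJ)^n}{v}\I$, not $(n!)^{-1}\SPROD{a^n}{v}\I$, and your proposed repair --- that $a$ is ``$J$-real'' so that $\SPROD{a^n}{v}=\SPROD{v}{a^n}$ --- fails in general: the condition $A=-JA^*J$ imposes no reality on $a$, and $\8{Pf}_{[v,J]}(A)=(n!)^{-1}\SPROD{v}{a^n}$ is in general a genuinely complex number. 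For instance, on $QH=\7C^2$ with componentwise conjugation $J$, every matrix $A=\bigl(\begin{smallmatrix}0&z\\-z&0\end{smallmatrix}\bigr)$, $z\in\7C$, satisfies $A=-JA^*J$, and its Pfaffian is proportional to $z$; for such $A$ your chain of equalities would deliver $\overline{\8{Pf}_{[v,J]}(A)}$ instead of $\8{Pf}_{[v,J]}(A)$. The correct move, which is what the paper does and which needs no reality of $a$ at all, is to use that $\ADJ$ is an antiunitary involution for the Fock scalar product together with $v=v^\ADJ$: $\SPROD{\8e^{a^\ADJ}}{v}=\SPROD{v^\ADJ}{\8e^{a}}=\SPROD{v}{\8e^{a}}=(n!)^{-1}\SPROD{v}{a^n}=\8{Pf}_{[v,J]}(A)$. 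With that one-line correction your argument coincides with the paper's proof.
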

\begin{proof}
Let $a\in\GR{QH}{QJ}$ be the operator of degree $2$ that fulfills the identity
$\SPROD{a^\ADJ}{f\wedge h}=\SPROD{Jf}{Ah}$. Then we calculate the Gaussian integral with help of
Lemma~\ref{lem_app:iso} according to:
\begin{equation}
\int_Q v(\xi) \
\8e^{\frac{1}{2}\RIGG{\xi^\ADJ}{A\xi}{Q}}
=
\int_Qv(\xi) \ \8e^{\RIGG{a^\ADJ}{\8e^\xi}{Q}}=\int_Q v(\xi)\
\RIGG{\8e^{a^\ADJ}}{\8e^\xi}{Q}=\RIGG{\8e^{a^\ADJ}}{v}{Q} \; .
\end{equation} 
Since both operators $\8e^a$ and $v=v^\ADJ$ are even and contained in $\GR{QH}{QJ}$, we obtain for
the rigging map $\RIGG{\8e^{a^\ADJ}}{v}{Q}=\SPROD{\8e^{a^\ADJ}}{v}\I=\SPROD{v}{\8e^a}\I$. By
expanding the exponential $\8e^a$ only the contribution to the operator of highest degree
contribute to the scalar product. Therefore we obtain 
\begin{equation}
\SPROD{v}{\8e^a}=(n!)^{-1}\SPROD{v}{a^n} =\8{Pf}_v(a)=\8{Pf}_{[v,J]}(A)
\end{equation}
which proves the lemma.
\end{proof}

The identity (\ref{equ:gaussian}) can now be shown by using the translation
invariance of the Grassmann integral. If $A=-JA^*J$ is invertible, we find 
\begin{equation}
\RIGG{(\xi-A^{-1}\eta)^\ADJ}{A(\xi-A^{-1}\eta}{Q}
=\RIGG{\xi^\ADJ}{A\xi}{Q}-\RIGG{\eta^\ADJ}{A^{-1}\eta}{Q}+2\RIGG{\eta^\ADJ}{\xi}{Q}
\end{equation}
where we have used the fact that $(A^{-1}\eta)^\ADJ=JA^{-1}J\eta^\ADJ$ holds. As a result we
obtain from the previous lemma
of highest degree with respect to $Q$. Then the Gaussian integral identity 
\begin{equation}
\int_Q v(\xi) \
\8e^{\frac{1}{2}\RIGG{\xi^\ADJ}{A\xi}{Q}+\RIGG{\eta^\ADJ}{\xi}{Q}}=\8{Pf}_{[v,J]}(A)\8e^{\frac{1}{
2 }
\RIGG{\eta^\ADJ}{A^{-1}\eta}{Q}} \; .
\end{equation}
In the case where $A$ does not fulfill the antisymmetry condition  $A=-JA^*J$ the Gaussian integral
formula is still valid by substituting $A$ on the right hand side by the antisymmetrized operator
$(A-JA^*J)/2$. 


\clearpage

\begin{appendix}

\section{Proof of Theorem~\ref{thm:integral-rep}}
\begin{proof}
In the first step, we show that each operator $A$ in the GAR algebra can be represented by a
Grassmann integral 
\begin{equation}
\label{eq:pre-decomp}
A=\int_Q v(\xi) \ {\1w}(\xi) \ f(\xi) 
\end{equation}
with a G-holomorphic function $f$ depending on $v$ and $A$. Let $(e^i)_{i\in N}$ be a real
orthonormal basis of $QH$ with $N=\{1,2,\cdots,\8{dim}(Q)\}$. Then
the Grassmann-Weyl operator can be expanded by 
\begin{equation}
\1w(\xi)=\sum_{I\subset N} B^I\xi_I
\end{equation}
where we have introduced the operators $B^I=G(e^{i_k})\cdots G(e^{i_1})$ for each ordered
subset $I=\{i_1<i_2<\cdots<i_k\}$. For each subset $K\subset N$ we introduce the G-holomorphic
function $b^K$ by $b^K(\xi):=\epsilon_{KN}\xi_{N\setminus J}$ where $\epsilon_{KN}$ is
determined by
the condition $\epsilon_{KN}\xi_K\xi_{X\setminus N}=\xi_N$. Then we conclude from the
polynomial expansion of the reduced Grassmann-Weyl operator $\1w(\xi)$ that
the identity
\begin{equation}
\begin{split}
\1w(\xi)b^K(\xi)=\sum_{I\subset N}  B^I \ \xi_I\xi_{N\setminus K}
=\sum_{I\subset K}\epsilon_{IKN}\epsilon_{KN} \ B^I \ \xi_{N\setminus (K\setminus I)} 
\end{split}
\end{equation}
holds with $\xi_I\xi_{N\setminus K}=\epsilon_{IKN}\xi_{N\setminus (K\setminus I)}$. Note that for
$I\not\subset K$ we have $\xi_I\xi_{N\setminus K}=0$. This implies for a form $v$ of highest
degree
\begin{equation}
\begin{split}
\int_Q v(\xi) \ {\1w}(\xi) \ b^{J}(\xi)
=
\int_Q v(\xi)  \ \sum_{I\subset K} \epsilon_{IKN}\epsilon_{KN} \ B^I \ \xi_{N\setminus
(K\setminus I)}=v_N \ B^K
\end{split}
\end{equation}
where we have used the identity $\epsilon_{KKN}=\epsilon_{KN}$. The operators $B^K$
form a basis of the fermionic part of the reduced GAR algebra. Thus a general operator can by
expanded as
$A=\sum_{I\subset N} B^I A_I$ with $A_I$ belonging to the Grassmann part. Thus the
G-holomorphic function
\begin{equation}
 f(\xi)=\sum_{I\subset N} v_N^{-1}\epsilon_{IN} \ \xi_{N\setminus I} \ A_I
\end{equation}
solves the identity (\ref{eq:pre-decomp}). 

We insert now this identity into the Fourier transform and obtain 
\begin{equation}
(\FOUR A)(\xi) = {\1w}(-\xi) \ \int_Q v(\eta) \int_Q v(\zeta) \ {\1w}(-\eta)
{\1w}(\zeta) \
f(\zeta) {\1w}(\eta) \exp(\RIGG{\eta^\ADJ}{\xi}{Q}) \; .
\end{equation}
By taking advantage of the fact that the Weyl operators belong to the even part of the GAR algebra,
we obtain from the Weyl relations
\begin{equation}
(\FOUR A)(\xi) = {\1w}(-\xi) \ \int_Q v(\eta) \int_Q v(\zeta) \ {\1w}(\zeta) \
f(\zeta)  \exp(\RIGG{\eta^\ADJ}{\xi-\zeta}{Q}) \; .
\end{equation}
Since $Q$ is even, the order of integration can be exchanged. Therefore, we get from the
$\delta$-function formula (\ref{equ:delta}) 
\begin{equation}
(\FOUR A)(\xi) = {\1w}(-\xi) \ \int_Q v(\zeta)  \ {\1w}(\zeta) \
f(\zeta) \int_Q v(\eta) \  \exp(\RIGG{\eta^\ADJ}{\xi-\zeta}{Q}) 
=\SPROD{v^\ADJ}{v} \ f(\xi)  \; .
\end{equation}
By choosing $v$ to be normalized and selfadjoint, implies the result.
\end{proof}

\section{Proof of Theorem~\ref{thm:convol-four}}
\begin{proof}
We first consider the convolution of two G-holomorphic functions $f,f'$ which is given by 
\begin{equation}
(f\CON f')(\xi)=\int_Q v(\eta) \ f(\eta) \ f'(\xi-\eta) \; .
\end{equation}
Taking the Fourier transform yields
\begin{equation}
\FOUR (f\CON f')(\xi)=\int_Q v(\zeta) \int_Q v(\eta) \ f(\eta) \ f'(\zeta-\eta) \
\8e^{\RIGG{\xi^\ADJ}{\zeta}{Q}} \; .
\end{equation}
We make use of the translation invariance of the Grassmann integral (Proposition~\ref{prop:GrInt})
and the fact that the order of integration can be exchanged (Proposition~\ref{prop:MultInt}) which
implies 
\begin{equation}
\FOUR (f\CON f')(\xi)
=
\int_Q v(\eta) f(\eta) \8e^{\RIGG{\xi^\ADJ}{\eta}{Q}}  \ \int_Q v(\zeta) \ \ f'(\zeta) \
\8e^{\RIGG{\xi^\ADJ}{\zeta}{Q}}
=
\FOUR f(\xi)\FOUR f'(\xi) \; . 
\end{equation}

Let $\bS\varphi$ be a bounded right modul homomorphism from the GAR algebra into its Grassmann
part and let $A$ be an operator of the reduced Grassmann algebra. The the convolution of
$\bS\varphi$ and
$A$ is just given by $(\bS\varphi\CON A)(\xi)={\bS\varphi}(\alpha_\xi A)$. Therefore, the
Fourier
transform of this convolution is 
\begin{equation}
\FOUR(\bS\varphi\CON A)(\xi)
=
\int_Q v(\eta) \ {\bS\varphi}(\alpha_{-\eta} A)\ \8e^{\RIGG{\xi^\ADJ}{\eta}{Q}}
=
{\bS\varphi}\left(\int_Q v(\eta) \ \alpha_{-\eta} A\ \8e^{\RIGG{\xi^\ADJ}{\eta}{Q}}\right)
 \; .
\end{equation}
By Proposition~\ref{prop:GrInt}, the integration and the application by the right module
homomorphism can be exchanged. Keeping the definition of the Fourier transform of a GAR operator in
mind, we obtain: 
\begin{equation}
\FOUR(\bS\varphi\CON A)(\xi)={\bS\varphi}({\1w}(\xi)\FOUR A(\xi))
={\bS\varphi}({\1w}(\xi))\FOUR A(\xi)=\FOUR\bS\varphi(\xi)\FOUR A(\xi) \; .
\end{equation}
In the last step, we ahve used the property that $\FOUR A(\xi)$ is contained in the reduced
Grassmann part and that $\bS\varphi$ is right module homomrphism.

Finally, we consider the convolution of an operator $A$ in the reduced GAR algbera with a
G-holomorphic function $f$. The
Fourier transform is given by
\begin{equation}
\begin{split}
\FOUR(A\CON f)(\xi)
&=
\int_Q v(\eta) \alpha_{-\eta}\left(\int_Q v(\zeta) \ \alpha_\zeta A \
f(\zeta)\right)\8e^{\RIGG{\xi^\ADJ}{\eta}{Q}}
\\
&=
\int_Q v(\eta) \int_Q v(\zeta) \ \alpha_{\zeta-\eta} A \
f(\zeta) \ \8e^{\RIGG{\eta^\ADJ}{\xi}{Q}} \; .
\end{split}
\end{equation}
By using the translation invariance as well as the exchange rule for the order of integration, the
identity $\FOUR(A\CON f)=\FOUR A \FOUR f$ follows.
\end{proof}

\section{Proof of Theorem~\ref{thm:g-extension}}
\begin{proof}
To start with, we first look at the left hand side of the equation (\ref{eq:char})
\begin{equation}
\FOUR{\boldsymbol\omega}_S(\xi)=\8e^{-\frac{1}{2}\RIGG{\xi^\ADJ}{S\xi}{Q}}
\end{equation}
that we have to show. For a real basis $(e^i)_{i\in N}$ of $QH$, the Grassmann-Weyl operator can be
expanded as a polynomial in the Grassmann variable $\xi$. The Fourier transform (characteristic
function) of the extended quasifree state $\bS\omega_S$ can be calculated by 
\begin{equation}
\FOUR\bS\omega_S(\xi)=\sum_{K\subset N} \omega_S(B^K) \ \xi_K \; ,
\end{equation}
where $B^K$ and $\xi_K$ are defined as within the proof of Theorem~\ref{thm:integral-rep} above.
We obtain a polynomial expansion with complex valued coefficients, given by the quasifree
expectation values $\omega_S(B^K)$. These expectation
values can be calculated by Wick's theorem
according to 
\begin{equation}
\omega_S(B^K)=\sum_{\Pi\in P_2(K)} \epsilon_{\Pi K} \prod_{I\in\Pi}\omega_S(B^I)
\end{equation}
where $P_2(K)$ is the set of all ordered partitions of $K$ into two-elementary subsets and
$\epsilon_{\Pi K}$ is the sign of the permutation $(I_1,\cdots,I_{k})\to K$ with
$\Pi=(I_1,\cdots,I_k)$. This yields for the full polynomial expansion:
\begin{equation}
{\bS\omega}_S({\1w}(\xi))=\sum_{K\subset N} \sum_{\Pi\in P_2(K)} \epsilon_{\Pi K}
\prod_{I\in\Pi}\omega_S(B^I)\ \xi_K \; .
\end{equation}
On the other hand, there exists an operator $a_S$ in $\GR{QH}{QJ}$ of degree 2 which is
determined by the condition $2\SPROD{a_S^\ADJ}{f\wedge h}=\SPROD{Jf}{Sh}$, $f,h\in QH$, and which
satisfies the identity
\begin{equation}
\RIGG{\8e^{a_S^\ADJ}}{\8e^\xi}{Q}=\8e^{-\frac{1}{2}\RIGG{\xi^\ADJ}{S\xi}{Q}} \; .
\end{equation}
If we expand the exponential of $a_S$ with respect to a real basis of $QH$, then we get, according
to the calculations we have done in the proof of Lemma~\ref{lem_app:iso}: 
\begin{equation}
\8e^{a_S}=\sum_{K\subset N} \sum_{\Pi\subset P_2(K)} \epsilon_{\Pi K} \prod_{I\in\Pi}a_S^I
\Lambda_K \; .
\end{equation}
This implies by using the expansion of the (reduced) rigging map: 
\begin{equation}
\8e^{-\frac{1}{2}\RIGG{\xi^\ADJ}{S\xi}{Q}}=\sum_{K\subset N} \sum_{\Pi\subset P_2(K)} \epsilon_{\Pi
K}
\prod_{I\in\Pi}a_S^I \  \xi_K \; .
\end{equation}
By construction, the identity $a_S^{\{i<j\}}=\SPROD{e^i}{Se^j}=\omega_S(G(e^i)G(e^j))$ holds for
all two-elementary ordered subsets $I=\{i<j\}$ whih implies (\ref{eq:char}).

Let $P$ be a basis projection on $QH$ and let $v$ be a selfadjoint normalized form of highest
degree in $\GR{QH}{QJ}$. Moreover, $E_P$ denotes the support projection of the
pure quasifree state $\omega_P$. We now have to show the identity
\begin{equation}
\FOUR_v({\bS\omega_P}\CON E_P)(\xi)=\8e^{-\RIGG{\xi^\ADJ}{P\xi}{Q}} \; .
\end{equation}
According to the definition of the convolution and the Fourier transform, we have to calculate the
Grassmann integral
\begin{equation}
\label{equ:proof-1}
\FOUR_v({\bS\omega_P}\CON E_P)(\xi)
=\int_Q v(\eta)
{\bS\omega}_P({\1w}(-\eta)E_P{\1w}(\eta))\8e^{\RIGG{\xi^\ADJ}{\eta}{Q}} \; .
\end{equation}
Since $\omega_P$ is pure, the suppost projection $E_P$ has rank one and the identity
$\omega_P(AE_PB)=\omega_P(A)\omega_P(B)$ holds for $A,B$ in the fermionic part. This property is
lifted to the G-extension, i.e. ${\bS\omega}_P(AE_PB)={\bS\omega}_P(A){\bS\omega}_P(B)$. Since the
equivalence class mapping $\EQ{\cdot}{Q}$ is a *-algebra homomorphism, we obtain for the lift
${\bS\omega}_P$ to the reduced GAR algebra that
${\bS\omega}_P(AE_PB)={\bS\omega}_P(A){\bS\omega}_P(B)$ holds for all operators $A,B$
in the reduced GAR algebra. To verify
this we choose operators of the form $A\lambda$ and $B\mu$, where $A,B$ belong to the fermionic
part and $\lambda\in\GR{Q^\perp H}{Q^\perp J}_q$, $\mu$ belong to the Grassmann part
of the GAR algebra. Then we calculate
\begin{equation}
\begin{split}
{\bS\omega}_P(A\lambda E_P B\mu)&={\bS\omega}_P(AE_P\theta^q(B)\lambda\mu)\\
&=\omega_P(AE_P\theta^q(B))\lambda\mu
\\
&=\omega_P(A)\omega_P(B)\lambda\mu
\\
&={\bS\omega}_P(A\lambda){\bS\omega}_P(B\mu) \; .
\end{split}
\end{equation}
By using the identity (\ref{eq:char}), which we just have proven above, we get 
\begin{equation}
{\bS\omega}_P({\1w}(-\eta)E_P{\1w}(\eta))={\bS\omega_P}(\1w(\eta))^2=\8e^{-\RIGG{
\eta^\ADJ } { P\eta } { Q } }
\; .
\end{equation}
By inserting this into Equation (\ref{equ:proof-1}), it remains to calculate the Gaussian integral
with help of Lemma~\ref{lem_app:gauss_integral} and the disussion thereafter in
Subsection~\ref{subsec:calc-gauss}:
\begin{equation}
\FOUR_v({\bS\omega_P}\CON E_P)(\xi)=\int_Q
v(\eta)
\8e^{-\RIGG{\eta^\ADJ}{P\eta}{Q}+\RIGG{\xi^\ADJ}{\eta}{Q}}
=\8{Pf}_{[v,J]}\left(JPJ-P\right)\8e^{\frac{1}{2}\RIGG{\xi^\ADJ}{(JPJ-P)^{-1}\xi}{Q}} \; .
\end{equation}
Since $P$ is a basis projection, the operator $JPJ-P=\I-2P$ is a reflection and we have
$\8{Pf}_{[v,J]}(\I-2P)=\pm 1$. Moreover, we conclude
$\RIGG{\xi^\ADJ}{(JPJ-P)^{-1}\xi}{Q}=\RIGG{\xi^\ADJ}{(\I-2P)\xi}{Q}=-2\RIGG{\xi^\ADJ}{P\xi}{Q}$. 
This yields the desired result.
\end{proof}

\end{appendix}

\end{document}